\def\BState{\State\hskip-\ALG@thistlm}
\newcommand{\E}{\mathbb{E}}
\newcommand{\tru}[0]{\mathbf{T}}
\newcommand{\full}[0]{\text{permutation}}
\newcommand{\info}[0]{\text{informative}}
\newcommand{\iNfo}[0]{\text{Informative}}
\newcommand{\truthful}[0]{\text{truthful}}
\newcommand{\Truthful}[0]{\text{Truthful}}
\newcommand{\focal}[0]{\text{focal}}
\newcommand{\Focal}[0]{\text{Focal}}
\newcommand{\imm}[0]{\text{Mutual Information Paradigm}}
\newcommand{\IMM}[0]{\text{MIP}}
\newtheorem*{theorem*}{Theorem}
\newtheorem{theorem}{Theorem}[section]
\newtheorem{lemma}[theorem]{Lemma}
\newtheorem{fact}[theorem]{Fact}
\newtheorem{claim}[theorem]{Claim}
\newtheorem{assumption}[theorem]{Assumption}
\newtheorem{corollary}[theorem]{Corollary}
\newtheorem{example}[theorem]{Example}
\newtheorem{proposition}[theorem]{Proposition}
\newtheorem{definition}[theorem]{Definition}
\newtheorem{observation}[theorem]{Observation}
\DeclareMathOperator*{\argmax}{arg\,max}
\newcommand{\vecc}{\mathrm{vec}}
\newcommand\gs[1]{{}}
\newcommand\yk[1]{{}}
\begin{document}
\title{An Information Theoretic Framework For Designing Information Elicitation Mechanisms That Reward Truth-telling}  
\author{Yuqing Kong\\ University of Michigan \and Grant Schoenebeck\\ University of Michigan}
\date{}

\begin{titlepage}
\clearpage
  \maketitle
\thispagestyle{empty}

\begin{abstract}
    In the setting where information cannot be verified, we propose a simple yet powerful information theoretical framework---the Mutual Information Paradigm---for information elicitation mechanisms.  Our framework pays every agent a measure of mutual information between her signal and a peer's signal. We require that the mutual information measurement has the key property that any ``data processing'' on the two random variables will decrease the mutual information between them. We identify such information measures that generalize Shannon mutual information.  

Our Mutual Information Paradigm overcomes the two main challenges in information elicitation without verification: (1) how to incentivize effort and avoid agents colluding  to report random or identical responses  (2) how to motivate agents who believe they are in the minority to report truthfully.

Aided by the information measures we found,  (1) we use the paradigm to design a family of novel mechanisms where truth-telling is a dominant strategy and any other strategy will decrease every agent's expected payment  (in the multi-question, detail free, minimal setting where the number of questions is large); (2) we show the versatility of our framework by providing a unified theoretical understanding of existing mechanisms---Peer Prediction [Miller 2005], Bayesian Truth Serum [Prelec 2004], and Dasgupta and Ghosh [2013]---by mapping them into our framework such that theoretical results of those existing mechanisms can be reconstructed easily.

We also give an impossibility result which illustrates, in a certain sense, the the optimality of our framework.

\end{abstract}

\end{titlepage}

\newpage

\section{Introduction}\label{sec:intro}
User feedback requests (e.g Ebay's reputation system and the innumerable survey requests in one's email inbox) are increasingly prominent and important. However, the overwhelming number of requests can lead to low participation rates, which in turn may yield unrepresentative samples. To encourage participation, a system can reward people for answering requests. But this may cause perverse incentives: some people may answer a large of number of questions simply for the reward and without making any attempt to answer accurately. In this case, the reviews the system obtains may be inaccurate and meaningless. Moreover, people may be motivated to lie when they face a potential loss of privacy or can benefit in the future by lying now.

It is thus important to develop systems that motivate honesty. If we can verify the information people provide in the future (e.g prediction markets), we can motivate honesty via this future verification. However, sometimes we need to elicit information without verification since the objective truth is hard to access or even does not exist (e.g. a self-report survey for involvement in crime). In our paper, we focus on the situation where the objective truth is not observable.

Two main challenges in information elicitation without verification area are: 
\begin{enumerate}[noitemsep,topsep=4pt,parsep=2pt,partopsep=4pt]
\item[(1)] How to incentivize effort and avoid colluding agents who report random or identical responses; and
\item[(2)] How to motivate agents who believe they are in the minority to report truthfully.
\end{enumerate}

The typical solution desiderata to the above challenges are: 
\begin{description}[noitemsep,topsep=4pt,parsep=2pt,partopsep=4pt]
\item{\textbf{\emph{(strictly) truthful}}:} truth-telling is a (strict) Bayesian Nash equilibrium.
\item{\textbf{\emph{focal}}:}  the truth-telling equilibrium is paid more than other equilibria in expectation.
\end{description}

 We additionally value  mechanisms that are:
\begin{description}[noitemsep,topsep=4pt,parsep=2pt,partopsep=4pt]
\item{\textbf{\emph{detail free}}:}  require no knowledge of the prior; 
\item{\textbf{\emph{minimal}}:} only require agents to report their information rather than forecasts for other agents' reports; and 
\item{\textbf{\emph{dominantly truthful}}:} truth-telling maximizes the expected payment regardless of the other agents' strategies.  
\end{description}


Two main settings are considered in the information elicitation without verification literature. In the \textbf{\emph{single-question}} setting each agent is asked a single question (e.g. have you ever texted while driving before?) and is assumed to have a \emph{common prior} which means agents who receive the same signal (Y/N) have the same belief for the world. \citet{MRZ05} and \citet{prelec2004bayesian} are two seminal works in this setting.  Another is the \textbf{\emph{multi-question}} setting in which each agent is asked a batch of \emph{apriori similar} questions (e.g.,\ peer grading, or is there a cat in this picture?). \citet{dasgupta2013crowdsourced} is the foundational work in this setting.  Many work~\cite{prelec2004bayesian,witkowski2012robust,witkowski:2014,radanovic2013robust,radanovic2014incentives,zhang2014elicitability,riley2014minimum,faltings2014incentives} have successfully designed truthful and detail free mechanisms.  The design of focal and detail free mechanisms seems to be more complicated.  However, \citet{prelec2004bayesian} is truthful, detail free, and \emph{focal}; and \citet{dasgupta2013crowdsourced} is strictly truthful, detail free, \emph{focal}, and \emph{minimal}.   While these  two prior works have successfully designed focal and detail free mechanisms, their results are typically proved by clever algebraic computations and sometimes lack a deeper intuition, and also fail to extend to important settings.


\subsection{Our Contributions} 

The main contribution of the current paper is to provide a simple yet powerful information theoretic paradigm---the \textbf{\emph{Mutual Information Paradigm}} (Section~\ref{sec:framework})---for designing information elicitation mechanisms that are truthful, focal, and, detail-free.  Moreover, some of the mechanisms based on our paradigm are additionally minimal and dominantly truthful. To the best of our knowledge, ours is the first dominantly truthful information elicitation mechanism without verification.


\paragraph{High Level Techniques and Insights:}
Our framework provides a key insight by distilling the essence of building information elicitation mechanisms to information theory properties.  If we measure ``information'' correctly, any non-truthful strategy will decrease the amount of ``information'' there exists between signals. Therefore, the mechanism should reward every agent according to the amount of information she provides.  That is, the information elicitation mechanism should be ``information-monotone''.

To design ``information-monotone'' mechanisms, we find two families of ``(weakly) information-montone'' information measures---$f$-mutual information and Bregman mutual information---both of which generalize the Shannon mutual information. 

Information theory is not typically used in the information elicitation literature, a key \emph{novelty} of our work is showing how the insights of information theory illuminates the work and challenges in the information elicitation field. 

\paragraph{Applications of Framework:} 

Aided by the Mutual Information Paradigm, 
\begin{enumerate}
    \item In the \textbf{multi-question} setting we exhibit two families of novel mechanisms that are  \emph{dominantly truthful}, detail free, \emph{and} minimal when the number of questions is large---the $f$-mutual information mechanism and the Bregman mutual information mechanism (Section~\ref{sec:md}). Moreover, we show that in the $f$-mutual information mechanism, when any truthful agent changes to play a non-truthful strategy, no matter what strategies other agents play, it decreases every agent's expected payment.  This property implies the mechanism is both dominantly truthful and focal.  We note that dominately truthful is already a stronger equilibrium concept than Bayesian Nash equilibrium which is typically used in the information elicitation mechanisms with no verification literature.
    

    
    We also use a specific $f$-mutual information mechanism to map \citet{dasgupta2013crowdsourced} into our framework with an extra assumption they make and easily reconstruct their main results (Section~\ref{sec:reprovemd}) which apply to the binary choice case and do not require a large number of questions, but also do not have truth-telling as a dominant strategy (only a Bayesian Nash equilibrium).

	\item In the \textbf{single-question} setting, we show that the log scoring rule is an unbiased estimator of Shannon mutual information (Section~\ref{sec:bregman}). Aided by this observation, we map \citet{MRZ05} (Section~\ref{sec:reprovepp}) and \citet{prelec2004bayesian} (Section~\ref{sec:bts}) into our information theoretic framework and allow the easy reconstruction of the theoretical results of \citet{prelec2004bayesian} (Section~\ref{sec:reprovebts}) which is a very important work in the information elicitation literature. We believe our framework highlights important insights of \citet{prelec2004bayesian} which can be extended to other settings.

\end{enumerate}

We also extend our framework to the setting where agents need to exert a certain amount of effort to receive the private information (Section~\ref{sec:effort}) and prove truth-telling is paid \emph{strictly} better than any other non-$\full$ strategy (defined later) in certain settings. Finally, we give impossibility results (Section~\ref{sec:impossibility}) which imply that no truthful detail-free mechanism can pay truth-telling strictly better than the $\full$ strategy in certain settings. This illustrates, in a certain sense, the \emph{optimality} of our framework.

\vspace{5pt}

\subsection{Related Work}
Since \citet{MRZ05} introduced peer prediction, several works follow the peer prediction framework and design information elicitation mechanisms without verification in different settings. In this section, we introduce these work by classifying them into the below four categories:

\emph{(1) Multi-question, Detail Free, Minimal Setting:}
\citet{dasgupta2013crowdsourced} consider a setting where agents are asked to answer multiple a priori similar binary choice questions. They propose a mechanism $M_d$ that pays each agent the correlation between her answer and her peer's answer, and show each agent obtains the highest payment if everyone tells the truth. In retrospect, one can see that our techniques are a recasting and generalization of those of~\citet{dasgupta2013crowdsourced}.
\citet{kamble2015truth} considers both homogeneous and heterogeneous populations and design a mechanism such that truth-telling pays higher than non-informative equilibria in the presence of a large number of a priori similar questions. However, they leave the analysis of other non-truthful equilibria as a open question. \citet{Agarwal:2017:PPH:3033274.3085127} consider a peer prediction mechanism for heterogeneous users.

\yk{reviewer:The remark about [Kamble et. al 2015] is not clear: ‘however their mechanism contains non-truthful equilibria that are paid greater than truth-telling’. They seem to conjecture differently (at least for symmetric equilibria). Previous version: their mechanism may contain non-truthful equilibria that are paid strictly greater than truth-telling}

\emph{(2) Single-question, Detail free, Common Prior, Non-minimal Setting:} ~\citet{prelec2004bayesian} proposes Bayesian Truth Serum (BTS) and the signal-prediction framework for the setting that agents are asked to answer only one question and the mechanism does not know the common prior. \citet{prelec2004bayesian} shows when the number of agents is infinite, the case everyone tells the truth is both an equilibrium and that the total payments agents receive in expectation is at least as high as in any other equilibria. \emph{Logarithmic Peer Truth Serum (PTS)}~\cite{radanovic2015incentive} extends BTS to a slightly different setting involving sensors, but still requires a large number of agents.


The biggest limitation of \citet{prelec2004bayesian} is that the number of agents is assumed to be infinite even to make truth-telling an equilibrium. \citet{witkowski2012robust,witkowski:2014,radanovic2013robust,radanovic2014incentives,zhang2014elicitability,riley2014minimum,faltings2014incentives} successfully weaken this assumption. However, all of these works lack the analysis of non-truthful equilibria. In constrast, the disagreement mechanism designed by \citet{kong2016equilibrium} is truthful and pays truth-telling strictly better than any other symmetric equilibrium if the number of agents is greater than 6.



\emph{(3) Single-question, Known Common Prior, Minimal Setting:}
~\citet{jurca2007collusion,jurcafaltings09} use algorithmic mechanism design to build their own peer prediction style mechanism where truth-telling is paid strictly better than non-truthful \textit{pure} strategies but leave the analysis of mixed strategies as a open question. \citet{frongillo2016geometric} consider the design for robust, truthful and minimal peer prediction mechanisms with the prior knowledge and lack the analysis of non-truthful equilibria. ~\citet{2016arXiv160307319K} modify the peer prediction mechanism such that truth-telling is paid strictly better than any other non-truthful equilibrium. Additionally, they optimize the difference between the truth-telling equilibrium and the next best paying informative equilibrium. However, the mechanism still needs to know the prior and the analysis only works for the case of binary signals.

\yk{reviewer: What does ‘Additionally they optimize the cost their mechanisms needs over a natural space’? Previous version: Additionally, they optimize the cost that their mechanism needs over a natural space.
}



\emph{(4) Other models:}
\citet{Liu:2017:MAP:3033274.3085126} design peer prediction mechanism in the machine learning setting. \citet{mandal2016peer} consider peer prediction mechanism for heterogeneous tasks.




\subsection{Independent Work}\label{sec:independentwork}
Like the current paper\footnote{\citet{2016arXiv160501021K} is the Arxiv version of the current paper without Sections~\ref{sec:effort},~\ref{sec:bregman}.}, \citet{2016arXiv160303151S} also extends \citet{dasgupta2013crowdsourced}'s binary signals mechanism to multiple signals setting.
However, the two works differ both in the specific mechanism and the technical tools employed.

\citet{2016arXiv160303151S} analyze how many questions are needed (whereas we simply assume infinitely many questions).  Like our paper, they also analyze to what extent truth-telling can pay strictly more than other equilibria. Additionally, they show their mechanism does not need a large number of questions when ``the signal correlation structure'' is known (that is the pair-wise correlation between the answers of two questions).  While the current paper does not state such results, we note that the techniques employed are sufficiently powerful to immediately extend to this interesting special case (Appendix~\ref{sec:iw})---when the signal structure is known, it is possible to construct an unbiased estimator for $f$-mutual information of the distribution, when the total variation distance is used to define the $f$-mutual information.
Both papers also show their results generalize~\citet{dasgupta2013crowdsourced}'s.

Moreover, when the number of questions is large, $f$-mutual information mechanism has truth-telling as a dominant strategy while \citet{2016arXiv160303151S} do not.

\label{sec:relatedwork}

\section{Preliminaries}\label{sec:prelim}
\paragraph{General Setting} We introduce the general setting $(n,\Sigma)$ of the mechanism design framework where $n$ is the number of agents and $\Sigma$ is the set of possible private information. Each agent $i$ receives a random private information / signal $\Psi_i:\Omega\mapsto\Sigma$ where $\Omega$ is the underlying sample space. She also has a prior for other agents' private information. 

\yk{Reviewer: add definition of Omega}

Formally, each agent $i$ believes the agents' private information is chosen from a joint distribution $Q_i$ before she receives her private information. Thus, from agent $i$'s perspective, before she receives any private information, the probability that agent $1$ receives $\Psi_1=\sigma_1$, agent 2 receives $\Psi_2=\sigma_2$, ..., agent $n$ receives $\Psi_n=\sigma_n$ is $Q_i(\Psi_1=\sigma_1,\Psi_2=\sigma_2,...,\Psi_n=\sigma_n)$. After she receives her private information based on her prior, agent $i$ will also update her knowledge to a posterior distribution which is the prior conditioned on her private information. Without assuming a common prior, agents may have different priors, that is, $Q_i$ may not equal $Q_j$. We define $\Delta_{\Sigma}$ as the set of all possible probability distributions over $\Sigma$. 


In some situations (e.g. restaurant reviews), agents do not need any effort to receive the private signals when they answer the questions, while in other situations (e.g. peer reviews), agents need to invest a certain amount of effort to receive the private signals. In the current section and Section~\ref{sec:framework}, we assume that agents do not need to invest effort to receive the private signals. Section~\ref{sec:effort} analyzes non-zero effort situations. 





\subsection{Basic Game Theory Concepts}

\begin{definition}[Mechanism]
We define a mechanism $\mathcal{M}$ for a setting $(n,\Sigma)$ as a tuple $\mathcal{M}:=(\mathcal{R},M)$ where $\mathcal{R}$ is a set of all possible reports the mechanism allows, and $M:\mathcal{R}^n\mapsto \mathbb{R}^n$ is a mapping from all agents' reports to each agent's reward.
\end{definition}

The mechanism requires agents to submit a report $r$. For example, $r$ can simply be an agent's private information. In this case, $\mathcal{R}=\Sigma$. We call this kind of mechanism a \emph{minimal} mechanism. We define $\mathbf{r}$ to be a report profile $(r_1,r_2,...,r_n)$ where $r_i$ is agent $i$'s report.

Typically, the strategy of each agent should be a mapping from her received knowledge including her prior and her private signal, to a probability distribution over her report space $\mathcal{R}$. But since all agents' priors are fixed during the time when they play the mechanism, without loss of generality, we omit the prior in the definition of strategy.

\yk{Reviewer: Definition 2.2: Is it ’strategy of M’ or strategy of an agent? Solution: change it already}

\begin{definition}[Strategy]
Given a mechanism $\mathcal{M}$, we define the strategy of each agent in the mechanism $\mathcal{M}$ for setting $(n,\Sigma)$ as a mapping $s$ from $\sigma$ (private signal) to a probability distribution over $\mathcal{R}$.
\end{definition}

We define a strategy profile $\mathbf{s}$ as a profile of all agents' strategies $(s_1,s_2,...,s_n)$ and we say agents play $\mathbf{s}$ if for all $i$, agent $i$ plays strategy $s_i$.

Note that actually the definition of a strategy profile only depends on the setting and the definition of all possible reports $\mathcal{R}$. We will need the definition of a mechanism when we define an equilibrium.

A {\em Bayesian Nash equilibrium} consists of a strategy profile $\mathbf{s} = (s_1, \ldots, s_n)$ such that no agent wishes to change her strategy since other strategy will decrease her expected payment, given the strategies of the other agents and the information contained in her prior and her signal.

\begin{definition}[Agent Welfare]\label{def:agentwelfare}Given a mechanism $\mathcal{M}$, for a strategy profile $\mathbf{s}$, we define the agent welfare of $\mathbf{s}$ as the sum of expected payments to agents when they play $\mathbf{s}$ under $\mathcal{M}$. \end{definition}

\yk{Reviewer: Transition probability: I suppose n does not refer here to the number of agents n? One could add an explanation about the role of theta matrix. Solution: I change n, m to m, m' and add the last sentence.}

\paragraph{Transition Probability} We define a $m\times m'$ transition matrix $M\in \mathbb{R}^{m\times m'}$ as a matrix such that for any $i,j\in [m]\times [m']$, $M_{i,j}\geq 0$ and $\sum_j M_{i,j}=1.$ We define a \emph{permutation transition matrix} $\pi$ as a $m\times m$ permutation matrix.

Given a random variable $X$ with $m$ possible outcomes, by abusing notation a little bit, a $m\times m'$ transition matrix $M$ defines a \textbf{transition probability} $M$ that transforms $X$ to $M(X)$ such that $X':=M(X)$ is a new random variable that has $m'$ possible outcomes where $\Pr[X'=j|X=i]=M_{i,j}$.

If the distribution of $X$ is represented by an $m\times 1$ column vector $\mathbf{p}$, then the distribution over $M(X)$ is $M^T\mathbf{p}$ where $M^T$ is the transpose of $M$.

We can use transition matrices to represent agents' strategies of reporting their private information. Given the general setting $(n,\Sigma)$, for the minimal mechanisms, fixing the priors of the agents, each agent $i$'s strategy $s_i$ can be seen as a transition matrix that transforms her private information $\Psi_i$ to her reported information $\hat{\Psi}_i=s_i(\Psi_i)$. We define \textbf{truth-telling} $\tru$ as the strategy where an agent truthfully reports her private signal. $\tru$ corresponds to an identity transition matrix. 

We say agent $i$ plays a $\full$ strategy if $s_i$ corresponds to a permutation transition matrix. An example is that an agent relabels / permutes the signals and reports the permuted version (e.g. she reports ``good'' when her private signal is ``bad'' and reports ``bad'' when her private signal is ``good''). Note that $\tru$ \footnote{The above definitions of $\tru$ and the permutation strategy are sufficient to analyze the framework. When considering more general settings, we will provide generalized definitions of $\tru$ and the $\full$ strategy.} is a $\full$ strategy as well. We call the strategy profile where all agents play a $\full$ strategy a \emph{$\full$ strategy profile}. Note that in a $\full$ strategy profile, agents may play different $\full$ strategies. When a $\full$ strategy profile is a Bayesian Nash equilibrium, we call such equilibrium a \emph{$\full$ equilibrium}.

\subsection{Mechanism Design Goals}



We hope our mechanisms can be strictly truthful, focal, and even dominantly truthful (see informal definitions in Section~\ref{sec:intro} and formal definitions will be introduced later). Here we propose two additional, stronger equilibrium goals. A mechanism $\mathcal{M}$ is strongly $\focal$ if the truth-telling strategy profile maximizes \emph{every} agent's expected payment among all strategy profiles, while in the focal mechanism, truth-telling maximizes the agent welfare---the sum of agents' expected payment. A mechanism $\mathcal{M}$ is truth-monotone if when any truthful agent changes to play a non-truthful strategy $s$, no matter what strategies other agents play, it decreases every agent's expected payment. Note that the truth-monotone property is stronger than the strongly focal or focal property and it says any non-truthful behavior of any agent will hurt everyone. In addition to the above equilibrium goals, we also hope the mechanism can be minimal and detail free (see definitions in Section~\ref{sec:intro}). 

For the strictness guarantee, it turns out no truthful detail free mechanism can make truth-telling strategy profile be strictly better than any permutation strategy profile. Therefore, the best we can hope is making the truth-telling strategy profile be strictly better than any other non-permutation strategy profile. We give the formal definitions for the equilibrium goals with the strictness guarantee in the below paragraph.

\paragraph{Mechanism Design Goals}\begin{description}[noitemsep,topsep=4pt,parsep=2pt,partopsep=4pt]
\item[(Strictly) $\Truthful$] A mechanism $\mathcal{M}$ is (strictly) truthful if for every agent, $\tru$ (uniquely) maximizes her expected payment given that everyone else plays $\tru$.


\item[(Strictly) Dominantly $\truthful$] A mechanism $\mathcal{M}$ is dominantly truthful if for every agent, $\tru$ maximizes her expected payment no matter what strategies other agents play. A mechanism $\mathcal{M}$ is strictly dominantly truthful if for every agent, if she believes at least one other agent will tell the truth, playing $\tru$ pays her strictly higher than playing  a non-$\full$ strategy. 

\item[(Strictly) $\Focal$] A mechanism $\mathcal{M}$ is (strictly) $\focal$ if the truth-telling equilibrium maximizes the agent welfare among all equilibria (and any other non-$\full$ equilibrium has strictly less agent welfare). 

\item[(Strictly) Strongly $\focal$] A mechanism $\mathcal{M}$ is (strictly) strongly $\focal$ if the truth-telling strategy profile maximizes \emph{every} agent's expected payment among all strategy profiles (and in any other non-$\full$ strategy profile, every agent's expected payment is strictly less). 

\item[(Strictly) Truth-monotone] A mechanism $\mathcal{M}$ is (strictly) truth-monotone if when any truthful agent changes to play a non-truthful strategy $s$, no matter what strategies other agents play, it decreases every agent's expected payment (and strictly decreases every other truthful agent's expected payment if $s$ is a non-$\full$ strategy). 

\end{description}



Section~\ref{sec:impossibility} will show that it is impossible to ask the truth-telling strategy profile to be strictly better than other $\full$ strategy profiles when the mechanism is detail free. Thus, the strictly truth-monotone is the optimal property for equilibrium selection when the mechanism is detail free. 


\subsection{Mechanism Design Tool}\label{sec:tools}

\paragraph{$f$-divergence~\cite{ali1966general,csiszar2004information}} 
$f$-divergence $D_f:\Delta_{\Sigma}\times \Delta_{\Sigma}\rightarrow \mathbb{R}$ is a non-symmetric measure of the difference between distribution $\mathbf{p}\in \Delta_{\Sigma} $ and distribution $\mathbf{q}\in \Delta_{\Sigma} $ 
and is defined to be $$D_f(\mathbf{p},\mathbf{q})=\sum_{\sigma\in \Sigma}
\mathbf{p}(\sigma)f\left( \frac{\mathbf{q}(\sigma)}{\mathbf{p}(\sigma)}\right)$$
where $f(\cdot)$ is a convex function and $f(1)=0$.
Now we introduce the properties of $f$-divergence:



\begin{fact}[Non-negativity~\cite{csiszar2004information}]
For any $\mathbf{p},\mathbf{q} $, $D_f(\mathbf{p},\mathbf{q})\geq 0$ and $D_f(\mathbf{p},\mathbf{q})=0$ if and only if $\mathbf{p}=\mathbf{q}$.
\end{fact}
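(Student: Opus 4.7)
The plan is to reduce the statement to Jensen's inequality applied to the convex function $f$. Rewrite the divergence as an expectation under $\mathbf{p}$:
\[
D_f(\mathbf{p},\mathbf{q}) \;=\; \sum_{\sigma \in \Sigma} \mathbf{p}(\sigma)\, f\!\left(\frac{\mathbf{q}(\sigma)}{\mathbf{p}(\sigma)}\right) \;=\; \E_{\sigma \sim \mathbf{p}}\!\left[f\!\left(\frac{\mathbf{q}(\sigma)}{\mathbf{p}(\sigma)}\right)\right].
\]
Since $f$ is convex, Jensen's inequality lets us move $f$ outside the expectation:
\[
\E_{\sigma \sim \mathbf{p}}\!\left[f\!\left(\frac{\mathbf{q}(\sigma)}{\mathbf{p}(\sigma)}\right)\right] \;\geq\; f\!\left(\E_{\sigma \sim \mathbf{p}}\!\left[\frac{\mathbf{q}(\sigma)}{\mathbf{p}(\sigma)}\right]\right).
\]
The inner expectation computes to $\sum_{\sigma} \mathbf{p}(\sigma) \cdot \tfrac{\mathbf{q}(\sigma)}{\mathbf{p}(\sigma)} = \sum_{\sigma} \mathbf{q}(\sigma) = 1$, so the lower bound is $f(1) = 0$. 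This gives $D_f(\mathbf{p},\mathbf{q}) \geq 0$.

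For the equality direction ($D_f(\mathbf{p},\mathbf{q}) = 0 \iff \mathbf{p}=\mathbf{q}$), the ``$\Leftarrow$'' direction is immediate since each summand becomes $\mathbf{p}(\sigma)\,f(1) = 0$. For the ``$\Rightarrow$'' direction I would invoke strict convexity of $f$ at $1$ (a standard convention for $f$-divergences that the paper presumably adopts, else the uniqueness statement fails in trivial edge cases like $f \equiv 0$). Under strict convexity at $1$, Jensen's inequality is tight only when the random variable $\mathbf{q}(\sigma)/\mathbf{p}(\sigma)$ is almost surely constant under $\mathbf{p}$, i.e.\ constant on the support of $\mathbf{p}$. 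Since $\mathbf{q}$ and $\mathbf{p}$ both sum to $1$, that constant must equal $1$, which forces $\mathbf{p}(\sigma) = \mathbf{q}(\sigma)$ for every $\sigma$ in the support of $\mathbf{p}$.

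The one technical wrinkle, which I expect to be the main obstacle in making this fully rigorous, is the treatment of $\sigma$ with $\mathbf{p}(\sigma)=0$. The standard convention is $0 \cdot f(q/0) := q \cdot \lim_{t \to \infty} f(t)/t$, which is typically $+\infty$ whenever $q > 0$. Under this convention, $D_f(\mathbf{p},\mathbf{q}) = 0$ forces the support of $\mathbf{q}$ to be contained in that of $\mathbf{p}$, so the previous argument then closes the equivalence. I would spell out the convention once at the start of the proof, apply Jensen on the restricted sum over $\{\sigma : \mathbf{p}(\sigma) > 0\}$, and conclude as above.
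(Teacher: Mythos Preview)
The paper does not actually prove this fact; it merely states it with a citation to \cite{csiszar2004information}. Your Jensen's-inequality argument is the standard proof and is correct. Your observation that the ``only if'' direction requires strict convexity (at least at $1$) is apt: the paper's definition of $f$-divergence only asks that $f$ be convex with $f(1)=0$, under which the equality characterization can fail (e.g.\ $f\equiv 0$); elsewhere in the paper the authors do impose strict convexity when they need strictness (Facts~\ref{lem:im} and Theorem~\ref{lem:imdpi}), so this is a minor imprecision in the statement rather than a gap in your argument.
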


\begin{fact}[Joint Convexity~\cite{csiszar2004information}]\label{fact:jointconvexity}
For any $0\leq\lambda\leq 1$, for any $\mathbf{p_1},\mathbf{p_2},\mathbf{q}_1,\mathbf{q}_2\in \Delta_{\Sigma}$, $$D_f(\lambda \mathbf{p_1}+(1-\lambda)\mathbf{p_2},\lambda \mathbf{q_1}+(1-\lambda)\mathbf{q_2})\leq \lambda D_f(\mathbf{p_1},\mathbf{q_1})+(1-\lambda)D_f(\mathbf{p_2},\mathbf{q_2}).$$
\end{fact}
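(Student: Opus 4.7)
The plan is to establish joint convexity of $D_f$ by first proving joint convexity of the perspective function $g(x,y) := y\, f(x/y)$ on $(x,y) \in \mathbb{R}_{\geq 0} \times \mathbb{R}_{> 0}$, then summing coordinate-wise. This is the standard route and leverages only convexity of $f$, which is exactly the hypothesis.

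First I would isolate the key pointwise inequality. Fix a single coordinate $\sigma$ and abbreviate $p_i := \mathbf{p}_i(\sigma)$, $q_i := \mathbf{q}_i(\sigma)$ for $i = 1, 2$. Set $\bar{p} := \lambda p_1 + (1-\lambda) p_2$ and $\bar{q} := \lambda q_1 + (1-\lambda) q_2$. Assuming $p_1, p_2 > 0$, observe that the ratio can be written as a convex combination:
\[
\frac{\bar{q}}{\bar{p}} \;=\; \frac{\lambda p_1}{\bar{p}} \cdot \frac{q_1}{p_1} \;+\; \frac{(1-\lambda) p_2}{\bar{p}} \cdot \frac{q_2}{p_2},
\]
since the two weights $\lambda p_1/\bar{p}$ and $(1-\lambda)p_2/\bar{p}$ are nonnegative and sum to one. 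Applying convexity of $f$ to the right-hand side and then multiplying both sides by $\bar{p}$ yields
\[
\bar{p}\, f\!\left(\frac{\bar{q}}{\bar{p}}\right) \;\leq\; \lambda\, p_1\, f\!\left(\frac{q_1}{p_1}\right) \;+\; (1-\lambda)\, p_2\, f\!\left(\frac{q_2}{p_2}\right).
\]
This is precisely joint convexity of the perspective $g(q,p) = p\, f(q/p)$ at the coordinate $\sigma$.

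Then I would sum the pointwise inequality over $\sigma \in \Sigma$. The left-hand side becomes $D_f(\lambda \mathbf{p}_1 + (1-\lambda)\mathbf{p}_2,\, \lambda \mathbf{q}_1 + (1-\lambda) \mathbf{q}_2)$ by the definition given in the excerpt, and the right-hand side becomes $\lambda D_f(\mathbf{p}_1,\mathbf{q}_1) + (1-\lambda) D_f(\mathbf{p}_2,\mathbf{q}_2)$, giving the desired inequality.

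The main obstacle is the boundary case where some $p_i(\sigma) = 0$ or $\bar{p}(\sigma) = 0$, since the ratio $q/p$ is undefined there. I would handle this by the standard convention $0 \cdot f(0/0) := 0$ and $0 \cdot f(a/0) := a \cdot \lim_{t \to \infty} f(t)/t$ (the recession function of $f$, which exists in $(-\infty,\infty]$ by convexity). Under these conventions, one checks the pointwise inequality separately in the cases (i) $\bar{p} = 0$ (forcing $p_1 = p_2 = 0$, so both sides handle gracefully), and (ii) exactly one of $p_1, p_2$ vanishes, where the coefficient $\lambda p_i/\bar{p}$ in the convex combination collapses and the inequality reduces to a limit of the interior case. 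These are routine but necessary; the interior argument above is the substantive content.
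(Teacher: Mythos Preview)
Your argument is correct and is the standard perspective-function route to joint convexity of $f$-divergences. Note, however, that the paper does not actually supply a proof of this fact: it is stated with a citation to \cite{csiszar2004information} and used as a black box (only Fact~\ref{lem:im}, information monotonicity, is proved in the appendix). So there is no paper proof to compare against; your writeup simply fills in what the paper leaves to the reference, and does so in the expected way.
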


\begin{fact}[Information Monotonicity (\cite{ali1966general,liese2006divergences,amari2010information})]\label{lem:im}
For any strictly convex function $f$,  $f$-divergence $D_f(\mathbf{p},\mathbf{q})$ 
satisfies information monotonicity so that for any transition matrix $\theta \in \mathbbm{R}^{|\Sigma| \times |\Sigma|}$, $D_f(\mathbf{p},\mathbf{q})\geq D_f(\theta^T \mathbf{p},\theta^T \mathbf{q})$. 

Moreover, the inequality is strict if and only if there exists $\sigma, \sigma',\sigma''$ such that $\frac{\mathbf{p}(\sigma'')}{\mathbf{p}(\sigma')}\neq \frac{\mathbf{q}(\sigma'')}{\mathbf{q}(\sigma')}$ and $\theta_{\sigma',\sigma} \mathbf{p}(\sigma')>0$, $\theta_{\sigma'',\sigma} \mathbf{p}(\sigma'')>0$.
\end{fact}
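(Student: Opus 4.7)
The plan is to derive the inequality from Jensen's inequality applied to the convex function $f$, which is the standard route to information monotonicity of $f$-divergence. Writing $\mathbf{p}':=\theta^T\mathbf{p}$ and $\mathbf{q}':=\theta^T\mathbf{q}$, I would fix a $\sigma$ with $\mathbf{p}'(\sigma)>0$ and introduce the weights $w_{\sigma,\sigma'}:=\theta_{\sigma',\sigma}\mathbf{p}(\sigma')/\mathbf{p}'(\sigma)$; these are nonnegative and sum to one in $\sigma'$, so they form a probability distribution, and the key identity
\[
\frac{\mathbf{q}'(\sigma)}{\mathbf{p}'(\sigma)} \;=\; \sum_{\sigma'} w_{\sigma,\sigma'}\,\frac{\mathbf{q}(\sigma')}{\mathbf{p}(\sigma')}
\]
expresses the post-processed ratio as a convex combination of the original ratios. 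Applying Jensen to $f$, multiplying by $\mathbf{p}'(\sigma)$, summing over $\sigma$, and swapping the order of summation using the row-stochastic property $\sum_\sigma\theta_{\sigma',\sigma}=1$ should collapse the upper bound to $D_f(\mathbf{p},\mathbf{q})$, giving the inequality.

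For the strictness clause, I would invoke the fact that Jensen's inequality for strictly convex $f$ is strict at a fixed $\sigma$ precisely when the ratios $\mathbf{q}(\sigma')/\mathbf{p}(\sigma')$ over the $\sigma'$ with positive weight $w_{\sigma,\sigma'}$ are not all equal. Hence the overall inequality is strict if and only if there exist $\sigma,\sigma',\sigma''$ with $\theta_{\sigma',\sigma}\mathbf{p}(\sigma')>0$ and $\theta_{\sigma'',\sigma}\mathbf{p}(\sigma'')>0$ but $\mathbf{q}(\sigma')/\mathbf{p}(\sigma')\neq\mathbf{q}(\sigma'')/\mathbf{p}(\sigma'')$. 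Cross-multiplying, which is valid because both $\mathbf{p}$-entries are positive by hypothesis, rewrites this condition as $\mathbf{p}(\sigma'')/\mathbf{p}(\sigma')\neq\mathbf{q}(\sigma'')/\mathbf{q}(\sigma')$, matching the form in the statement.

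The main obstacle I anticipate is careful bookkeeping of zero entries, rather than any deep inequality. The convention $0\cdot f(0/0)=0$ lets one discard all $\sigma$ with $\mathbf{p}'(\sigma)=0$ from the analysis; entries with $\mathbf{p}(\sigma')=0$ automatically yield $w_{\sigma,\sigma'}=0$ and drop out of the convex combination, which is exactly why the strictness criterion needs both $\theta_{\sigma',\sigma}\mathbf{p}(\sigma')$ and $\theta_{\sigma'',\sigma}\mathbf{p}(\sigma'')$ to be strictly positive. Once these boundary cases are managed, both parts of the claim reduce to a single Jensen step per column of $\theta$ together with the observation about when strict convexity forces strict inequality.
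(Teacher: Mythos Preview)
Your proposal is correct and is essentially the same argument the paper gives: rewrite $(\theta^T\mathbf{q})(\sigma)/(\theta^T\mathbf{p})(\sigma)$ as the convex combination $\sum_{\sigma'}\frac{\theta_{\sigma',\sigma}\mathbf{p}(\sigma')}{(\theta^T\mathbf{p})(\sigma)}\cdot\frac{\mathbf{q}(\sigma')}{\mathbf{p}(\sigma')}$, apply Jensen to $f$, and collapse the double sum using $\sum_\sigma\theta_{\sigma',\sigma}=1$; the strictness clause in both cases comes from the equality condition in Jensen for strictly convex $f$. Your treatment of zero entries is in fact slightly more explicit than the paper's.
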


The proof is in the appendix for reference.

\begin{definition}\label{def:distinguish}
Given two signals $\sigma',\sigma''\in\Sigma$, we say two probability measures $\mathbf{p},\mathbf{q}$ over $\Sigma$ can \textbf{distinguish} $\sigma',\sigma''\in\Sigma$ if $\mathbf{p}(\sigma')>0$, $\mathbf{p}(\sigma'')>0$ and  $\frac{\mathbf{q}(\sigma')}{\mathbf{p}(\sigma')}\neq \frac{\mathbf{q}(\sigma'')}{\mathbf{p}(\sigma'')}$
\end{definition}

Fact~\ref{lem:im} directly implies

\begin{corollary}\label{cor:strictim}
Given a transition matrix $\theta$ and two probability measures $\mathbf{p},\mathbf{q}$ that can distinguish $\sigma',\sigma''\in\Sigma$, if there exists $\sigma\in \Sigma$ such that $\theta(\sigma',\sigma),\theta(\sigma'',\sigma)>0$, we have $D_f(\mathbf{p},\mathbf{q})> D_f(\theta^T \mathbf{p},\theta^T \mathbf{q})$ when $f$ is a strictly convex function.
\end{corollary}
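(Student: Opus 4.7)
The plan is to observe that Corollary~\ref{cor:strictim} is a direct restatement of the strict-inequality clause of Fact~\ref{lem:im} under the hypotheses packaged into Definition~\ref{def:distinguish}. Concretely, Fact~\ref{lem:im} already asserts $D_f(\mathbf{p},\mathbf{q})\geq D_f(\theta^T\mathbf{p},\theta^T\mathbf{q})$ with equality characterized by the failure of the existence of a triple $(\sigma,\sigma',\sigma'')$ satisfying $\frac{\mathbf{p}(\sigma'')}{\mathbf{p}(\sigma')}\neq \frac{\mathbf{q}(\sigma'')}{\mathbf{q}(\sigma')}$ together with $\theta_{\sigma',\sigma}\mathbf{p}(\sigma')>0$ and $\theta_{\sigma'',\sigma}\mathbf{p}(\sigma'')>0$. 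So the entire task is to exhibit such a triple from the hypotheses of the corollary.

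First, I take $\sigma',\sigma''$ to be the two signals distinguished by $\mathbf{p},\mathbf{q}$ in the sense of Definition~\ref{def:distinguish}. By that definition, $\mathbf{p}(\sigma')>0$ and $\mathbf{p}(\sigma'')>0$, and moreover $\frac{\mathbf{q}(\sigma')}{\mathbf{p}(\sigma')}\neq \frac{\mathbf{q}(\sigma'')}{\mathbf{p}(\sigma'')}$, which, cross-multiplying by the strictly positive denominators, is equivalent to $\mathbf{q}(\sigma')\mathbf{p}(\sigma'')\neq \mathbf{q}(\sigma'')\mathbf{p}(\sigma')$, and hence to $\frac{\mathbf{p}(\sigma'')}{\mathbf{p}(\sigma')}\neq \frac{\mathbf{q}(\sigma'')}{\mathbf{q}(\sigma')}$ (with the usual convention that one side may be $\infty$ when the corresponding $\mathbf{q}$-mass vanishes). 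Next, I pick the $\sigma$ furnished by the hypothesis, for which $\theta(\sigma',\sigma)>0$ and $\theta(\sigma'',\sigma)>0$. Multiplying the first by $\mathbf{p}(\sigma')>0$ and the second by $\mathbf{p}(\sigma'')>0$ yields $\theta_{\sigma',\sigma}\mathbf{p}(\sigma')>0$ and $\theta_{\sigma'',\sigma}\mathbf{p}(\sigma'')>0$, exactly the positivity conditions required by Fact~\ref{lem:im}.

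With $(\sigma,\sigma',\sigma'')$ in hand, the strict-inequality clause of Fact~\ref{lem:im} applies verbatim and gives $D_f(\mathbf{p},\mathbf{q})>D_f(\theta^T\mathbf{p},\theta^T\mathbf{q})$, completing the argument. Because all the substantive content (joint convexity of $f$ plus the standard Jensen step that is tight only when $f$ is not strictly convex on the relevant fiber) has already been absorbed into Fact~\ref{lem:im}, there is no genuine obstacle here; the only care needed is bookkeeping, namely to confirm that the ratio condition from Definition~\ref{def:distinguish} is really the same ratio condition appearing in the fact and that $\mathbf{p}(\sigma'),\mathbf{p}(\sigma'')>0$ lets us pass multiplicatively from $\theta(\sigma',\sigma),\theta(\sigma'',\sigma)>0$ to the weighted positivity demanded. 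This makes the corollary an immediate consequence, which is exactly why the paper states it with "directly implies."
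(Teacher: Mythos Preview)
Your proof is correct and is exactly the approach the paper intends: the paper simply states that Fact~\ref{lem:im} ``directly implies'' Corollary~\ref{cor:strictim}, and you have faithfully unpacked that implication by matching the hypotheses of Definition~\ref{def:distinguish} and the corollary to the strictness condition in Fact~\ref{lem:im}. The only nontrivial bookkeeping---checking that the ratio condition $\frac{\mathbf{q}(\sigma')}{\mathbf{p}(\sigma')}\neq\frac{\mathbf{q}(\sigma'')}{\mathbf{p}(\sigma'')}$ from Definition~\ref{def:distinguish} is equivalent to $\frac{\mathbf{p}(\sigma'')}{\mathbf{p}(\sigma')}\neq\frac{\mathbf{q}(\sigma'')}{\mathbf{q}(\sigma')}$ from Fact~\ref{lem:im}, and that $\mathbf{p}(\sigma'),\mathbf{p}(\sigma'')>0$ converts $\theta(\sigma',\sigma),\theta(\sigma'',\sigma)>0$ into the weighted positivity---you handle correctly.
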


Now we introduce two $f$-divergences in common use: KL divergence, and Total variation Distance.
\begin{example}[KL divergence]
Choosing $-\log(x)$ as the convex function $f(x)$, $f$-divergence becomes KL divergence $D_{KL}(\mathbf{p},\mathbf{q})=\sum_{\sigma}\mathbf{p}(\sigma)\log\frac{\mathbf{p}(\sigma)}{\mathbf{q}(\sigma)}$
\end{example}

\begin{example}[Total Variation Distance]
Choosing $|x-1|$ as the convex function $f(x)$, $f$-divergence becomes Total Variation Distance $D_{tvd}(\mathbf{p},\mathbf{q})=\sum_{\sigma}|\mathbf{p}(\sigma)-\mathbf{q}(\sigma)|$
\end{example}


\paragraph{Proper scoring rules~\cite{winkler1969scoring,gneiting2007strictly}}
A scoring rule $PS:  \Sigma \times \Delta_{\Sigma} \rightarrow \mathbb{R}$ takes in a signal $\sigma \in \Sigma$  and a distribution over signals $\mathbf{p} \in \Delta_{\Sigma}$ and outputs a real number.  A scoring rule is \emph{proper} if, whenever the first input is drawn from a distribution $\mathbf{p}$, then $\mathbf{p}$ will maximize the expectation of $PS$ over all possible inputs in $\Delta_{\Sigma}$ to the second coordinate. A scoring rule is called \emph{strictly proper} if this maximum is unique. We will assume throughout that the scoring rules we use are strictly proper. Slightly abusing notation, we can extend a scoring rule to be $PS:  \Delta_{\Sigma} \times \Delta_{\Sigma} \rightarrow \mathbb{R}$  by simply taking $PS(\mathbf{p}, \mathbf{q}) = \E_{\sigma \leftarrow \mathbf{p}}(\sigma,  \mathbf{q})$.  We note that this means that any proper scoring rule is linear in the first term. 



\begin{example}[Log Scoring Rule~\cite{winkler1969scoring,gneiting2007strictly}]
Fix an outcome space $\Sigma$ for a signal $\sigma$.  Let $\mathbf{q} \in \Delta_{\Sigma}$ be a reported distribution.
The Logarithmic Scoring Rule maps a signal and reported distribution to a payoff as follows:
$$L(\sigma,\mathbf{q})=\log (\mathbf{q}(\sigma)).$$

Let the signal $\sigma$ be drawn from some random process with distribution $\mathbf{p} \in \Delta_\Sigma$.

Then the expected payoff of the Logarithmic Scoring Rule
$$ \E_{\sigma \leftarrow \mathbf{p}}[L(\sigma,\mathbf{q})]=\sum_{\sigma}\mathbf{p}(\sigma)\log \mathbf{q}(\sigma)=L(\mathbf{p},\mathbf{q})$$

This value will be maximized if and only if $\mathbf{q}=\mathbf{p}$.

\end{example}



\section{An Information Theoretic Mechanism Design Framework}\label{sec:framework}

The original idea of peer prediction~\cite{MRZ05} is based on a clever insight: every agent's information is related to her peers' information and therefore can be checked using her peers' information. Inspired by this, we propose a natural yet powerful information theoretic mechanism design idea---paying every agent the ``mutual information'' between her reported information and her peer's reported information where the ``mutual information'' should be \emph{information-monotone}---any ``data processing'' on the two random variables will decrease the ``mutual information'' between them.

\begin{definition}[Information-monotone mutual information]
We say $MI$ is information-monotone if and only if for any random variables $X:\Omega\mapsto \Sigma_X$ and $Y:\Omega\mapsto \Sigma_Y$:
\begin{description}
\item [Symmetry] $MI(X;Y)=MI(Y;X)$;
\item [Non-negativity] $MI(X;Y)$ is always non-negative and is 0 if $X$ is independent with $Y$;
\item [Data processing inequality] for any transition probability $M\in\mathbb{R}^{|\Sigma_X|\times |\Sigma_X|}$, when $Y$ is independent with $M(X)$ conditioning on $X$, $MI(M(X);Y)\leq MI(X;Y)$.
\end{description}
We say $MI$ is \emph{strictly} information-monotone with respect to a probability measure $P\in \Delta_{\Sigma_X\times \Sigma_Y}$ if when the joint distribution over $X$ and $Y$ is $P$, for any non-permutation $M$, when $Y$ is independent with $M(X)$ conditioning on $X$, $MI(M(X);Y)< MI(X;Y)$. 
\end{definition}

\begin{definition}[Conditional mutual information]\label{def:cmi}
Given three random variables $X,Y,Z$, we define $MI(X;Y|Z)$ as $$\sum_z Pr[Z=z] MI(X;Y|Z=z)$$ where $MI(X;Y|Z=z):=MI(X';Y')$ where $Pr[X'=x,Y'=y]=Pr[X=x,Y=y|Z=z]$.
\end{definition}



\vspace{5pt}

We now provide a paradigm for designing information elicitation mechanisms---the Mutual Information Paradigm.  We warn the reader that \emph{this paradigm represents some ``wishful thinking'' in that is it clear the paradigm cannot compute the payments given the reports}. 

\paragraph{$\imm$ ($\IMM$($MI$))} Given a general setting $(n,\Sigma)$, 
\begin{description}
\item[Report] For each agent $i$, she is asked to provide her private information $\Psi_i$. We denote the actual information she reports as $\hat{\Psi}_i$. 
\item[Payment/Information Score] We uniformly randomly pick a reference agent $j\neq i$ and denote his report as $\hat{\Psi}_j$. Agent $i$ is paid by her information score $$MI(\hat{\Psi}_i;\hat{\Psi}_j)$$ where $MI$ is information-monotone. 
\end{description}


Given a general setting $(n,\Sigma)$, we say $MI$ is \emph{strictly information-monotone with respect to prior $Q$} if for every pair $i,j$, $MI$ is strictly information-monotone with respect to $Q(\Psi_i,\Psi_j)$.

\begin{theorem}\label{thm:imm}
Given a general setting $(n,\Sigma)$, when $MI$ is (strictly) information-monotone (with respect to every agent's prior), the $\imm$ $\IMM$($MI$) is (strictly) dominantly $\truthful$, (strictly) truth-monotone \footnote{We assume that agents do not need effort to receive the private signals in this section and will extend the framework to the non-zero effort setting in Section~\ref{sec:effort}.}.
\end{theorem}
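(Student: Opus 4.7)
The plan is to unwind the definitions of the payments and then apply the data processing inequality (DPI) coordinate by coordinate. Under $\IMM(MI)$, agent $i$'s expected payment when the strategy profile is $\mathbf{s} = (s_1, \ldots, s_n)$ equals
\[
u_i(\mathbf{s}) \;=\; \frac{1}{n-1} \sum_{j \neq i} MI\bigl(s_i(\Psi_i);\, s_j(\Psi_j)\bigr),
\]
where each $s_k$ is a transition matrix applied to $\Psi_k$ using fresh, independent randomness. The key observation I will use throughout is that this independent randomness guarantees $s_j(\Psi_j)$ is conditionally independent of $s_i(\Psi_i)$ given $\Psi_i$ (and symmetrically in the other coordinate), so DPI is applicable.

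For \textbf{dominant truthfulness}, fix an agent $i$ and arbitrary strategies $s_{-i}$ for the others. Comparing the truthful report $\tru$ to any strategy $s_i$ coordinate-wise, DPI applied to the first argument yields
\[
MI\bigl(s_i(\Psi_i);\, s_j(\Psi_j)\bigr) \;\le\; MI\bigl(\Psi_i;\, s_j(\Psi_j)\bigr)
\]
for every $j \neq i$; averaging gives $u_i(s_i, s_{-i}) \le u_i(\tru, s_{-i})$. For strict dominance, suppose $s_i$ is non-permutation and at least one agent $j$ is truthful, so $s_j(\Psi_j) = \Psi_j$. Then the joint distribution of the inputs to $MI$ in that term is $Q(\Psi_i, \Psi_j)$, and strict information-monotonicity with respect to the prior (combined with symmetry of $MI$) makes this particular inequality strict, so the average is strictly larger under $\tru$.

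For \textbf{truth-monotonicity}, suppose a previously truthful agent $k$ switches from $\tru$ to some strategy $s_k$, while all other agents keep arbitrary strategies $s_{-k}$ fixed. I need to check every agent's payment. For agent $k$ herself, each summand in $u_k$ changes from $MI(\Psi_k; s_j(\Psi_j))$ to $MI(s_k(\Psi_k); s_j(\Psi_j))$, and DPI on the first argument gives weak decrease. For any other agent $i \neq k$ with strategy $s_i$ (truthful or not), the only summand in $u_i$ that changes is the one with $j = k$, which moves from $MI(s_i(\Psi_i); \Psi_k)$ to $MI(s_i(\Psi_i); s_k(\Psi_k))$; using symmetry of $MI$ and DPI applied to the $\Psi_k$ coordinate, this summand weakly decreases. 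For the strict part, assume $s_k$ is non-permutation and look at any other truthful agent $i$: the relevant summand is $MI(\Psi_i; \Psi_k)$ vs $MI(\Psi_i; s_k(\Psi_k))$, and strict information-monotonicity with respect to $Q(\Psi_i, \Psi_k)$ yields strict inequality, hence strict decrease in $u_i$.

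The main bookkeeping hurdle is making sure the DPI hypothesis (conditional independence of the ``processed'' variable from the other variable given the original) is verified in each invocation; this will be a brief line noting that each strategy $s_k$ is a stochastic map with its own independent coins, so $s_k(\Psi_k)$ is conditionally independent of everything else given $\Psi_k$. A minor subtlety is that strict information-monotonicity is stated ``with respect to a probability measure,'' so I will explicitly invoke it for the joint distribution $Q(\Psi_i, \Psi_j)$ when strictness is claimed, and note the use of symmetry of $MI$ when the non-identity transition is applied to the second argument rather than the first.
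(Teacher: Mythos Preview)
Your proposal is correct and follows essentially the same approach as the paper: apply the data processing inequality termwise to the averaged payment $\frac{1}{n-1}\sum_{j\neq i} MI(\cdot;\cdot)$, first to establish dominant truthfulness, then isolate the $j=k$ term to obtain truth-monotonicity, invoking strict information-monotonicity with respect to $Q(\Psi_i,\Psi_j)$ for the strict versions. Your explicit verification of the conditional-independence hypothesis for DPI and your noting the role of symmetry when processing the second argument are welcome elaborations the paper leaves implicit, but the argument is otherwise the same.
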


Theorem~\ref{thm:imm} almost immediately follows from the data processing inequality of the mutual information. The key observation in the proof is that \emph{applying any strategy to the information is essentially data processing and thus erodes information}. 

Note that the $\imm$ is not a mechanism since it requires the mechanism to know the full joint distribution over all agents' random private information while agents only report (or even have access to)  a realization / sample of the random private information. Rather, if we design mechanisms such that the payment in the mechanism is an unbiased estimator of the payment in $\imm$, the designed mechanisms will obtain the desirable properties immediately according to Theorem~\ref{thm:imm}. In the future sections, we will see how to design such mechanisms in both the multi-question and single-question settings. 




\begin{proof}
For each agent $i$, for any strategy $s_i$ she plays, comparing with the case she honestly reports $\Psi_i$, her expected information score is $$\sum_{j\neq i}\frac{1}{n-1}MI(\hat{\Psi}_i;\hat{\Psi}_j)=\sum_{j\neq i}\frac{1}{n-1}MI(s_i(\Psi_i);\hat{\Psi}_j)\leq \sum_{j\neq i}\frac{1}{n-1} MI(\Psi_i;\hat{\Psi}_j)$$ since $MI$ is information-monotone. Thus, $\IMM(MI)$ is dominantly truthful when $MI$ is information-monotone.

For the strictness guarantee, we need to show when agent $i$ believes at least one agent tells the truth, for agent $i$, any non-permutation strategy will strictly decrease her expected payment. Let's assume that agent $i$ believes agent $j_0\neq i$ plays $\tru$. When $MI$ is strictly information-monotone with respect to every agent's prior, $MI$ is strictly information-monotone with respect to $Q_i(\Psi_i,\Psi_{j_0})$ as well. Then the inequality of the above formula is strict if agent $i$ plays a non-permutation strategy $s_i$ since $MI(s_i(\Psi_i);\hat{\Psi}_{j_0})=MI(s_i(\Psi_i);{\Psi}_{j_0})<MI(\Psi_i,\Psi_{j_0})$.

Thus, when $MI$ is strictly information-monotone with respect to every agent's prior, $\IMM(MI)$ is strictly dominantly truthful.



Fixing other agents' strategies except agent $k$, for $i\neq k$, agent $i$'s expected payment is

\begin{align*}
    \sum_{j\neq i}\frac{1}{n-1}MI(\hat{\Psi}_i;\hat{\Psi}_j)&=\sum_{j\neq i,k}\frac{1}{n-1}MI(\hat{\Psi}_i;\hat{\Psi}_j)+\frac{1}{n-1}MI(\hat{\Psi}_i;\hat{\Psi}_k)\\&\leq\sum_{j\neq i,k}\frac{1}{n-1}MI(\hat{\Psi}_i;\hat{\Psi}_j)+\frac{1}{n-1}MI(\hat{\Psi}_i;{\Psi}_k).
\end{align*}

Thus, agent $i$'s expected payment decreases when truthful agent $k$ changes to play a non-truthful strategy. For $i=k$, the dominantly truthful property already shows agent $i=k$'s expected payment will decrease when truthful agent $k$ changes to play a non-truthful strategy. Therefore when $MI$ is information-monotone, $\IMM(MI)$ is truth-monotone.

For the strictness guarantee, when $MI$ is strictly information-monotone with respect to every agent's prior, if truthful agent $k$ changes to play a non-$\full$ strategy $s_k$, then a truthful agent $i$'s expected payment will strictly decrease since $ MI({\Psi}_i;s_k({\Psi}_k))<MI({\Psi}_i;{\Psi}_k) $ if $s_k$ is a non-$\full$ strategy and $MI$ is strictly information-monotone.

Therefore, when $MI$ is (strictly) information-monotone (with respect to every agent's prior), $\IMM(MI)$ is (strictly) truth-monotone.
\end{proof}

It remains to design the information-monotone mutual information measure. 

\subsection{$f$-mutual Information} 
Given two random variables $X,Y$, let $\mathbf{U}_{X,Y}$ and $\mathbf{V}_{X,Y}$ be two probability measures where $\mathbf{U}_{X,Y}$ is the joint distribution of $(X,Y)$ and $\mathbf{V}$ is the product of the marginal distributions of $X$ and $Y$. Formally, for every pair of $(x,y)$, $$\mathbf{U}_{X,Y}(X=x,Y=y)=\Pr[X=x,Y=y]\qquad \mathbf{V}_{X,Y}(X=x,Y=y)=\Pr[X=x]\Pr[Y=y].$$ 

If $\mathbf{U}_{X,Y}$ is very different with $\mathbf{V}_{X,Y}$, the mutual information between $X$ and $Y$ should be high since knowing $X$ changes the belief for $Y$ a lot. If $\mathbf{U}_{X,Y}$ equals to $\mathbf{V}_{X,Y}$, the mutual information between $X$ and $Y$ should be zero since $X$ is independent with $Y$. Intuitively, the ``distance'' between $\mathbf{U}_{X,Y}$ and $\mathbf{V}_{X,Y}$ represents the mutual information between them.

\begin{definition}[$f$-mutual information]
The $f$-mutual information between $X$ and $Y$ is defined as $$MI^f(X;Y)=D_f(\mathbf{U}_{X,Y},\mathbf{V}_{X,Y})$$ where $D_f$ is $f$-divergence. 
\end{definition}



\begin{example}[KL divergence and $I(\cdot;\cdot)$]
Choosing $f$-divergence as KL divergence, $f$-mutual information becomes the Shannon (conditional) mutual information~\cite{cover2006elements} $$I(X;Y):=MI^{KL}(X;Y)=\sum_{x,y}\Pr[X=x,Y=y]\log \frac{\Pr[X=x,Y=y]}{\Pr[X=x]\Pr[Y=y]}$$
$$I(X;Y|Z):=MI^{KL}(X;Y|Z)=\sum_{x,y}\Pr[X=x,Y=y,Z=z]\log \frac{\Pr[X=x,Y=y|Z=z]}{\Pr[X=x|Z=z]\Pr[Y=y|Z=z]}.$$
\end{example}

\begin{example}[Total Variation Distance and $MI^{tvd}(\cdot;\cdot)$]
Choosing $f$-divergence as Total Variation Distance, $f$-mutual information becomes $$MI^{tvd}(X;Y):=\sum_{x,y}|\Pr[X=x,Y=y]-\Pr[X=x]\Pr[Y=y]|.$$
\end{example}







For the strictness guarantee, we introduce the following definition:

\begin{definition}[Fine-grained distribution]\label{def:fgd} 
$P\in \Delta_{\Sigma_X\times \Sigma_Y}$ is a fine-grained joint distribution over $X$ and $Y$ if for every two distinct pairs $(x,y),(x',y')$, $U_{X,Y}(X,Y):=P(X,Y)$ and $V_{X,Y}(X,Y):=P(X)P(Y)$ can \textbf{distinguish} (see Definition~\ref{def:distinguish}) $(x,y)$ and $(x',y')$.
\end{definition}

\begin{theorem}[General data processing inequality]\label{lem:imdpi}
When $f$ is strictly convex, $f$-mutual information $MI^f$ is information-monotone and strictly information-monotone with respect to all fine-grained joint distributions over $X$ and $Y$. 
\end{theorem}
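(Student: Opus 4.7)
My plan is to reduce the data processing inequality for $f$-mutual information to the information monotonicity of $f$-divergence (Fact~\ref{lem:im}). The key construction is to \emph{lift} the transition $M$ on $\Sigma_X$ to a transition $\tilde{M}$ on the product space $\Sigma_X\times\Sigma_Y$ that acts as $M$ on the first coordinate and as the identity on the second. Concretely, I would set
$$\tilde{M}_{(x,y),(x',y')} \;:=\; M_{x,x'}\cdot \mathbbm{1}[y=y'],$$
which is a legitimate transition matrix since each row sums to $1$. Under the conditional independence assumption (that $Y$ is independent of $M(X)$ given $X$), a direct calculation shows $\tilde{M}^T \mathbf{U}_{X,Y}$ is the joint distribution of $(M(X),Y)$, and $\tilde{M}^T \mathbf{V}_{X,Y}$ is the product of the marginals of $M(X)$ and $Y$.

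Given these two identities, applying Fact~\ref{lem:im} to the divergence between $\mathbf{U}_{X,Y}$ and $\mathbf{V}_{X,Y}$ with transition $\tilde{M}$ immediately yields
$$MI^f(M(X);Y) \;=\; D_f(\tilde{M}^T \mathbf{U}_{X,Y},\tilde{M}^T \mathbf{V}_{X,Y}) \;\leq\; D_f(\mathbf{U}_{X,Y},\mathbf{V}_{X,Y}) \;=\; MI^f(X;Y),$$
proving information monotonicity of $MI^f$.

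For the strict case, I plan to invoke Corollary~\ref{cor:strictim}: I need to exhibit two distinct pairs $\sigma'=(x_1,y)$ and $\sigma''=(x_2,y)$ in $\Sigma_X\times\Sigma_Y$ that $\mathbf{U}_{X,Y},\mathbf{V}_{X,Y}$ distinguish, together with some $\sigma=(x,y)$ such that $\tilde{M}_{\sigma',\sigma}>0$ and $\tilde{M}_{\sigma'',\sigma}>0$. By our definition of $\tilde{M}$, this reduces to finding distinct $x_1,x_2\in\Sigma_X$ and some $x\in\Sigma_X$ with $M_{x_1,x},M_{x_2,x}>0$. I would establish this by a short counting/pigeonhole argument: if $M$ is a non-permutation square transition matrix, then either some row has at least two positive entries or $M$ is a $0/1$ matrix that is not a permutation; in the first case the total number of positive entries exceeds $m$, so some column must contain two positive entries, and in the second case some column trivially contains two ones. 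Once such $(x_1,x_2,x)$ are found, the fine-grained assumption (Definition~\ref{def:fgd}) guarantees that $(x_1,y)$ and $(x_2,y)$ are distinguished for any fixed $y$, and Corollary~\ref{cor:strictim} delivers strict inequality.

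The main obstacle I anticipate is the bookkeeping in Step~2 — verifying that $\tilde{M}^T \mathbf{V}_{X,Y}$ really is the product of the marginals of $M(X)$ and $Y$ (not just some other product measure) and that the conditional independence of $Y$ and $M(X)$ given $X$ is exactly what makes $\tilde{M}^T \mathbf{U}_{X,Y}$ equal to the joint of $(M(X),Y)$. Once the lifting is set up correctly, everything else is a direct appeal to the already-proved properties of $f$-divergence.
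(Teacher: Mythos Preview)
Your proposal is correct and takes essentially the same approach as the paper: your lifted transition $\tilde{M}_{(x,y),(x',y')}=M_{x,x'}\mathbbm{1}[y=y']$ is exactly the Kronecker product $M\otimes I$ that the paper constructs via the $\vecc$/Kronecker machinery, so both proofs reduce the data processing inequality for $MI^f$ to Fact~\ref{lem:im} applied on the product space $\Sigma_X\times\Sigma_Y$. Your strictness argument is also the same in spirit---finding a column of the lifted matrix with two positive entries and invoking Corollary~\ref{cor:strictim} together with the fine-grained assumption---though you spell out the pigeonhole step more explicitly than the paper does.
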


\begin{definition}[Fine-grained prior]\label{def:fgp} 
Given general setting $(n,\Sigma)$, $Q$ is fine-grained prior if for every pair $i,j$, $Q(\Psi_i,\Psi_j)$ is a fine-grained joint distribution over $\Psi_i$ and $\Psi_j$. 
\end{definition}

Theorem~\ref{thm:imm} and Theorem~\ref{lem:imdpi} imply the below corollary.

\begin{corollary}\label{coro:fimm}
Given a general setting $(n,\Sigma)$, when $f$ is (strictly) convex (and every agent's prior is fine-grained), the $\imm$ $\IMM$($MI^f$) is (strictly) dominantly $\truthful$, (strictly) truth-monotone. 

\end{corollary}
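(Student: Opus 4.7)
The plan is to combine Theorem~\ref{thm:imm} with Theorem~\ref{lem:imdpi} by a straightforward unpacking of definitions; the corollary truly is immediate and the only task is to chase the ``(strictly)'' parentheticals carefully. First I would dispatch the non-strict half: assuming $f$ is (strictly) convex, Theorem~\ref{lem:imdpi} gives that $MI^f$ is information-monotone, meaning it satisfies the symmetry, non-negativity, and data processing inequality properties required by the hypothesis of Theorem~\ref{thm:imm}. Plugging this into the non-strict conclusion of Theorem~\ref{thm:imm} immediately yields that $\IMM(MI^f)$ is dominantly truthful and truth-monotone.

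For the strict half I would unpack the fine-grained prior assumption. By Definition~\ref{def:fgp}, the hypothesis that every agent's prior $Q_i$ is fine-grained means that for every pair $i,j$, the joint $Q_i(\Psi_i,\Psi_j)$ is a fine-grained joint distribution in the sense of Definition~\ref{def:fgd}. Applying the second clause of Theorem~\ref{lem:imdpi} with this joint distribution gives that $MI^f$ is strictly information-monotone with respect to $Q_i(\Psi_i,\Psi_j)$ for each pair $i,j$. By the definition stated just before Theorem~\ref{thm:imm} (``$MI$ is strictly information-monotone with respect to prior $Q$ if for every pair $i,j$, $MI$ is strictly information-monotone with respect to $Q(\Psi_i,\Psi_j)$''), this is exactly the condition that $MI^f$ is strictly information-monotone with respect to every agent's prior.

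Having verified the hypothesis of the strict half of Theorem~\ref{thm:imm}, I would invoke that theorem to conclude $\IMM(MI^f)$ is strictly dominantly truthful and strictly truth-monotone, completing the proof. I do not anticipate any real obstacle, since the argument is purely a chain of definition-chasing; the actual analytic content lives in Theorems~\ref{thm:imm} and~\ref{lem:imdpi}, both of which are assumed here. The write-up should be only a few sentences.
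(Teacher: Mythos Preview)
Your proposal is correct and matches the paper's own treatment exactly: the paper simply states that Theorem~\ref{thm:imm} and Theorem~\ref{lem:imdpi} imply the corollary, with no further argument, and your write-up just makes the definition-chasing explicit.
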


\begin{proof}[Proof of Theorem~\ref{lem:imdpi}]

We will apply the information monotonicity of $f$-divergence to show the data processing inequality of $f$-mutual information. We first introduce several matrix operations to ease the presentation of the proof.

\begin{definition}[$\vecc$ operator \cite{henderson1981vec}]
The $\vecc$ operator creates a column vector $\vecc(A)$ from a matrix $A$ by stacking the column vectors of $A$.
\end{definition}

\begin{definition}[Kronecker Product \cite{henderson1981vec}]
The Kronecker product of two matrices $A\in \mathbb{R}^{m\times n}$, $B\in \mathbb{R}^{p\times q}$ is defined as the $mp\times nq$ matrix $A\otimes B=\{A_{i,j}B\}=\begin{bmatrix}
    A_{11}B &  \dots  & A_{1n}B \\
     \vdots & \ddots & \vdots \\
    A_{m1}B &  \dots  & A_{mn}B \\
\end{bmatrix}$.
\end{definition}

\begin{fact}[$\vecc$ operator and Kronecker Product \cite{roth1934direct}]\label{fact:vec}
For any matrices $A\in \mathbb{R}^{n_1\times n_2}$, $X\in \mathbb{R}^{n_2\times n_3}$, $B\in \mathbb{R}^{n_3\times n_4}$, $\vecc(AXB)=B^T\otimes A \vecc(X)$.
\end{fact}

Let $X:\Omega\mapsto \Sigma_X$, $Y:\Omega\mapsto \Sigma_Y$ be two random variables. $U_{X,Y}$ and $V_{X,Y}$ can be seen as two $\Sigma_X\times \Sigma_Y$ matrices. Let $M$ be a $|\Sigma_X|\times |\Sigma_X|$ transition matrix. 

We define $\Sigma_{X,Y}$ as $\Sigma_X\times \Sigma_Y$.








Note that the vectorization of the matrix that represents the probability measure over $X$ and $Y$ will not change the probability measure. Thus, $$D_{f}(U_{M(X),Y},V_{M(X),Y})= D_{f}(\vecc(U_{M(X),Y}),\vecc(V_{M(X),Y})).$$ 

We define $I$ as a $|\Sigma_Y|\times |\Sigma_Y|$ identity matrix. For any transition matrix $M$, by simple calculations, we can see the Kronecker product between $M$ and the identity matrix $I$ is a transition matrix as well.

When $Y$ is independent with $M(X)$ conditioning on $X$, for any probability measure $ P\in \Delta_{\Sigma_X\times \Sigma_Y}$ on $X$ and $Y$,
\begin{align}\label{eq:id}
P(M(X)=x',Y=y)&=\sum_x P(M(X)=x'|X=x,Y=y)P(X=x,Y=y)\\ \tag{$Y$ is independent with $M(X)$ conditioning on $X$}
&=\sum_x P(M(X)=x'|X=x)P(X=x,Y=y)
\end{align}

\begin{align}\label{eq:qquad}
MI^f(M(X);Y) =& D_{f}(U_{M(X),Y},V_{M(X),Y})\\ \nonumber
=& D_{f}(\vecc(U_{M(X),Y}),\vecc(V_{M(X),Y}))\\  \tag{equation (\ref{eq:id}), replacing $P$ by $U_{X,Y}$ and $V_{X,Y}$}
=& D_{f}(\vecc(M^T U_{X,Y} I),\vecc(M^T V_{(X),Y} I))\\ \tag{Fact~\ref{fact:vec}}
=& D_{f}(I^T\otimes M^T \vecc(U_{X,Y}),I^T\otimes M^T \vecc(U_{X,Y}))\\  
 \tag{information monotonicity of $f$-divergence}
\leq & D_{f}(\vecc(U_{X,Y}),\vecc(V_{X,Y}))\\ \nonumber
 =&D_{f}(U_{X,Y},V_{X,Y})\\ \nonumber
= & MI^f(X;Y)
\end{align}



Now we show the strictness guarantee. When $M$ is a non-permutation matrix, $\Theta:=(I^T\otimes M^T)^T=M\otimes I$ is a non-permutation matrix as well. Thus there must exist $(x,y),(x',y'),(x'',y'')$ such that both $\Theta((x,y),(x',y'))$ and $\Theta((x,y),(x'',y''))$ are strictly positive where $(x',y')\neq (x'',y'')$. According to the definition of fine-grained prior (see Definition~\ref{def:fgp} ), $U_{X,Y}$ and $V_{X,Y}$ can distinguish $(x',y')$  and $(x',y')$. Then Corollary~\ref{cor:strictim} implies that the inequality in (\ref{eq:qquad}) is strict.

\end{proof}




\section{Extending the Framework to the Zero-one Effort Model}\label{sec:effort}
This section extends the information theoretic framework to the situations (e.g peer grading) where agents need a certain amount of effort to obtain the private signals.

\begin{assumption}[zero-one effort model]\label{assume:effort}
We assume that for each agent $i$, she can either invest full effort $e_i$ to receive the private signal or invest no effort and receive a useless signal which is independent with other agents' private signals. 
\end{assumption}

We define each agent's \emph{utility} as her payment / reward minus her effort. 

For each agent $i$, we say she plays \emph{effort strategy} $\lambda_i$ if and only if she invests full effort with probability $\lambda_i$ and no effort otherwise. We say she plays a pure (mixed) effort strategy if $\lambda_i=0,1$ ($0<\lambda_i<1$). One example of playing a mixed effort strategy is that in the peer grading case, the students only grade some of the homework carefully and give random grades for other homework. 

Given a mechanism $\mathcal{M}$ for setting $(n,\Sigma)$, for each agent $i$, conditioning receiving the private signal $\sigma_i$, we define her \emph{strategy} $s_i$ as a mapping $s$ from $\sigma_i$ (private signal) to a probability distribution over $\mathcal{R}$. Thus, if agent $i$'s effort strategy and strategy are $(\lambda_i,s_i)$, she will invest full effort for $\lambda_i$ fraction of time and for the time she invests full effort, she plays strategy $s_i$. 


In the non-zero effort situation, we focus on incentivizing the full efforts of agents and would like to pursue a stronger property---the (strictly) $\info$-$\truthful$ property---rather than the strictly truthful property since it is not reasonable for every agent to assume everyone else invests full effort and tells the truth in the current situation.




\paragraph{Mechanism Design Goals}
\begin{description}[noitemsep,topsep=4pt,parsep=2pt,partopsep=4pt]
\item[(Strictly) $\iNfo$-$\truthful$] A mechanism $\mathcal{M}$ is (strictly) $\info$-$\truthful$ if for every agent, given that everyone else either always invests no effort or always invests full effort and plays $\tru$, she can maximize her expected utility by either always investing no effort or always investing full effort and playing $\tru$ (and obtain strictly lower utility by always investing full effort and playing a non-$\truthful$ strategy).

\item[(Strictly) Dominantly $\info$] A mechanism $\mathcal{M}$ is dominantly $\info$ if for every agent, no matter what effort strategies and strategies other agents play, she can maximize her expected utility by either always investing no effort or always investing full effort and plays a $\full$ strategy. A mechanism $\mathcal{M}$ is strictly dominantly $\info$ if for every agent, if she believes at least one other agent will always invest full effort and play $\tru$, she will obtain strictly lower expected payment via playing a non-$\full$ strategy than playing $\tru$ when she always invests full effort.

\item[(Strictly) Effort-monotone] A mechanism $\mathcal{M}$ is (strictly) effort-monotone if for every agent, her optimal payment is (strictly) higher if (strictly) more other agents invest full effort and play $\tru$.


\item[Just desserts] A mechanism $\mathcal{M}$ has the just desserts property if each agent's expected payment is always non-negative and her expected payment is zero if she invests no effort. 
\end{description}

In the zero-one effort model, if an agent believes that everyone else invests no effort, it is too much to ask her to invest full effort in an information elicitation mechanism without verification. The below observation shows that if a rational agent believes that  sufficiently many other agents invest full effort and play $\tru$, she will always invest full effort and play $\tru$ as well. 

\begin{observation}
If a mechanism is dominantly $\info$ and strictly effort-monotone, for every rational agent $i$, there is a threshold $\eta_i$ such that if she believes the number of truth-telling agents is above $\eta_i$, she will invest full effort and play a $\full$ strategy. For a fixed  prior, $\eta_i$ is an increasing function of $e_i$.
\end{observation}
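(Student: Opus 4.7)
The plan is to use the dominantly-informative property to collapse agent $i$'s best response to a binary choice, and then apply strict effort-monotonicity to locate the desired threshold.

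First, I would fix agent $i$'s effort cost $e_i$ and her prior $Q_i$, and let $k$ denote the number of other agents she believes will always invest full effort and play $\tru$. By dominant informativeness, agent $i$'s best response is always one of two pure alternatives: (a) always invest no effort, or (b) always invest full effort and report via some $\full$ strategy $\pi_i^{\star}$. Writing $P_i^{(b)}(k)$ for her maximum expected payment under option (b) and $P_i^{(a)}(k)$ for her expected payment under option (a), the utility gap in favor of (b) becomes
\[
\Delta_i(k) \;:=\; P_i^{(b)}(k) - P_i^{(a)}(k) - e_i.
\]

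Second, I would apply the strict effort-monotone property to conclude that $P_i^{(b)}(k)$ is strictly increasing in $k$, while $P_i^{(a)}(k)$ is constant in $k$: under Assumption~\ref{assume:effort}, a no-effort agent's useless signal is independent of the other agents' private signals, so her report distribution (and hence its coupling with any peer's report) is unaffected by whether the peer happens to play $\tru$ or not. It follows that $\Delta_i(k)$ is strictly increasing in $k$, so there is a smallest integer
\[
\eta_i \;=\; \inf\{\,k \,:\, \Delta_i(k) > 0\,\}
\]
such that, for every $k > \eta_i$, option (b) strictly dominates option (a); above this threshold agent $i$ invests full effort and plays a $\full$ strategy, as required.

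Third, for monotonicity of $\eta_i$ in $e_i$, I would observe that $\eta_i$ is the least $k$ such that $P_i^{(b)}(k) - P_i^{(a)}(k) > e_i$; since $P_i^{(b)}$ and $P_i^{(a)}$ depend only on the prior and on the mechanism, not on $e_i$, raising $e_i$ weakly raises the threshold.

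The step I expect to be the main obstacle is the middle one: strict effort-monotonicity cleanly provides the growth of $P_i^{(b)}(k)$, but some care is needed to ensure that $P_i^{(a)}(k)$ does not grow in tandem and cancel the gain. I would handle this by writing a no-effort agent's report as a (possibly randomized) function of her useless signal alone, which by Assumption~\ref{assume:effort} is independent of everything else, so the joint law of her report with any peer's report is a product of marginals and is therefore unchanged when the peer switches between $\tru$ and any other strategy.
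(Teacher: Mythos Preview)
Your approach matches the paper's, which in fact gives only a one-sentence justification: ``The observation is implied by the fact that the utility equals the payment minus the effort. An agent who needs less effort to access the private information has lower threshold.'' Your decomposition into the binary choice (via dominant informativeness) followed by a threshold argument (via strict effort-monotonicity of $P_i^{(b)}$) is exactly the intended reasoning, only spelled out in far more detail than the paper bothers to.

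One small remark on the step you flag as the main obstacle. Your claim that the joint law of a no-effort agent's report with a peer's report is ``unchanged when the peer switches between $\tru$ and any other strategy'' is not literally correct: the joint law is indeed a product of marginals, but the peer's marginal does change when the peer switches strategies, so the product changes too. What you actually need is that agent $i$'s \emph{expected payment} under option (a) does not increase with $k$. In the paper's framework this follows immediately from the just-desserts property (the no-effort payment is identically zero regardless of what peers do), and the paper's one-line proof silently relies on the same fact. The observation as stated does not list just desserts among its hypotheses, but since every mechanism the paper applies it to has that property, the omission is harmless in context.
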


The observation is implied by the fact that the utility equals the payment minus the effort. An agent who needs less effort to access the private information has lower threshold.  
\vspace{5pt}

We require an additional property to achieve the above mechanism design goals. 

\yk{Reviewer: P13. Definition 4.3 is too hard to read, could be simplified. The importance of ’Just deserts’ should be explained (scaling? arbitrage?) Solution: do not know how to solve it}

\begin{definition}[Convex mutual information]
We say $MI$ is convex if and only if for any random variables $X_1,X_2, Y$, for any $0\leq \lambda\leq 1$, let $B_{\lambda}$ be an independent Bernoulli variable such that $B_{\lambda}=1$ with probability $\lambda$ and $0$ with probability $1-\lambda$. Let $X$ be a random variable such that if $B_{\lambda}=1$, $X=X_1$, otherwise, $X=X_2$, 
$$MI(X;Y)\leq \lambda MI(X_1;Y)+(1-\lambda) MI(X_2;Y).$$
\end{definition}

\begin{theorem}\label{thm:imm2}
Given a general setting $(n,\Sigma)$ with assumption~\ref{assume:effort}, when $MI$ is convex and (strictly) information-monotone (with respect to every agent's prior), the $\imm$ $\IMM$($MI$)is (strictly) dominantly $\info$, (strictly) effort-monotone and has the just desserts property. 
\end{theorem}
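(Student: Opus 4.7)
The plan is to split the theorem into its three components and handle each with the tool it most naturally suggests: the just-desserts property from non-negativity, dominant informativeness from convexity plus information monotonicity, and effort-monotonicity from information monotonicity applied on the peer's side.

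For just desserts, non-negativity of the payment $MI(\hat{\Psi}_i;\hat{\Psi}_j)$ is immediate from the non-negativity clause in the definition of information-monotone $MI$. If agent $i$ invests no effort, her output $\hat{\Psi}_i$ is produced from the useless signal (which by Assumption~\ref{assume:effort} is independent of every other agent's private signal) via her reporting strategy $s_i$, so $\hat{\Psi}_i$ is independent of $\hat{\Psi}_j$ regardless of what the other agents do; the non-negativity clause then forces $MI(\hat{\Psi}_i;\hat{\Psi}_j)=0$.

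For dominant informativeness, fix an arbitrary profile of strategies of the other agents, and suppose agent $i$ plays effort-strategy $\lambda_i\in[0,1]$ paired with reporting strategy $s_i$. Let $X_1=s_i(\Psi_i)$ denote her report when she exerts effort and $X_2=s_i(\text{useless})$ her report otherwise; then $\hat{\Psi}_i$ is the Bernoulli$(\lambda_i)$ mixture of $X_1$ and $X_2$. Applying the convexity axiom with $Y=\hat{\Psi}_j$ gives
\[
MI(\hat{\Psi}_i;\hat{\Psi}_j)\leq \lambda_i\, MI(X_1;\hat{\Psi}_j)+(1-\lambda_i)\, MI(X_2;\hat{\Psi}_j)=\lambda_i\,MI(s_i(\Psi_i);\hat{\Psi}_j),
\]
where the last equality uses that $X_2$ is independent of $\hat{\Psi}_j$ (by the same argument as in just desserts). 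Subtracting the expected effort cost $\lambda_i e_i$, her expected utility is at most $\lambda_i\bigl(MI(s_i(\Psi_i);\hat{\Psi}_j)-e_i\bigr)+(1-\lambda_i)\cdot 0$, i.e.\ a convex combination of the two pure-effort utilities, so any mixed effort strategy is dominated by one of the two pure options. Conditional on full effort, information monotonicity gives $MI(s_i(\Psi_i);\hat{\Psi}_j)\leq MI(\Psi_i;\hat{\Psi}_j)$ with equality whenever $s_i$ is a permutation, so among full-effort strategies her expected payment is maximized by any permutation. Averaging over the random peer $j$ delivers the dominantly informative property. For the strict version, if agent $i$ believes some peer $j_0$ plays full effort and $\tru$, then $\hat{\Psi}_{j_0}=\Psi_{j_0}$ and strict information monotonicity (under fine-grained priors) makes the inequality strict for every non-permutation $s_i$, so any non-permutation strategy at full effort pays strictly less than $\tru$.

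For effort-monotonicity, note that by the previous paragraph her optimal payment equals
\[
\max\!\Bigl(0,\;\tfrac{1}{n-1}\sum_{j\neq i}MI(\Psi_i;\hat{\Psi}_j)\Bigr).
\]
If a peer $k$ switches to full effort with $\tru$, then the very same convexity-plus-information-monotonicity reasoning applied on the peer's side shows that her old contribution $MI(\Psi_i;\hat{\Psi}_k)$ was at most $MI(\Psi_i;\Psi_k)$; under strict information monotonicity the inequality is strict unless peer $k$ was already playing full effort with a permutation. Thus each summand, and hence the sum and its max with $0$, weakly (strictly) increases, giving (strict) effort-monotonicity.

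The main obstacle is applying convexity cleanly in the mixed-effort step: one must identify $\hat{\Psi}_i$ as a mixture of the two conditional distributions on which the convexity axiom is invoked, and verify that the no-effort branch yields a report independent of $\hat{\Psi}_j$ no matter what the other agents do (so the second convex-combination term vanishes). A more delicate point concerns the strict form of effort-monotonicity: when the agent's optimum is already $0$ because the full-effort payment does not exceed $e_i$, a peer switching to $\tru$ may increase the full-effort payment strictly but still leave the overall optimum at $0$; the strict statement is therefore meaningful in the regime where the agent is (or becomes) willing to exert effort, and I would phrase the conclusion accordingly.
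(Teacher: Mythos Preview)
Your proposal is correct and follows essentially the same route as the paper: just desserts from non-negativity of $MI$, dominant informativeness by using convexity to bound the mixed-effort payment by $\lambda_i\,MI(s_i(\Psi_i);\hat{\Psi}_j)$ and then information monotonicity to bound this by $\lambda_i\,MI(\Psi_i;\hat{\Psi}_j)$, and effort-monotonicity by running the same convexity-plus-monotonicity argument (via symmetry of $MI$) on the peer's side. One small clarification: the effort-monotone property is stated in terms of optimal \emph{payment}, not utility, and since $MI$ is non-negative the optimal payment is exactly $\tfrac{1}{n-1}\sum_{j\neq i}MI(\Psi_i;\hat{\Psi}_j)$ with no $\max(0,\cdot)$ needed; consequently the concern you raise in your final paragraph (the optimum staying at $0$ because the full-effort payment does not exceed $e_i$) does not arise here, as the effort cost $e_i$ only enters the utility, not the payment.
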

\begin{proof}

In $\IMM$($MI$), agents always have non-negative expected payments since $MI$ is always non-negative. When the other agent invests no effort, according to Assumption~\ref{assume:effort}, her information is useless and independent with other agents' private signals. Thus, her expected payment is 0 since $MI(X;Y)=0$ when $X$ is independent with $Y$.

When agent $i$ plays effort strategy $\lambda_i$ and conditioning on she invests full effort, she plays strategy $s_i$; conditioning on she invests no effort, she reports random variable $X$, her expected utility is 
\begin{align*}
    &\sum_{j\neq i}\frac{1}{n-1} MI(\hat{\Psi}_i;\hat{\Psi}_j )-\lambda_i e_i\\ \tag{$MI$ is convex}
    &\leq \sum_{j\neq i}\frac{1}{n-1} \lambda_i MI(s_i(\Psi_i);\hat{\Psi}_j )+ (1-\lambda_i)MI(X;\hat{\Psi}_j )-\lambda_i e_i\\ \tag{Zero effort implies no information / Assumption~\ref{assume:effort}}
    &=\lambda_i (\sum_{j\neq i}\frac{1}{n-1}MI(s_i(\Psi_i);\hat{\Psi}_j )-e_i)\\ \tag{$MI$ is information-monotone}
    &\leq\lambda_i (\sum_{j\neq i}\frac{1}{n-1}MI(\Psi_i;\hat{\Psi}_j )-e_i)\\
    &\begin{cases}
    \leq \sum_{j\neq i}\frac{1}{n-1}MI(\Psi_i;\hat{\Psi}_j )-e_i & \text{if $\sum_{j\neq i}\frac{1}{n-1}MI(\Psi_i;\hat{\Psi}_j )>e_i$}\\
    \leq 0 & \text{if $\sum_{j\neq i}\frac{1}{n-1}MI(\Psi_i;\hat{\Psi}_j )\leq e_i$}\\
    \end{cases}
\end{align*}

Therefore, no matter what effort strategies and strategies other agents play, agent $i$ can maximize her expected utility by either always investing no effort or always investing full effort and playing a $\full$ strategy.

For the strictness guarantee, we need to show that if some agent $i$ believes at least one other agent tells the truth, then any non-permutation strategy will strictly decrease agent $i$'s expected payment. Let's assume that agent $i$ believes agent $j_0\neq i$ plays $\tru$. When $MI$ is strictly information-monotone with respect to every agent's prior, $MI$ is strictly information-monotone with respect to $Q_i(\Psi_i,\Psi_{j_0})$ as well. Then if agent $i$ always invests full effort and plays a non-permutation strategy $s_i$, $MI(s_i(\Psi_i);\hat{\Psi}_{j_0})=MI(s_i(\Psi_i);{\Psi}_{j_0})<MI(\Psi_i,\Psi_{j_0})$ which implies that in this case, agent $i$'s expected payment is strictly less than her expected payment in the case when she always invest full effort and plays a $\full$ strategy. 

Thus, when $MI$ is convex and (strictly) information-monotone (with respect to every agent's prior), $\IMM$($MI$) is (strictly) dominantly $\info$.

It remains to show the (strictly) effort-monotone property. In order to show the effort-monotone property, we need to show that for every agent $i$, fixing other agents' strategies except agent $k$, if agent $k$ invests full effort and play her optimal strategy $\tru$, agent $i$'s optimal payment increases comparing with the case agent $k$ does not invest full effort. 

When agent $k$ does not always invest full effort, agent $i$'s optimal payment is proportional to $$\sum_{j\neq i} MI(\Psi_i;\hat{\Psi}_j)=\sum_{j\neq i,k} MI(\Psi_i;\hat{\Psi}_j)+MI(\Psi_i;\hat{\Psi}_k)\leq \sum_{j\neq i,k} MI(\Psi_i;\hat{\Psi}_j)+MI(\Psi_i;\Psi_k)$$ when $MI$ is convex and information-monotone. When $MI$ is strictly information-monotone with respect to every agent's prior, with a similar proof as that of the strictly dominantly informative property, we can see the inequality in the above formula is strict.


Thus, when $MI$ is convex and (strictly) information-monotone (with respect to every agent's prior), $\IMM$($MI$) is (strictly) effort-monotone.

\end{proof}

\begin{fact}[Convexity of $f$-mutual information]\label{fact:convexity}
For any $0\leq \lambda\leq 1$, for any random variables $X_1,X_2,Y$, let $B_{\lambda}$ be an independent Bernoulli variable such that $B_{\lambda}=1$ with probability $\lambda$ and $0$ with probability $1-\lambda$. Let $X$ be a random variable such that if $B_{\lambda}=1$, $X=X_1$, otherwise, $X=X_2$, 
$$MI^{f}(X;Y)\leq \lambda MI^{f}(X_1;Y)+(1-\lambda) MI^{f}(X_2;Y).$$
\end{fact}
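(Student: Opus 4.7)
The plan is to reduce the convexity of $f$-mutual information to the joint convexity of $f$-divergence (Fact~\ref{fact:jointconvexity}) by showing that the joint distribution $U_{X,Y}$ and the product-of-marginals distribution $V_{X,Y}$ associated with the mixture $X$ are themselves the corresponding convex combinations of those associated with $X_1$ and $X_2$.

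First I would unpack the definitions. Because $B_{\lambda}$ is independent of $(X_1, X_2, Y)$, for every pair $(x,y)$ we have
\begin{align*}
\Pr[X = x, Y = y]
&= \Pr[B_{\lambda}=1]\Pr[X_1=x, Y=y] + \Pr[B_{\lambda}=0]\Pr[X_2=x, Y=y] \\
&= \lambda\, U_{X_1,Y}(x,y) + (1-\lambda)\, U_{X_2,Y}(x,y),
\end{align*}
so $U_{X,Y} = \lambda U_{X_1,Y} + (1-\lambda) U_{X_2,Y}$. Applying the same decomposition to the marginal $\Pr[X=x]$ while noting that $\Pr[Y=y]$ is fixed across the three joint models gives $V_{X,Y}(x,y) = \lambda V_{X_1,Y}(x,y) + (1-\lambda) V_{X_2,Y}(x,y)$ as well.

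Second, I would invoke joint convexity of $f$-divergence:
\begin{align*}
MI^{f}(X;Y) &= D_f\bigl(U_{X,Y},\, V_{X,Y}\bigr) \\
&= D_f\bigl(\lambda U_{X_1,Y} + (1-\lambda) U_{X_2,Y},\; \lambda V_{X_1,Y} + (1-\lambda) V_{X_2,Y}\bigr) \\
&\leq \lambda\, D_f(U_{X_1,Y}, V_{X_1,Y}) + (1-\lambda)\, D_f(U_{X_2,Y}, V_{X_2,Y}) \\
&= \lambda\, MI^{f}(X_1;Y) + (1-\lambda)\, MI^{f}(X_2;Y).
\end{align*}

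The argument has no real obstacle; the only subtlety is the independence of $B_{\lambda}$, which is what lets both $U$ and $V$ decompose \emph{with the same mixing weights}, so that the single application of Fact~\ref{fact:jointconvexity} suffices. If $B_{\lambda}$ were allowed to depend on $Y$ (or on $X_1, X_2$), the marginal of $Y$ under the mixture would not coincide with its marginal under the components and the two decompositions would break, so this independence is the only condition worth flagging carefully in the write-up.
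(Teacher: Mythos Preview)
Your proof is correct and follows essentially the same approach as the paper: decompose $U_{X,Y}$ and $V_{X,Y}$ as convex combinations and then apply the joint convexity of $f$-divergence (Fact~\ref{fact:jointconvexity}). Your write-up is in fact slightly more detailed, since the paper simply asserts the two decompositions ``based on the definition of $X$'' without spelling out the role of the independence of $B_{\lambda}$.
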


\begin{proof}
Based on the definition of $X$, 
$$ U_{X,Y}=\lambda U_{X_1,Y} + (1-\lambda) U_{X_2,Y}\qquad V_{X,Y}=\lambda V_{X_1,Y} + (1-\lambda) V_{X_2,Y}.$$

Combining the joint convexity of $D_f$ (Fact~\ref{fact:jointconvexity}) and the fact that $MI^f(X;Y)=D_f(U_{X,Y},V_{X,Y})$, 
$$MI^{f}(X;Y)\leq \lambda MI^{f}(X_1;Y)+(1-\lambda) MI^{f}(X_2;Y).$$

\end{proof}

\begin{corollary}\label{coro:imm2}
Given a general setting $(n,\Sigma)$ with assumption~\ref{assume:effort}, when $f$ is (strictly) convex (and every agent's prior is fine-grained), the $\imm$ $\IMM$($MI^f$) is (strictly) dominantly $\info$, (strictly) effort-monotone and has the just desserts property.

\end{corollary}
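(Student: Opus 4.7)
The plan is to derive Corollary~\ref{coro:imm2} as an immediate specialization of Theorem~\ref{thm:imm2}, instantiated with the mutual information measure $MI := MI^f$. Since Theorem~\ref{thm:imm2} already packages the conclusion we want (strictly) dominantly $\info$, (strictly) effort-monotone, and just desserts out of the two hypotheses (strict) information-monotonicity and convexity of $MI$, the only work is to check that $MI^f$ meets both hypotheses under the assumptions of the corollary.

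First, I would verify convexity. This is exactly Fact~\ref{fact:convexity}: for any mixture random variable $X = B_\lambda X_1 + (1-B_\lambda) X_2$, the joint distribution $U_{X,Y}$ and product-of-marginals distribution $V_{X,Y}$ are the corresponding $\lambda$-convex combinations of $U_{X_i,Y}$ and $V_{X_i,Y}$; plugging this into the joint convexity of $f$-divergence (Fact~\ref{fact:jointconvexity}) gives
\[
MI^f(X;Y) = D_f(U_{X,Y},V_{X,Y}) \leq \lambda MI^f(X_1;Y) + (1-\lambda) MI^f(X_2;Y),
\]
so $MI^f$ is convex with no requirement on $f$ beyond convexity.

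Second, I would verify (strict) information-monotonicity. This is precisely Theorem~\ref{lem:imdpi}: convexity of $f$ gives information-monotonicity of $MI^f$, and strict convexity of $f$ together with the fine-grained prior hypothesis (Definition~\ref{def:fgp}) gives strict information-monotonicity of $MI^f$ with respect to $Q(\Psi_i,\Psi_j)$ for every pair $i,j$, which is what Theorem~\ref{thm:imm2} requires of the measure ``with respect to every agent's prior.''

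Finally, I would invoke Theorem~\ref{thm:imm2} directly: both hypotheses have now been checked, and the corollary follows. There is essentially no obstacle here — all the genuine technical content lives in the proofs of Theorem~\ref{lem:imdpi} (the non-trivial reduction of the $f$-mutual information data processing inequality to the $f$-divergence data processing inequality via the Kronecker product/$\vecc$ identity) and Theorem~\ref{thm:imm2} (the chain-of-inequalities argument splitting effort strategies from reporting strategies). The corollary is just the packaging of these two results together, so the proof is a single sentence appealing to Fact~\ref{fact:convexity}, Theorem~\ref{lem:imdpi}, and Theorem~\ref{thm:imm2}.
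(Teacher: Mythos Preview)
Your proposal is correct and matches the paper's own (implicit) argument: the paper places the corollary immediately after Fact~\ref{fact:convexity} and Theorem~\ref{thm:imm2} without a written proof, precisely because it follows at once by combining Fact~\ref{fact:convexity} (convexity of $MI^f$) and Theorem~\ref{lem:imdpi} ((strict) information-monotonicity of $MI^f$) with Theorem~\ref{thm:imm2}. One minor caveat: as stated, Theorem~\ref{lem:imdpi} and Fact~\ref{lem:im} assume strict convexity of $f$ even for the non-strict data processing inequality, so for the non-strict half of the corollary you should either cite the underlying Jensen argument directly (which only needs convexity) or silently accept this slight over-stating of hypotheses in the paper.
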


\section{Bregman Mutual Information}\label{sec:bregman}
It is naturally to ask whether in addition to $f$-divergence, can we use another commonly used divergence---Bregman divergence $D_{PS}$---to define an information-monotone information measure. Since the general Bregman divergence may not satisfy information monotonicity, the answer is likely to be negative. However, surprisingly, by properly using the Bregman divergence, we can obtain a new family of information measures $BMI^{PS}$ that satisfies almost all information-monotone properties of $f$-mutual information except the symmetry and one half of the data processing inequality. Therefore, by plugging $BMI^{PS}$ into the Mutual Information Paradigm, we may lose the focal property but can preserve the dominantly truthful property.

\paragraph{Bregman Divergence~\cite{bregman1967relaxation,gneiting2007strictly}} Bregman divergence $D_{PS}:\Delta_{\Sigma}\times \Delta_{\Sigma}\rightarrow \mathbb{R}$ is a non-symmetric measure of the difference between distribution $\mathbf{p}\in \Delta_{\Sigma} $ and distribution $\mathbf{q}\in \Delta_{\Sigma} $
and is defined to be $$D_{PS}(\mathbf{p},\mathbf{q})=PS(\mathbf{p},\mathbf{p})-PS(\mathbf{p},\mathbf{q})$$
where $PS$ is a proper scoring rule (see the definition of $PS$ in Section~\ref{sec:tools}). 

\subsection{Bregman Mutual Information}\label{sec:bregmanmutual}
Inspired by the $f$-mutual information, we can first try $D_{PS}(\mathbf{U}_{X,Y},\mathbf{V}_{X,Y})$ to define the Bregman mutual information. However, since the Bregman divergence may not satisfy the information monotonicity, this idea does not work. Intuitively, more information implies a more accurate prediction. Inspired by this intuition, we define Bregman mutual information between $X$ and $Y$ as an accuracy gain---\emph{the accuracy of the posterior $\Pr[\mathbf{Y}|X]$ minus the accuracy of the prior $\Pr[\mathbf{Y}]$}. With this definition, if $X$ changes the belief for $Y$ a lot, then the Bregman mutual information between them is high; if $X$ is independent with $Y$, $\Pr[\mathbf{Y}|X]=\Pr[\mathbf{Y}]$, then the Bregman mutual information between them is zero.

We define $\mathbf{U}_{Y|X=x}$ and $\mathbf{U}_{Y}$ as two probability distribution over $Y$  such that $$\mathbf{U}_{Y|X=x}(Y=y)=\Pr[Y=y|X=x]\qquad \mathbf{U}_{Y}(Y=y)=\Pr[Y=y].$$
\begin{definition}[Bregman mutual information]\label{def:bregmaninformation}
The Bregman mutual information between $X$ and $Y$ is defined as $$BMI^{PS}(X;Y)=\E_{X}D_{PS}(\mathbf{U}_{Y|X},\mathbf{U}_{Y})=\E_{X} PS(\Pr[\mathbf{Y}|X],\Pr[\mathbf{Y}|X])-PS(\Pr[\mathbf{Y}|X],\Pr[\mathbf{Y}]).$$
\end{definition}

\paragraph{Bridging log scoring rule and Shannon mutual information}Inspired by the definition of Bregman mutual information, we will show a novel connection between log scoring rule and Shannon information theory concepts---\emph{the log scoring rule can be used to construct an unbiased estimator of (conditional) Shannon mutual information}. A powerful application of this connection is the information theoretic reconstruction of \citet{prelec2004bayesian} (Section~\ref{sec:reprovebts}). 

The definition of Bregman mutual information says that the accuracy gain measured by a proper scoring rule $PS$ equals the information gain measured by the (conditional) Bregman mutual information $BMI^{PS}$. The below theorem (Theorem~\ref{lem:gain}) shows that we can bridge the log scoring rule and Shannon mutual information by showing the accuracy gain measured by log scoring rule equals the information gain measured by (conditional) Shannon mutual information. Therefore, like $f$-mutual information, Bregman mutual information also generalizes Shannon mutual information (Corollary~\ref{cor:shannon}).


\begin{theorem}[expected accuracy gain = information gain]\label{lem:gain}
For random variables $X,Y,Z$, when predicting $Y$, the logarithm score of prediction $\Pr[\bm{Y}|Z,X]$ minus the logarithm score of prediction $\Pr[\bm{Y}|Z]$
$$\E_{X,Y,Z} L(Y,\Pr[\bm{Y}|Z,X])-L(Y,\Pr[\bm{Y}|Z])=I(X;Y|Z)$$ where $L:\Sigma\times \Delta_{\Sigma}\mapsto \mathbb{R}$ is the log scoring rule and $I(X;Y|Z)$ is the Shannon mutual information between $X$ and $Y$ conditioning on $Z$.
\end{theorem}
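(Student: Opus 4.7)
The plan is to unfold the definition of the log scoring rule, convert the resulting log-ratio into the standard form appearing in conditional mutual information via Bayes' rule, and recognize the expectation as $I(X;Y|Z)$.

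First, I would substitute $L(\sigma,\mathbf{q})=\log\mathbf{q}(\sigma)$ to rewrite the left-hand side as
\begin{equation*}
\E_{X,Y,Z}\Bigl[\log \Pr[Y\mid Z,X]-\log \Pr[Y\mid Z]\Bigr]
=\E_{X,Y,Z}\log\frac{\Pr[Y\mid Z,X]}{\Pr[Y\mid Z]}.
\end{equation*}
Note that this step already uses the fact that the log scoring rule evaluated at a realized outcome only depends on the probability the prediction assigns to that outcome, so the expectation is a sum over the joint distribution of $(X,Y,Z)$.

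Next, I would apply Bayes' rule conditionally on $Z$: since $\Pr[Y\mid Z,X]\Pr[X\mid Z]=\Pr[X,Y\mid Z]$, we have
\begin{equation*}
\frac{\Pr[Y\mid Z,X]}{\Pr[Y\mid Z]}=\frac{\Pr[X,Y\mid Z]}{\Pr[X\mid Z]\Pr[Y\mid Z]}.
\end{equation*}
Substituting this into the expectation yields
\begin{equation*}
\sum_{x,y,z}\Pr[X=x,Y=y,Z=z]\log\frac{\Pr[X=x,Y=y\mid Z=z]}{\Pr[X=x\mid Z=z]\Pr[Y=y\mid Z=z]},
\end{equation*}
which is exactly the expression given in the paper's definition of $I(X;Y|Z)$ (via Definition~\ref{def:cmi} applied to the KL-divergence specialization $MI^{KL}$).

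The proof is essentially a direct algebraic identity, so I do not expect a serious obstacle; the only subtlety is being careful that the outer expectation in the display is over the \emph{joint} distribution of $(X,Y,Z)$ so that the weights $\Pr[X=x,Y=y,Z=z]$ appear, matching the conditional mutual information formula rather than a mismatched one (e.g.\ weighting by $\Pr[Y,Z]$ alone would produce the wrong quantity). Conceptually, the statement just records that the log scoring rule's accuracy gain from additionally conditioning on $X$ coincides with the Shannon information $X$ contributes about $Y$ given $Z$, which is precisely the specialization of Definition~\ref{def:bregmaninformation} to $PS=L$ in the conditional form.
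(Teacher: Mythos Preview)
Your proposal is correct and follows essentially the same approach as the paper: expand the log scoring rule into a log-ratio, use Bayes' rule to rewrite $\Pr[Y\mid Z,X]/\Pr[Y\mid Z]$ as $\Pr[X,Y\mid Z]/(\Pr[X\mid Z]\Pr[Y\mid Z])$, and identify the resulting expectation as $I(X;Y|Z)$.
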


\begin{proof}
\begin{align*}
    &\E_{X,Y,Z} L(Y,\Pr[\bm{Y}|Z,X])-L(Y,\Pr[\bm{Y}|Z])\\
    &=\sum_{x,y,z} \Pr[X=x,Y=y,Z=z] \log(\frac{\Pr[Y=y|Z=z,X=x]}{\Pr[Y=y|Z=z]})\\
    &=\sum_{x,y,z} \Pr[X=x,Y=y,Z=z] \log(\frac{\Pr[Y=y,X=x|Z=z]}{\Pr[Y=y|Z=z]\Pr[X=x|Z=z]})\\
    &= I(X;Y|Z)
\end{align*}
\end{proof}

Recall that the conditional mutual information (Definition~\ref{def:cmi}) is defined as $$\sum_z \Pr[Z=z]MI(X;Y|Z=z).$$ Thus, $$BMI^{PS}(X;Y|Z)=\E_{X,Z} PS(\Pr[\mathbf{Y}|X,Z],\Pr[\mathbf{Y}|X,Z])-PS(\Pr[\mathbf{Y}|X,Z],\Pr[\mathbf{Y}|Z])$$ which is the accuracy of posterior $\Pr[\mathbf{Y}|X,Z]$ minus the accuracy of prior $\Pr[\mathbf{Y}|Z]$.  Therefore, Fact~\ref{lem:gain} directly implies Corollary~\ref{cor:shannon}.

\begin{corollary}\label{cor:shannon}
$BMI^{L(\cdot,\cdot)}(X;Y|Z)=I(X;Y|Z)$ where $BMI^{L(\cdot,\cdot)}$ is a Bregman mutual information that chooses Log scoring rule $L(\cdot,\cdot)$ as the proper scoring rule. 
\end{corollary}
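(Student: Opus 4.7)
The plan is to unfold definitions and reduce directly to Theorem~\ref{lem:gain}. Writing $BMI^{L}(X;Y|Z)$ using Definition~\ref{def:cmi} gives
\[
BMI^{L}(X;Y|Z) \;=\; \sum_z \Pr[Z=z]\, BMI^{L}(X;Y \mid Z=z),
\]
and Definition~\ref{def:bregmaninformation} rewrites each conditional term as an expected Bregman divergence of the posterior $\mathbf{U}_{Y \mid X,Z=z}$ from the prior $\mathbf{U}_{Y \mid Z=z}$. Combining these, the conditional Bregman mutual information equals the expected accuracy gain
\[
\E_{X,Z}\bigl[L(\Pr[\mathbf{Y}\mid X,Z],\Pr[\mathbf{Y}\mid X,Z]) - L(\Pr[\mathbf{Y}\mid X,Z],\Pr[\mathbf{Y}\mid Z])\bigr].
\]

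Next I would use the fact, noted after the definition of proper scoring rules in Section~\ref{sec:tools}, that every proper scoring rule is linear in its first argument, i.e.\ $L(\mathbf{p},\mathbf{q}) = \E_{\sigma \leftarrow \mathbf{p}} L(\sigma,\mathbf{q})$. Applying this identity to both terms above, with $\mathbf{p} = \Pr[\mathbf{Y}\mid X,Z]$, collapses each outer expectation over $X,Z$ together with the inner expectation over $Y$ drawn from $\Pr[\mathbf{Y}\mid X,Z]$ into a single expectation over the joint $(X,Y,Z)$. This yields
\[
BMI^{L}(X;Y|Z) \;=\; \E_{X,Y,Z}\bigl[L(Y,\Pr[\mathbf{Y}\mid X,Z]) - L(Y,\Pr[\mathbf{Y}\mid Z])\bigr].
\]

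Finally, the right-hand side is exactly the quantity evaluated in Theorem~\ref{lem:gain}, which equals $I(X;Y|Z)$. Chaining these equalities gives $BMI^{L(\cdot,\cdot)}(X;Y|Z) = I(X;Y|Z)$, as desired. There is no real obstacle here: the proof is purely a matter of correctly unpacking the conditional Bregman mutual information and using the linearity of $L$ in its first argument to match the form of Theorem~\ref{lem:gain}. The only minor care required is to keep track of which variables are being averaged at each step, so that the tower property of expectation is applied cleanly when merging the expectation over $Y\mid X,Z$ with the outer expectation over $X,Z$.
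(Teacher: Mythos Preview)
Your proposal is correct and follows essentially the same route as the paper: expand the conditional Bregman mutual information via Definition~\ref{def:cmi} and Definition~\ref{def:bregmaninformation}, then invoke Theorem~\ref{lem:gain}. The only difference is that you make explicit the linearity step converting $L(\Pr[\mathbf{Y}\mid X,Z],\cdot)$ into $\E_{Y\mid X,Z}L(Y,\cdot)$, which the paper leaves implicit in its one-line remark preceding the corollary.
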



\begin{definition}[Quasi Information-monotone mutual information]
We say $MI$ is quasi information-monotone if and only if it is always non-negative and satisfies the data processing inequality for the first entry.
\end{definition}

A quasi information-monotone mutual information may not be symmetric. Thus, even if it satisfies the data processing inequality for the first entry, it may not satisfy the data processing inequality for the second entry which means data processing methods operating on $Y$ may increase $MI(X;Y)$.

\begin{theorem}\label{thm:quasimonotone}
The Bregman mutual information is quasi information-monotone. 
\end{theorem}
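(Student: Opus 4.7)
The plan is to verify the two defining properties---non-negativity and the data processing inequality on the first entry---separately. Non-negativity is immediate: by properness of $PS$, for any distributions $\mathbf{p},\mathbf{q}$ we have $D_{PS}(\mathbf{p},\mathbf{q})=PS(\mathbf{p},\mathbf{p})-PS(\mathbf{p},\mathbf{q})\geq 0$, and since $BMI^{PS}(X;Y)$ is an expectation of such divergences it is non-negative.

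For the data processing inequality, let $X'=M(X)$ where $Y$ is independent of $X'$ given $X$. The first step is to rewrite $BMI^{PS}$ in a convenient form using the linearity of $PS$ in its first argument:
\begin{align*}
BMI^{PS}(X;Y)=\E_{X,Y}\bigl[PS(Y,\mathbf{U}_{Y|X})\bigr]-\E_Y\bigl[PS(Y,\mathbf{U}_Y)\bigr].
\end{align*}
The analogous identity holds for $X'$, and since the second term depends only on the marginal of $Y$ it is identical in both expressions. So it suffices to show $\E_{X,Y}[PS(Y,\mathbf{U}_{Y|X})]\geq \E_{X',Y}[PS(Y,\mathbf{U}_{Y|X'})]$.

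The next step is to condition on $X'=x'$ and use the conditional independence $Y \perp X' \mid X$ to write $\mathbf{U}_{Y|X=x,X'=x'}=\mathbf{U}_{Y|X=x}$. Iterated conditioning then gives
\begin{align*}
\E_{X,Y\mid X'=x'}\bigl[PS(Y,\mathbf{U}_{Y|X})\bigr]=\E_{X\mid X'=x'}\bigl[PS(\mathbf{U}_{Y|X},\mathbf{U}_{Y|X})\bigr],
\end{align*}
while $\E_{Y\mid X'=x'}[PS(Y,\mathbf{U}_{Y|X'=x'})]=PS(\mathbf{U}_{Y|X'=x'},\mathbf{U}_{Y|X'=x'})$. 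By the tower property and conditional independence, $\mathbf{U}_{Y|X'=x'}=\E_{X\mid X'=x'}[\mathbf{U}_{Y|X}]$. So it remains to show the Jensen-type inequality
\begin{align*}
\E_{X\mid X'=x'}\bigl[PS(\mathbf{U}_{Y|X},\mathbf{U}_{Y|X})\bigr]\geq PS\bigl(\E_{X\mid X'=x'}[\mathbf{U}_{Y|X}],\E_{X\mid X'=x'}[\mathbf{U}_{Y|X}]\bigr).
\end{align*}

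The key observation---and what I expect to be the only subtle point---is that the function $\mathbf{p}\mapsto PS(\mathbf{p},\mathbf{p})$ is convex. This follows because properness means $PS(\mathbf{p},\mathbf{p})=\max_{\mathbf{q}}PS(\mathbf{p},\mathbf{q})$, and for each fixed $\mathbf{q}$ the map $\mathbf{p}\mapsto PS(\mathbf{p},\mathbf{q})$ is linear (by the extension of $PS$ to $\Delta_\Sigma\times\Delta_\Sigma$ via linearity in the first argument recalled in Section~\ref{sec:tools}); a pointwise maximum of linear functions is convex. With this convexity in hand, Jensen's inequality delivers the required bound, and averaging over $X'$ completes the proof.
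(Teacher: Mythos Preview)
Your proof is correct and is essentially the same as the paper's. Both reduce to comparing $\E_{X'}PS(\mathbf{U}_{Y|X'},\mathbf{U}_{Y|X'})$ with $\E_{X}PS(\mathbf{U}_{Y|X},\mathbf{U}_{Y|X})$ using linearity of $PS$ in the first slot together with the conditional independence $\mathbf{U}_{Y|X,X'}=\mathbf{U}_{Y|X}$; the only cosmetic difference is that the paper applies properness directly in the form $PS(\mathbf{U}_{Y|X},\mathbf{U}_{Y|M(X)})\leq PS(\mathbf{U}_{Y|X},\mathbf{U}_{Y|X})$, whereas you repackage the same inequality as ``$\mathbf{p}\mapsto PS(\mathbf{p},\mathbf{p})$ is convex (max of linear functions), then Jensen.''
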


Intuitively, more information about $X$ provides a more accurate prediction for random variable $Y$. That is, $\Pr[\mathbf{Y}|M(X)]$ is less accurate than $\Pr[\mathbf{Y}|X]$. We will show the property of the proper scoring rules directly implies the above intuition and then the quasi information-monotonicity of $BMI^{PS}$ follows. 

\begin{proof}
The definition of proper scoring rules implies the non-negativity of Bregman divergence as well as that of Bregman mutual information. 

For any transition probability $M$ that operates on $X$, 

\begin{align*}
    BMI^{PS}(M(X);Y)&=\E_{M(X)}PS(\Pr[\mathbf{Y}|M(X)],\Pr[\mathbf{Y}|M(X)])-PS(\Pr[\mathbf{Y}|M(X)],\Pr[\mathbf{Y}])\\ \tag{$PS$ is linear for the first entry}
    &=\E_{X,M(X)}PS(\Pr[\mathbf{Y}|X,M(X)],\Pr[\mathbf{Y}|M(X)])-PS(\Pr[\mathbf{Y}],\Pr[\mathbf{Y}])\\ \tag{conditioning on $X$, $M(X)$ is independent with $Y$}
    &=\E_{X,M(X)}PS(\Pr[\mathbf{Y}|X],\Pr[\mathbf{Y}|M(X)])-PS(\Pr[\mathbf{Y}],\Pr[\mathbf{Y}])\\ \tag{$PS$ is proper}
    &\leq \E_{X}PS(\Pr[\mathbf{Y}|X],\Pr[\mathbf{Y}|X])-PS(\Pr[\mathbf{Y}],\Pr[\mathbf{Y}])\\
    &=  \E_{X}PS(\Pr[\mathbf{Y}|X],\Pr[\mathbf{Y}|X])-PS(\Pr[\mathbf{Y}|X],\Pr[\mathbf{Y}])\\
    &= BMI^{PS}(X;Y)
\end{align*}
\end{proof}

\begin{theorem}\label{thm:bim1}
Given a general setting $(n,\Sigma)$, when $MI$ is quasi information-monotone, the Mutual Information Paradigm $\IMM$($MI$) is dominantly $\truthful$. 

\end{theorem}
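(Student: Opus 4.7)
The plan is to mimic the proof of Theorem~\ref{thm:imm} but to carefully exploit the fact that agent $i$'s strategy affects \emph{only the first argument} of $MI$ in her expected payment, so we only need the data processing inequality on the first entry, which is exactly what quasi information-monotonicity provides.

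First, I would fix an agent $i$, fix arbitrary strategies for every other agent, and compute agent $i$'s expected payment when she plays some strategy $s_i$. By construction of $\IMM(MI)$, a peer $j \neq i$ is chosen uniformly at random and agent $i$ is paid $MI(\hat{\Psi}_i;\hat{\Psi}_j)$ where $\hat{\Psi}_i = s_i(\Psi_i)$. Taking expectation over the choice of peer gives
\[
\sum_{j\neq i}\frac{1}{n-1}\,MI\!\left(s_i(\Psi_i);\hat{\Psi}_j\right).
\]

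Next I would verify the conditional independence required by the data processing inequality. Agent $i$'s strategy $s_i$ is a randomized map from her private signal $\Psi_i$ to her report, using only private randomness, so $s_i(\Psi_i)$ is independent of everything else (in particular of $\hat{\Psi}_j$) conditional on $\Psi_i$. Viewing $s_i$ as a transition probability applied to $\Psi_i$, the quasi information-monotonicity (data processing on the first entry) yields
\[
MI\!\left(s_i(\Psi_i);\hat{\Psi}_j\right)\;\leq\; MI\!\left(\Psi_i;\hat{\Psi}_j\right)
\]
for every $j \neq i$. Summing over $j$ shows that agent $i$'s expected payment under any $s_i$ is at most her expected payment under $\tru$, no matter what the other agents do. Hence $\tru$ is a dominant strategy.

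The conceptual content is minimal: the whole point of isolating the \emph{quasi} information-monotone notion is precisely that the Mutual Information Paradigm places agent $i$'s report only in the first slot of $MI(\cdot;\cdot)$, so symmetry and monotonicity in the second slot are never invoked when establishing dominant truthfulness. The main thing to be careful about is the conditional independence clause in the data processing inequality --- I would explicitly note that $\hat\Psi_j$ depends only on the other agents' signals and randomness, which are independent of agent $i$'s private coin flips given $\Psi_i$, so the hypothesis of the data processing inequality is satisfied. No convexity, no symmetry, and no properties of the second entry are needed, which is why the focal / truth-monotone conclusions of Theorem~\ref{thm:imm} are \emph{not} claimed here.
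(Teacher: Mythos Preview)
Your proposal is correct and follows essentially the same approach as the paper's proof: fix agent $i$ and the other agents' strategies, write her expected payment as $\sum_{j\neq i}\frac{1}{n-1}MI(s_i(\Psi_i);\hat{\Psi}_j)$, and apply the data processing inequality on the first entry termwise. If anything, you are slightly more careful than the paper in explicitly checking the conditional-independence hypothesis needed to invoke the data processing inequality.
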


\begin{proof}
For each agent $i$, for any strategy $s_i$ she plays, comparing with the case she honestly reports $\Psi_i$, her expected information score is $$\sum_{j\neq i}MI(\hat{\Psi}_i;\hat{\Psi}_j)=\sum_{j\neq i}MI(s_i(\Psi_i);\hat{\Psi}_j)\leq \sum_{j\neq i} MI(\Psi_i;\hat{\Psi}_j)$$ which is less than if she had reported truthfully since quasi information-monotone $MI$ has data processing inequality for the first entry. 

\end{proof}

\begin{corollary}\label{coro:bim1}
Given a general setting $(n,\Sigma)$, the Mutual Information Paradigm $\IMM$($BMI^{PS}$) is dominantly $\truthful$. 

\end{corollary}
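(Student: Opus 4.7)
The plan is to obtain this corollary by direct composition of the two immediately preceding results. First I would invoke Theorem~\ref{thm:quasimonotone}, which already establishes that Bregman mutual information $BMI^{PS}$ is quasi information-monotone for every strictly proper scoring rule $PS$: its non-negativity follows from the defining property of proper scoring rules (which makes the Bregman divergence $D_{PS}(\mathbf{p},\mathbf{q})=PS(\mathbf{p},\mathbf{p})-PS(\mathbf{p},\mathbf{q})$ non-negative), and the data processing inequality in the first argument follows from the ``conditioning-on-$X$'' argument combined with the properness of $PS$ applied to the reported distribution.

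Second, I would apply Theorem~\ref{thm:bim1} with the choice $MI := BMI^{PS}$. That theorem guarantees that whenever the mutual information measure plugged into the paradigm is quasi information-monotone, $\IMM(MI)$ is dominantly truthful. The underlying one-line reason is that any reporting strategy $s_i$ of agent $i$ amounts to post-processing the private signal $\Psi_i$ via a transition matrix, so by the first-argument data processing inequality we get
\[
BMI^{PS}(s_i(\Psi_i);\hat{\Psi}_j) \leq BMI^{PS}(\Psi_i;\hat{\Psi}_j)
\]
for every peer $j \neq i$, irrespective of the strategies $\hat{\Psi}_j$ used by the other agents. Averaging over the random choice of peer yields the dominant-strategy inequality.

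No further ingredient is required: the symmetry that is missing from $BMI^{PS}$ (and which prevents us from upgrading to the truth-monotone or focal conclusions obtained for $f$-mutual information in Corollary~\ref{coro:fimm}) plays no role here, because the data processing inequality only needs to act on the reporter's own signal. I do not anticipate any real obstacle; the ``hard part'' has already been discharged in Theorem~\ref{thm:quasimonotone}, and Corollary~\ref{coro:bim1} is essentially a one-sentence instantiation of Theorem~\ref{thm:bim1}.
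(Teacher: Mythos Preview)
Your proposal is correct and mirrors the paper's approach exactly: the corollary is obtained by plugging Theorem~\ref{thm:quasimonotone} (Bregman mutual information is quasi information-monotone) into Theorem~\ref{thm:bim1} (quasi information-monotone $MI$ makes $\IMM(MI)$ dominantly truthful), with no additional work required.
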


Unfortunately, Bregman mutual information may not be convex. Thus, we cannot extend the analysis for the zero-one effort model to the Bregman mutual information and obtain the dominantly $\info$ property.

\section{Multi-Question, Detail Free, Minimal Setting}\label{sec:multi}
In this section, we introduce the multi-question setting which was previously studied in \citet{dasgupta2013crowdsourced} and \citet{radanovic2014incentives}: $n$ agents are assigned the same $T$ questions (multi-questions). For each question $k$, each agent $i$ receives a \textbf{private signal} $\sigma_i^k\in \Sigma$ about question $k$ and is asked to report this signal. We call this setting $(n,T,\Sigma)$.

We see mechanisms in which agents are not required to report their forecasts for other agents' answer (minimal), and were the mechanism does not know the agents' priors (detail free). Agent $i$ may lie and report $\hat{\sigma}_i^k\neq \sigma_i^k$.  \citet{dasgupta2013crowdsourced} give the following example for this setting: $n$ workers are asked to check the quality of $m$ goods, they may receive signal ``high quality'' or ``low quality''. 

Agents have priors for questions. Each agent $i$ believes agents' private signals for question $k$ are chosen from a joint distribution $Q_i^k$ over $\Sigma^n$. Note that different agents may have different priors for the same question.

We consider the zero-one effort model (Assumption~\ref{assume:effort}) in this setting. For each agent $i$, before she picks her effort strategy, she believes in expectation she will invest effort $e_i^k$ for question $k$. 

In the multi-question setting, people usually make the below assumption:

\begin{assumption}[A Priori Similar and Random Order]\label{assume:ps}
For any $i$, any $k\neq k'$, $Q_i^k=Q_i^{k'}$, $e_i^k=e_i^{k'}$.  Moreover, all questions appear in a random order, independently drawn for each agent.
\end{assumption}

This means agents cannot distinguish each question without the private signal they receive.

We define $(\Psi_1,\Psi_2,...,\Psi_n)$ as the joint random variables such that 
$$\Pr(\Psi_1=\sigma_1,\Psi_2=\sigma_2,...,\Psi_n=\sigma_n)$$ equals the probability that agents $1,2,..,n$ receive private signals $(\sigma_1,\sigma_2,...,\sigma_n)$ correspondingly for a question which is picked uniformly at random. 

We define $(\hat{\Psi}_1,\hat{\Psi}_2,...,\hat{\Psi}_n)$ as the joint random variables such that 
$$\Pr(\hat{\Psi}_1=\hat{\sigma}_1,\hat{\Psi}_2=\hat{\sigma}_2,...,\hat{\Psi}_n=\hat{\sigma}_n)$$ equals the probability that agents $1,2,..,n$ reports signals $(\hat{\sigma}_1,\hat{\sigma}_2,...,\hat{\sigma}_n)$ correspondingly a question which is picked uniformly at random. Note that the joint distribution over $(\hat{\Psi}_1,\hat{\Psi}_2,...,\hat{\Psi}_n)$ depends on the strategies agents play. 

For each question $k$, each agent $i$'s effort strategy is $\lambda_i^k$ and conditioning on that she invests full effort $e_i$, her strategy is $s_i^k$. We say agent $i$ plays a \textbf{consistent strategy} if for any $k,k'$, $\lambda_i^k=\lambda_i^{k'}$ and $ s_i^k=s_i^{k'}$. 


Recall that in the minimal mechanism, the strategy corresponds to a transition matrix. We define \textbf{truth-telling} $\tru$ as the strategy where an agent truthfully reports her private signal for every question. $\tru$ corresponds to the identity matrix. We say agent $i$ plays a \textbf{$\full$} strategy if there exists a permutation transition matrix $\pi$ such that $s_i^k=\pi,\forall k$. Note that a $\full$ strategy is a consistent strategy. We define a \textbf{consistent strategy profile} as the strategy profile where all agents play a consistent strategy. 



With the a priori similar and random order assumption, \citet{dasgupta2013crowdsourced} make the below observation:

\begin{observation}\cite{dasgupta2013crowdsourced}\label{obs:cons}
When questions are a priori similar and agents receive questions in random order (Assumption~\ref{assume:ps}), for every agent, using different strategies for different questions is the same as a mixed consistent strategy.
\end{observation}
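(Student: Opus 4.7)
The plan is to exploit the random-ordering half of Assumption~\ref{assume:ps} to collapse position-indexed strategies into a single mixed strategy that is applied identically on every question. Formally, let $\pi_i:[T]\to[T]$ denote the (random) bijection between the positions in which agent $i$ sees the questions and the actual question indices, so that the $k$-th question in agent $i$'s ordering is question $\pi_i(k)$. By Assumption~\ref{assume:ps}, each $\pi_i$ is uniform on the symmetric group $S_T$ and the $\pi_i$'s are mutually independent across agents. Since the questions are a priori similar, the only way agent $i$ can index her strategies is by the position $k$ at which she encounters a question; hence, playing $(s_i^k)_{k\in[T]}$ and $(\lambda_i^k)_{k\in[T]}$ means that on the actual question $q$ she applies the strategy $s_i^{\pi_i^{-1}(q)}$ with effort probability $\lambda_i^{\pi_i^{-1}(q)}$.

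The key step is to note that for any fixed actual question $q$, the random index $\pi_i^{-1}(q)$ is uniform on $[T]$, since a uniform permutation evaluated at a fixed point is uniform. Averaging over $\pi_i$, the marginal conditional distribution of agent $i$'s report given her signal $\Psi_i^q$ on question $q$ is exactly
\[
\bar{s}_i(\Psi_i^q) \;=\; \frac{1}{T}\sum_{k=1}^T s_i^k(\Psi_i^q),
\qquad
\bar{\lambda}_i \;=\; \frac{1}{T}\sum_{k=1}^T \lambda_i^k,
\]
which is precisely a mixed consistent strategy. The same calculation works for a uniformly random question $Q$, which is how the paper defines the joint report variables $(\hat{\Psi}_1,\ldots,\hat{\Psi}_n)$ that the mechanisms consume.

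Finally, I would argue that this marginal equivalence propagates to the full joint distribution used by the mechanism. Conditioned on the signal profile $(\Psi_1^q,\ldots,\Psi_n^q)$ on a fixed question $q$, the reports $\hat{\Psi}_i^q = s_i^{\pi_i^{-1}(q)}(\Psi_i^q)$ are conditionally independent across agents, because the $\pi_i$'s are independent and each agent's internal strategy randomness is independent of the others'. Hence the joint law of $(\hat{\Psi}_1,\ldots,\hat{\Psi}_n)$ under the position-indexed strategies coincides with the joint law under the mixed consistent strategy profile $(\bar{s}_i,\bar{\lambda}_i)_i$, and so does every expected payment computed from this joint law. The main obstacle---and the reason the random-order part of Assumption~\ref{assume:ps} is essential---is that in principle, position-indexed strategies induce cross-question correlations (each $s_i^k$ is used exactly once), which a truly independent mixed strategy does not. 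Independence of the orderings across agents is what prevents any peer's mechanism from detecting this correlation through pairwise comparisons, which is exactly what lets us pass freely from the arbitrary-strategies regime to the consistent-strategies regime.
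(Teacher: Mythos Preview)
The paper does not supply its own proof of Observation~\ref{obs:cons}; it is simply stated and attributed to \citet{dasgupta2013crowdsourced}. Your write-up therefore cannot be compared line-by-line to anything in the paper, but it is exactly the argument the observation is gesturing at: uniform independent orderings make the position index $\pi_i^{-1}(q)$ uniform on $[T]$ and independent across agents, so the per-question behavior collapses to the uniform mixture of the position-indexed strategies, and conditional independence of the $\pi_i$'s carries this to the joint report law on a random question. That is the right mechanism, and your final paragraph correctly flags why independence of orderings across agents is the load-bearing hypothesis.

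One small imprecision: when effort varies by position, the equivalent consistent strategy is not the pair $(\bar\lambda_i,\bar s_i)$ with $\bar s_i=\tfrac{1}{T}\sum_k s_i^k$, because conditioning on ``effort was invested'' reweights the mixture by $\lambda_i^k$. The correct equivalent is the single mixed consistent strategy ``draw $k$ uniformly from $[T]$, then play $(\lambda_i^k,s_i^k)$''; its effort probability is indeed $\bar\lambda_i$, but its report rule \emph{given} effort is the $\lambda_i^k$-weighted average of the $s_i^k$, not the uniform average. This does not affect the observation (it is still some mixed consistent strategy), but the displayed formula for $\bar s_i$ is only right when all $\lambda_i^k$ coincide or when you fold the effort decision into $s_i^k$ itself.
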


With the above observation, it is sufficient to only consider the consistent strategy profiles.

\subsection{The $f$-mutual Information Mechanism and Bregman mutual Information Mechanism}\label{sec:md}
In this section, we give direct applications of the $\imm$ in multi-question setting---the $f$-mutual information mechanism and the Bregman mutual information mechanism. Both of them are a family of mechanisms that can be applied to the non-binary setting / multiple-choices questions which generalize the mechanism in \citet{dasgupta2013crowdsourced} that can only be applied to the binary setting / binary choices questions. Moreover, both the $f$-mutual information mechanism and the Bregman mutual information mechanism are dominantly truthful without considering efforts. Later we will map the mechanism in \citet{dasgupta2013crowdsourced} to a special case of the $f$-mutual information mechanism\footnote{Although $f$-mutual information mechanism requires infinite number of question for clean analysis, with an extra positively correlated assumption for the information structure, we can construct an unbiased estimator for $f$-mutual information of the distribution via only 3 questions (See Section~\ref{sec:independentwork}, Appendix~\ref{sec:iw}).}.

\paragraph{$f$-mutual Information Mechanism $\mathcal{M}_{MI^f}$} Given a multi-question setting $(n,T,\Sigma)$,


\begin{description}
\item[Report] For each agent $i$, for each question $k$, she is asked to provide her private signal $\sigma_i^k$. We denote the actual answer she reports as $\hat{\sigma}_i^k$. 
\item[Payment/Information Score] We arbitrarily pick a reference agent $j\neq i$. We define a probability measure $P$ over $\Sigma\times \Sigma$ such that $T*P(\hat{\Psi}_i=\sigma_i;\hat{\Psi}_j=\sigma_j)$ equals the number of questions that agent $i$ answers $\sigma_i$ and agent $j$ answers $\sigma_j$.

Agent $i$ is paid by her information score $$MI^f(\hat{\Psi}_i;\hat{\Psi}_j)$$ where $(\hat{\Psi}_i;\hat{\Psi}_j)$ draws from the probability measure $P$. 
\end{description}

\begin{theorem}\label{thm:fmutual}
Given a multi-question setting $(n,T,\Sigma)$ with the a priori similar and random order assumption (\ref{assume:ps}), when the number of questions is infinite, $f$ is (strictly) convex (and every agent's prior is fine-grained), (i) without considering efforts, $\mathcal{M}_{MI^f}$ is detail free, minimal, (strictly) dominantly $\truthful$, (strictly) truth-monotone; (ii) with the zero-one effort assumption (\ref{assume:effort}), $\mathcal{M}_{MI^f}$ is detail free, minimal, (strictly) dominantly $\info$, (strictly) effort-monotone and has the just desserts property. 


\end{theorem}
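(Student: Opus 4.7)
\begin{aproof}{Theorem~\ref{thm:fmutual}}

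The plan is to reduce this theorem to Corollaries~\ref{coro:fimm} and~\ref{coro:imm2}, which already establish the desired incentive properties for the idealized Mutual Information Paradigm $\IMM(MI^f)$. The mechanism $\mathcal{M}_{MI^f}$ is manifestly detail free (the payment formula consults only the reports, never a prior) and minimal (only private signals are elicited), so the substantive content lies in transferring the incentive properties of $\IMM(MI^f)$ to $\mathcal{M}_{MI^f}$.

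First I would invoke Observation~\ref{obs:cons} to restrict attention without loss of generality to consistent (reporting and effort) strategy profiles: under Assumption~\ref{assume:ps}, any per-question variation of strategies induces the same distribution over the pool of reports as some mixed consistent strategy. Under a consistent profile, for any reference pair $i \neq j$, the per-question report pairs $(\hat\sigma_i^k,\hat\sigma_j^k)$ are i.i.d.\ draws from the single joint distribution of $(\hat\Psi_i,\hat\Psi_j)$ induced by the strategy profile on a random question.

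Next I would argue that as $T\to\infty$ the empirical joint distribution $P$ built by the mechanism converges almost surely to this true joint distribution of $(\hat\Psi_i,\hat\Psi_j)$ by the strong law of large numbers on the finite alphabet $\Sigma\times\Sigma$. Since $MI^f(X;Y)=D_f(\mathbf{U}_{X,Y},\mathbf{V}_{X,Y})$ is a continuous functional of the underlying joint distribution (via continuity of $f$ on the simplex), the realized information score converges almost surely to $MI^f(\hat\Psi_i;\hat\Psi_j)$. Hence, in the infinite-question limit, the expected payment under $\mathcal{M}_{MI^f}$ coincides with the payment prescribed by $\IMM(MI^f)$; the cosmetic difference that $\mathcal{M}_{MI^f}$ picks an arbitrary reference peer while $\IMM(MI^f)$ averages uniformly over peers is immaterial, because the proofs of Theorems~\ref{thm:imm} and~\ref{thm:imm2} establish the relevant inequalities peer-by-peer.

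With this equivalence in hand, the conclusions come directly: for part (i), Corollary~\ref{coro:fimm} supplies (strictly) dominantly truthful and (strictly) truth-monotone; for part (ii) under Assumption~\ref{assume:effort}, Corollary~\ref{coro:imm2} supplies (strictly) dominantly informative, (strictly) effort-monotone, and just-desserts. The main obstacle I anticipate is making the limit argument rigorous in the \emph{strict} case: one needs the strict inequality coming from strict information-monotonicity (Theorem~\ref{lem:imdpi}) to survive taking the limit. This is fine because the gap $MI^f(\Psi_i;\Psi_{j_0}) - MI^f(s_i(\Psi_i);\Psi_{j_0})$ is a fixed positive quantity depending only on the prior and the deviating strategy, not on $T$, so almost-sure convergence of the empirical score preserves the strict inequality in expectation for large enough $T$ (and exactly in the infinite-question limit).

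\end{aproof}
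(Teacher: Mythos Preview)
Your proposal is correct and follows essentially the same route as the paper: restrict to consistent profiles via Observation~\ref{obs:cons}, identify the empirical joint $P$ with the true law of $(\hat\Psi_i,\hat\Psi_j)$ in the infinite-question limit so that $\mathcal{M}_{MI^f}$ coincides with $\IMM(MI^f)$, and then invoke Corollaries~\ref{coro:fimm} and~\ref{coro:imm2}. You are somewhat more explicit than the paper (spelling out the SLLN/continuity justification, the ``arbitrary peer versus uniform peer'' point, and why strictness survives the limit), but the underlying argument is the same.
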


\begin{proof}
We would like to show that the $f$-mutual information mechanism is the same as $\IMM(MI^f)$. Then Corollary~\ref{coro:imm2} directly implies the theorem.

Based on observation~\ref{obs:cons}, it is sufficient to only consider the consistent strategy profiles. When the number of questions is infinite and $\forall i$, agent $i$ play the consistent strategy $\lambda_i,s_i$, $$P(\hat{\Psi}_i=\sigma_i;\hat{\Psi}_j=\sigma_j)=\Pr(\hat{\Psi}_i=\sigma_i;\hat{\Psi}_j=\sigma_j)$$ and $\hat{\Psi}_i=\lambda_i s_i(\Psi_i) + (1-\lambda_i) X_i$ where $X_i$ is a random variable that is independent with all other agents' private signals / strategies.

Therefore, with Assumption \ref{assume:ps}, when the number of questions is infinite, the $f$-mutual information mechanism is the same as $\IMM(MI^f)$ in the multi-question setting. Theorem~\ref{thm:fmutual} follows immediately from Corollary~\ref{coro:fimm},~\ref{coro:imm2}.

\end{proof}

\paragraph{Bregman mutual Information Mechanism $\mathcal{M}_{BMI^{PS}}$} We can define \emph{Bregman mutual information mechanism} via the same definition of $f$-mutual information except replacing $MI^f$ by $BMI^{PS}$.

Corollary~\ref{coro:bim1} directly imply the below theorem. 

\begin{theorem}

Given a multi-question setting $(n,T,\Sigma)$ with the a priori similar and random order assumption (\ref{assume:ps}), when the number of questions is infinite, without considering efforts, the Bregman mutual information mechanism $\mathcal{M}_{BMI^{PS}}$ is detail free, minimal, dominantly $\truthful$. 


\end{theorem}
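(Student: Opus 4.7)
The plan is to reduce the theorem to Corollary~\ref{coro:bim1} by showing that, under the stated assumptions, the Bregman mutual information mechanism $\mathcal{M}_{BMI^{PS}}$ is essentially an implementation of the Mutual Information Paradigm $\IMM(BMI^{PS})$ applied to the joint random variables $(\hat{\Psi}_i, \hat{\Psi}_j)$. This is exactly the same reduction used in the proof of Theorem~\ref{thm:fmutual} for the $f$-mutual information mechanism, and the only substantive change is that the information measure $MI^f$ is replaced by $BMI^{PS}$.

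First, I would invoke Observation~\ref{obs:cons}: because questions are a priori similar and presented in a random independent order (Assumption~\ref{assume:ps}), it suffices to restrict attention to consistent strategy profiles, since any non-consistent strategy is equivalent in expectation to some mixed consistent one. Under a consistent strategy profile, for each pair of agents $(i,j)$ and every question $k$, the reported signal pair $(\hat{\sigma}_i^k, \hat{\sigma}_j^k)$ is drawn i.i.d.\ from the induced joint distribution of $(\hat{\Psi}_i, \hat{\Psi}_j)$. As the number of questions $T \to \infty$, the empirical joint distribution $P$ used in the mechanism converges to the true joint distribution of $(\hat{\Psi}_i, \hat{\Psi}_j)$. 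Since $BMI^{PS}$ is a (continuous) functional of this joint distribution, the empirical information score $BMI^{PS}$ computed by the mechanism converges to the true $BMI^{PS}(\hat{\Psi}_i; \hat{\Psi}_j)$, which is exactly the payment in $\IMM(BMI^{PS})$.

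Having identified $\mathcal{M}_{BMI^{PS}}$ (in the infinite question limit) with $\IMM(BMI^{PS})$, the dominantly truthful property is immediate from Corollary~\ref{coro:bim1}, which in turn rests on Theorem~\ref{thm:quasimonotone} (quasi information-monotonicity of $BMI^{PS}$) together with Theorem~\ref{thm:bim1}. The detail-free property is visible from the description of the mechanism itself: the payment rule only uses the empirical joint frequencies of the reports and a fixed proper scoring rule $PS$, with no reference to any agent's prior $Q_i^k$. The minimal property is also direct: each agent is only asked to report her signal $\hat{\sigma}_i^k \in \Sigma$ for each question and never a forecast of other agents' reports.

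The main (mild) obstacle is the convergence step at infinity: one must be slightly careful that $BMI^{PS}(\hat{\Psi}_i; \hat{\Psi}_j)$ is continuous in the joint distribution, so that the empirical estimator converges to the true value as $T \to \infty$. This holds for reasonable proper scoring rules (e.g.\ any $PS$ for which $PS(\mathbf{p},\mathbf{q})$ is continuous on the interior of $\Delta_\Sigma \times \Delta_\Sigma$), which is implicitly assumed throughout the paper; the ``infinite number of questions'' assumption in the statement is precisely what lets us pass from empirical frequencies to the true joint distribution and thus apply the paradigm-level results. Everything else is a direct inheritance from the already-proved properties of $\IMM(BMI^{PS})$ in Section~\ref{sec:bregmanmutual}.
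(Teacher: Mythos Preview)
Your proposal is correct and takes essentially the same approach as the paper: the paper's proof is simply the one-line remark that Corollary~\ref{coro:bim1} directly implies the theorem, leaving implicit exactly the reduction you spell out (which mirrors the proof of Theorem~\ref{thm:fmutual}). Your added details about Observation~\ref{obs:cons} and the empirical-to-true convergence under infinitely many questions are the natural unpacking of that one line.
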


\subsection{Mapping Dasgupta and Ghosh [2013] into Our Information Theoretic Framework}\label{sec:reprovemd}
This section maps \citet{dasgupta2013crowdsourced} to a special case of $f$-mutual information mechanism (in the binary setting) with the $f$-mutual information using the specific $f$-divergence, total variation distance. With the mapping, we can simplify the proof in \citet{dasgupta2013crowdsourced} to a direct application of our framework.

\subsubsection{Prior Work}

We first state the mechanism $M_d$ and the main theorem in \citet{dasgupta2013crowdsourced}.

\paragraph{Mechanism $M_d$} Agents are asked to report binary signals 0 or 1 for each question. Uniformly randomly pick a reference agent $j$ for agent $i$. We denote $C_i$ as the set of questions agent $i$ answered. We denote $C_j$ as the set of questions agent $j$ answered. We denote $C_{i,j}$ as the set of questions both agent $i$ and agent $j$ answered. For each question $k\in C_{i,j}$ that both agent $i$ and agent $j$ answered, pick subsets $A\subseteq C_i\backslash k,B\subseteq C_j\backslash(k\cup A)$ with $|A|=|B|=d$. If such $A,B$ do not exist, agent $i$'s reward is 0. Otherwise, we define $\bar{\hat{\sigma}}_i^A=\frac{\sum_{l\in A} \hat{\sigma}_i^l}{|A|}$ to be agent $i$'s average answer for subset $A$, $\bar{\hat{\sigma}}_j^B=\frac{\sum_{l\in B} \hat{\sigma}_j^l}{|B|}$ is agent $j$'s average answer for subset $B$.

Agent $i$'s reward for each question $k\in C_{i,j}$ is $$ R_{i,j}^k:=[\hat{\sigma}_i^k*\hat{\sigma}_j^k+(1-\hat{\sigma}_i^k)*(1-\hat{\sigma}_j^k)]-[\bar{\hat{\sigma}}_i^A*\bar{\hat{\sigma}}_j^B+(1-\bar{\hat{\sigma}}_i^A)*(1-\bar{\hat{\sigma}}_j^B)] $$

By simple calculations, essentially agent $i$'s reward for each question $k\in C_{i,j}$ is the correlation between her answer and agent $j$'s answer---$\E[\hat{\Psi}_i \hat{\Psi}_j]-\E[\hat{\Psi}_i]\E[\hat{\Psi}_j]$.

\citet{dasgupta2013crowdsourced} also make an additional assumption:

\begin{assumption}[Positively Correlated]\label{assume:pc}
Each question $k$ has a unknown ground truth $a^k$ and for every agent $i$, with probability greater or equal to $\frac{1}{2}$, agent $i$ receives private signal $a^k$.
\end{assumption}

We succinctly interpret the main results of \citet{dasgupta2013crowdsourced} as well as the results implied by the main results into the below theorem. 

\begin{theorem}\cite{dasgupta2013crowdsourced}\label{md}
Given an multi-question setting $(n,T,\Sigma)$ with the a priori similar and random order assumption (\ref{assume:ps}), the positively correlated assumption (\ref{assume:pc}), and the zero-one effort assumption (\ref{assume:effort}), when $T\geq d+1$, $M_d$ is $\info$-$\truthful$, and strongly focal.  
\end{theorem}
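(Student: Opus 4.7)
\begin{aproof}{Proposal}
The plan is to realize $M_d$ as (essentially) an instance of the $f$-mutual information mechanism $\mathcal{M}_{MI^f}$ with $f$-divergence chosen to be total variation distance, and then apply Corollary~\ref{coro:imm2} to harvest all desired properties. The binary setting plus Assumption~\ref{assume:pc} is what makes the mapping clean, and the finite-question estimator in $M_d$ replaces the ``wishful'' population quantity in the paradigm.

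First I would show that the per-question payment $R_{i,j}^k$ is an unbiased estimator of the covariance
\[
\mathrm{Cov}(\hat{\Psi}_i,\hat{\Psi}_j)=\E[\hat{\Psi}_i\hat{\Psi}_j]-\E[\hat{\Psi}_i]\E[\hat{\Psi}_j].
\]
By Observation~\ref{obs:cons} and Assumption~\ref{assume:ps}, we can restrict to consistent strategy profiles; under these assumptions all questions look i.i.d.\ to the mechanism. Because the subsets $\{k\}$, $A$, and $B$ are pairwise disjoint and $|A|=|B|=d$, the products $\hat{\sigma}_i^k\hat{\sigma}_j^k$, $\bar{\hat{\sigma}}_i^A$, and $\bar{\hat{\sigma}}_j^B$ are independent estimators of $\E[\hat{\Psi}_i\hat{\Psi}_j]$, $\E[\hat{\Psi}_i]$, and $\E[\hat{\Psi}_j]$ respectively; the requirement $T\ge d+1$ is exactly what is needed for $A$ and $B$ to be drawable. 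Rewriting $R_{i,j}^k$ in terms of these gives $\E R_{i,j}^k=\mathrm{Cov}(\hat{\Psi}_i,\hat{\Psi}_j)$.

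Next I would connect this covariance to $MI^{tvd}$ in the binary case. For binary $\hat{\Psi}_i,\hat{\Psi}_j$ every one of the four summands defining $MI^{tvd}$ has the same absolute value $|\Pr[\hat{\Psi}_i=1,\hat{\Psi}_j=1]-\Pr[\hat{\Psi}_i=1]\Pr[\hat{\Psi}_j=1]|=|\mathrm{Cov}(\hat{\Psi}_i,\hat{\Psi}_j)|$, so
\[
\mathrm{Cov}(\hat{\Psi}_i,\hat{\Psi}_j)=\pm\tfrac{1}{4}MI^{tvd}(\hat{\Psi}_i;\hat{\Psi}_j),
\]
with the sign equal to the sign of the covariance. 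Thus $M_d$ pays (up to a constant) $MI^{tvd}$ whenever agents' reports are positively correlated, but the \emph{negative} of $MI^{tvd}$ when they are negatively correlated. Assumption~\ref{assume:pc} is precisely the hypothesis that pins down the sign for truth-telling: if both peers play $\tru$ (and invest effort), then their signals are each more likely to equal the hidden ground truth $a^k$ than not, making $\mathrm{Cov}(\Psi_i,\Psi_j)\ge 0$, and in fact strictly positive whenever $\Psi_i$ is not independent of $\Psi_j$.

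With these pieces in hand I would finish as follows. To verify $M_d$ is $\info$-$\truthful$ and strongly focal, consider any consistent strategy profile $(\lambda_i,s_i)_i$ and fix agent $i$. Her expected payment from peer $j$ equals $\lambda_i\lambda_j\,\mathrm{Cov}(s_i(\Psi_i),s_j(\Psi_j))$ (investing no effort contributes zero because a useless signal is independent of everything), which in absolute value is at most $\tfrac{1}{4}MI^{tvd}(s_i(\Psi_i);s_j(\Psi_j))\le \tfrac{1}{4}MI^{tvd}(\Psi_i;\Psi_j)=\mathrm{Cov}(\Psi_i,\Psi_j)$, where the middle inequality is the data-processing inequality from Theorem~\ref{lem:imdpi} applied to both arguments, and the final equality uses Assumption~\ref{assume:pc} to identify the sign. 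Hence every agent's expected payment is maximized at the all-truth-telling, all-effort profile, giving the strongly focal property; and the argument specializes (as in the proof of Theorem~\ref{thm:imm2}) to show that if every other agent is either no-effort or full-effort-plus-$\tru$, then agent $i$'s best response is in the same class, giving $\info$-$\truthful$.

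The main obstacle, and the reason Assumption~\ref{assume:pc} is needed, is exactly the sign issue: $MI^{tvd}$ is symmetric under relabeling signals, so without some external anchor the ``flip'' permutation would tie with $\tru$ and could even beat it under $M_d$ (which rewards signed covariance). Using the positively-correlated hypothesis to lock the sign of the truthful covariance is what converts the $f$-mutual information bound into the one-sided comparison required by strongly focal and $\info$-$\truthful$.
\end{aproof}
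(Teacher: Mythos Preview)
Your proposal is correct and follows essentially the same route as the paper: identify $\E[R_{i,j}^k]$ with the signed covariance, relate it to $MI^{tvd}$ in the binary case, use Assumption~\ref{assume:pc} to pin the sign at truth-telling, and finish via the data-processing inequality (the paper packages the first two steps as Claim~\ref{claim:1}). One cosmetic slip: $\E[R_{i,j}^k]=2\,\mathrm{Cov}(\hat\Psi_i,\hat\Psi_j)$, so the constant relating $\E[R_{i,j}^k]$ to $MI^{tvd}$ is $\tfrac{1}{2}$ rather than $\tfrac{1}{4}$, but this does not affect the argument.
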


The parameter $d$ can be any positive integer. Larger $d$ will make the mechanism more robust. We will see $M_d$ equals a special case of the $f$-mutual information mechanism only if agent $i,j$'s reported answers are positively correlated. Thus, without considering efforts, $M_d$ is not dominantly truthful while the $f$-mutual information mechanism is. Although $M_d$ only requires a small number of questions, it only applies to binary choice questions, makes an extra assumption, and obtains weaker properties than the $f$-mutual information mechanism. 


\subsubsection{Using Our Information Theoretic Framework to Analyze \citet{dasgupta2013crowdsourced} }\label{sec:dg}

\yk{P20. The claim ’weaker properties than the f-mutual...’ depends on which properties one wants to take into account. [Dasgupta Ghosh 2013] mechanism satisfies strict truthfulness, whereas f-MI does not seem to. Maybe it would be useful to compare the different incentive properties (e.g. say if they are comparable or not...).}

\paragraph{Proof Outline} We will first connect the expected payment in $M_d$ with a specific $f$-mutual information---$ MI^{tvd}$. Then the result follows from the information monotone property of $f$-mutual information. Formally, we use the below claim to show the connection between mechanism $M_d$ and $f$-mutual information mechanism.

\begin{claim}\label{claim:1}[$M_d\approx \mathcal{M}_{MI^{tvd}}$]
With a priori similar and random order assumption, in $M_d$, for every pairs $i,j$, for every reward question $k$, $$\E [R_{i,j}^k]=\frac{1}{2}MI^{tvd}(\Psi_i;\Psi_j)$$ if both of them play $\tru$; $$\E [R_{i,j}^k]\leq \frac{1}{2}MI^{tvd}(\hat{\Psi}_i;\hat{\Psi}_j)$$ if one of them does not play $\tru$.
\end{claim}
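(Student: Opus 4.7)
My plan is to show that $\E[R_{i,j}^k]$ equals twice the covariance of the reported signals, show that $MI^{tvd}(\hat{\Psi}_i;\hat{\Psi}_j)$ equals four times the absolute value of that same covariance in the binary case, and then invoke Assumption~\ref{assume:pc} to handle the equality when both agents play $\tru$.

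First, I would compute the expectation $\E[R_{i,j}^k]$ using Assumption~\ref{assume:ps}. Write $\hat{p}_{ab}:=\Pr[\hat{\Psi}_i=a,\hat{\Psi}_j=b]$ with marginals $\hat{p}_{a\cdot},\hat{p}_{\cdot b}$. By linearity of expectation, the first bracket contributes $\hat{p}_{11}+\hat{p}_{00}$. For the second bracket, $A\subseteq C_i\setminus k$ and $B\subseteq C_j\setminus(k\cup A)$ are \emph{disjoint} question sets, and under Assumption~\ref{assume:ps} signals on distinct questions are i.i.d.\ across questions; combined with Observation~\ref{obs:cons}, which lets us assume agents use consistent strategies, $\bar{\hat{\sigma}}_i^A$ and $\bar{\hat{\sigma}}_j^B$ are independent with means $\hat{p}_{1\cdot}$ and $\hat{p}_{\cdot 1}$ respectively, so the second bracket contributes $\hat{p}_{1\cdot}\hat{p}_{\cdot 1}+\hat{p}_{0\cdot}\hat{p}_{\cdot 0}$. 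Subtracting,
\[
\E[R_{i,j}^k]=(\hat{p}_{11}-\hat{p}_{1\cdot}\hat{p}_{\cdot 1})+(\hat{p}_{00}-\hat{p}_{0\cdot}\hat{p}_{\cdot 0}).
\]

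Next, I would exploit the binary structure. Defining $M_{ab}:=\hat{p}_{ab}-\hat{p}_{a\cdot}\hat{p}_{\cdot b}$, both row and column sums of the $2\times 2$ matrix $M$ vanish, which forces $M_{00}=M_{11}=-M_{01}=-M_{10}$. Therefore $\E[R_{i,j}^k]=2(\hat{p}_{11}-\hat{p}_{1\cdot}\hat{p}_{\cdot 1})=2\,\Cov(\hat{\Psi}_i,\hat{\Psi}_j)$, while $MI^{tvd}(\hat{\Psi}_i;\hat{\Psi}_j)=\sum_{a,b}|M_{ab}|=4|\Cov(\hat{\Psi}_i,\hat{\Psi}_j)|$. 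Consequently
\[
\E[R_{i,j}^k]=2\,\Cov(\hat{\Psi}_i,\hat{\Psi}_j)\;\leq\;2|\Cov(\hat{\Psi}_i,\hat{\Psi}_j)|\;=\;\tfrac{1}{2}MI^{tvd}(\hat{\Psi}_i;\hat{\Psi}_j),
\]
which gives the inequality for the non-truthful case.

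Finally, when both agents play $\tru$, so $\hat{\Psi}=\Psi$, the positive correlation Assumption~\ref{assume:pc} implies $\Cov(\Psi_i,\Psi_j)\ge 0$: conditioning on the ground truth $a^k$ (and using the standard conditional-independence setup in~\citet{dasgupta2013crowdsourced}), a direct calculation factors the covariance as $\pi(1-\pi)(q_i-r_i)(q_j-r_j)$ where $q_\ell,r_\ell$ are agent $\ell$'s accuracy when the truth is $1$ and $0$ respectively, and the assumption $\Pr[\Psi_\ell=a^k\mid a^k]\ge 1/2$ gives $q_\ell\ge r_\ell$ for each $\ell$. Hence the inequality in the previous display is an equality, yielding $\E[R_{i,j}^k]=\tfrac12 MI^{tvd}(\Psi_i;\Psi_j)$. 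The main subtlety I expect is the independence argument for $\bar{\hat{\sigma}}_i^A$ and $\bar{\hat{\sigma}}_j^B$—everything else is essentially bookkeeping on a $2\times 2$ matrix.
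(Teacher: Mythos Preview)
Your proposal is correct and follows essentially the same route as the paper: both arguments reduce $\E[R_{i,j}^k]$ to the ``diagonal minus product'' expression $\sum_\sigma(\hat p_{\sigma\sigma}-\hat p_{\sigma\cdot}\hat p_{\cdot\sigma})$, compare it to $MI^{tvd}$ via $\sum x\le\sum|x|$, and invoke Assumption~\ref{assume:pc} to get equality in the truthful case. Your presentation is slightly more explicit---you phrase things in terms of covariance and spell out the $2\times2$ identity $M_{00}=M_{11}=-M_{01}=-M_{10}$, whereas the paper works directly from the definition of $MI^{tvd}$ and removes the absolute values using the positive-correlation inequality $\Pr[\Psi_j=\sigma\mid\Psi_i=\sigma]\ge\Pr[\Psi_j=\sigma]$---but the content is the same.
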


Claim~\ref{claim:1} shows the connection between $M_d$ and $\mathcal{M}_{MI^{tvd}}$. The only difference between $M_d$ and $\mathcal{M}_{MI^{tvd}}$ is that for agents $i,j$, when one of the agent does not play $\tru$, the correlation between their reports is upper-bounded by rather than equal to the $tvd$-mutual information. Therefore, in $M_d$, truth-telling is not a dominant strategy. But the information-monotone property of $MI^{tvd}$ still guarantees the informative truthful and strongly focal property of $M_d$. 


\begin{proof}[Proof of Theorem~\ref{md}]

We start to show the informative-truthful property of $M_d$.

For every agent $i$, given that everyone else either invests no effort or plays $\tru$, either her reference agent $j$ invests no effort or plays $\tru$. If agent $j$ invests no effort, agent $i$'s expected payment for each reward question is 
\begin{align*}
    \E [R_{i,j}^k]\leq \frac{1}{2}MI^{tvd}(\hat{\Psi}_i;\hat{\Psi}_j)=0
\end{align*}
no matter what strategy agent $i$ plays. Therefore, we only need to consider the case where agent $j$ plays $\tru$. In this case, we will use a proof that is similar with the proof of Theorem~\ref{thm:imm2} which applies the convexity and information-monotone property of $f$-mutual information. When agent $i$ plays effort strategy $\lambda_i$ and conditioning on her investing full effort, she plays strategy $s_i$, agent $i$'s expected payment for each reward question is 
\begin{align*}
    \E [R_{i,j}^k] \leq \frac{1}{2}MI^{tvd}(\hat{\Psi}_i;\Psi_j)\leq \lambda_i(\frac{1}{2}MI^{tvd}(\Psi_i;\Psi_j))
\end{align*}
since $MI^{tvd}$ is convex and information-monotone. Therefore, given agent $i$ answers $T_i$ questions, let's denote set $\mathcal{J}_i$ as the set of agents such that for any $j\in \mathcal{J}_i$, we can construct size $d$ subsets $A,B$ in $M_d$ for agent $i$ and agent $j$.

The total expected utility of agent $i$ is less than $$\left(\sum_{j\in \mathcal{J}_i,\lambda_j=1, s_j=\tru}\Pr[j]\sum_{k\in C_{i,j}}\lambda_i(\frac{1}{2}MI^{tvd}(\Psi_i;\Psi_j))\right)-\lambda_i T_i e_i.$$ When agent $i$ invests full effort and plays $\tru$, agent $i$'s expected payment for each reward question is $\frac{1}{2}MI^{tvd}(\Psi_i;\Psi_j)$ according to Claim~\ref{claim:1}. The total expected utility of agent $i$ is equal to $$\left(\sum_{j\in \mathcal{J}_i,\lambda_j=1, s_j=\tru}\Pr[j]\sum_{k\in C_{i,j}}(\frac{1}{2}MI^{tvd}(\Psi_i;\Psi_j))\right)- T_i e_i.$$

Note that the upper-bound of the total expected utility of agent $i$ is a linear function of $\lambda_i$. Thus, the agent will maximize her total expected utility by either always investing no effort ($\lambda_i=0$) or always investing full effort ($\lambda_i=1$) and playing $\tru$. The choice depends on how many other agents play $\tru$ and the structure of the reward questions. 

Thus, $M_d$ is informative-truthful. Moreover,

\begin{align*}
    \E [R_{i,j}^k]\leq \frac{1}{2}MI^{tvd}(\hat{\Psi}_i;\hat{\Psi}_j)\leq \frac{1}{2}MI^{tvd}(\Psi_i;\Psi_j)
\end{align*}

Thus, the truth-telling strategy profile maximizes \emph{every} agent's expected payment among all strategy profiles which implies $M_d$ is strongly focal. 

\end{proof}

\paragraph{Proof for Claim~\ref{claim:1}} 



We first show that $$\E [R_{i,j}^k]=\frac{1}{2}MI^{tvd}(\Psi_i;\Psi_j)$$ if both of agents $i,j$ play $\tru$.

Note that by simple calculations, Assumption~\ref{assume:pc} implies that for any $\sigma\in \{0,1\}$, $$\Pr[\Psi_j=\sigma|\Psi_i=\sigma]\geq\Pr[\Psi_j=\sigma],$$ $$\Pr[\Psi_j=\sigma|\Psi_i=\sigma']\leq\Pr[\Psi_j=\sigma],\forall \sigma'\neq \sigma.$$

When both of agents $i,j$ play $\tru$,

\begin{align*}
      \frac{1}{2}MI^{tvd}({\Psi}_i;{\Psi}_j) \tag{Definition of $MI^{tvd}$}
     &=\frac{1}{2}\sum_{\sigma,\sigma'} |\Pr[{\Psi}_i=\sigma,{\Psi}_j=\sigma']-\Pr[{\Psi}_i=\sigma]\Pr[{\Psi}_j=\sigma']|\\ 
     &=\frac{1}{2}\sum_{\sigma,\sigma'} \mathbbm{1}(\sigma=\sigma')\left(\Pr[{\Psi}_i=\sigma,{\Psi}_j=\sigma']-\Pr[{\Psi}_i=\sigma]\Pr[{\Psi}_j=\sigma']\right)\\\tag{Assumption~\ref{assume:pc}}
     &+\mathbbm{1}(\sigma\neq\sigma')\left(\Pr[{\Psi}_i=\sigma]\Pr[{\Psi}_j=\sigma']-\Pr[{\Psi}_i=\sigma,{\Psi}_j=\sigma']\right)\\ \tag{Combining like terms, $\Pr[E]-\Pr[\neg E]=2\Pr[E]-1$}
      &=\sum_{\sigma} \left(\Pr[{\Psi}_i=\sigma,{\Psi}_j=\sigma]-\Pr[{\Psi}_i=\sigma]\Pr[{\Psi}_j=\sigma]\right)\\ \label{eq:rij} \tag{Definition of $R_{i,j}^k$ in $M_d$}
      &=\E[R_{i,j}^k]  
 \end{align*}

The proof of $$\E [R_{i,j}^k]\leq \frac{1}{2}MI^{tvd}(\hat{\Psi}_i;\hat{\Psi}_j)$$ is similar to above proof. We only need to replace ${\Psi}_i$ by $\hat{\Psi}_i$ and change the second equation to greater than, that is, 
 
 \begin{align*}
     &\frac{1}{2}\sum_{\sigma,\sigma'} |\Pr[\hat{\Psi}_i=\sigma,\hat{\Psi}_j=\sigma']-\Pr[\hat{\Psi}_i=\sigma]\Pr[\hat{\Psi}_j=\sigma']|\\  
     &\geq \frac{1}{2}\sum_{\sigma,\sigma'} \mathbbm{1}(\sigma=\sigma')\left(\Pr[\hat{\Psi}_i=\sigma,\hat{\Psi}_j=\sigma']-\Pr[\hat{\Psi}_i=\sigma]\Pr[\hat{\Psi}_j=\sigma']\right)\\\tag{$\sum|x|\geq \sum x $}
     &+\mathbbm{1}(\sigma\neq\sigma')\left(\Pr[\hat{\Psi}_i=\sigma]\Pr[\hat{\Psi}_j=\sigma']-\Pr[\hat{\Psi}_i=\sigma,\hat{\Psi}_j=\sigma']\right).
 \end{align*}

 We have finished the proof of Claim~\ref{claim:1}.

\section{Single-Question, Detail Free, Common Prior, Non-minimal Setting}\label{sec:single}
In the single-question setting, agents are asked to answer a single question which means that the mechanism can only obtain a single sample of all agents' private information $(\Psi_1,\Psi_2,...,\Psi_n)$. We assume that in the single-question setting, agents do not need to invest efforts to receive the private signals. Thus, achieving the $\truthful$ and $\focal$ goals is sufficiently good in the single-question setting. 

\begin{assumption}[Common Prior]
We assume agents have a common prior---it is a common knowledge that agents' private signals for the question are chosen from a common joint distribution $Q$ over $\Sigma^n$.
\end{assumption}

\begin{assumption}[Symmetric Prior]
We assume agents have a symmetric prior $Q$---for any permutation $\pi:[n]\mapsto[n]$, $$ Q(\Psi_1=\sigma_{\pi(1)},\Psi_2=\sigma_{\pi(2)},...,\Psi_n=\sigma_{\pi(n)})=Q(\Psi_1=\sigma_1,\Psi_2=\sigma_2,...,\Psi_n=\sigma_n)  $$
\end{assumption}

Because we will assume that the prior is symmetric, we denote the prior expectation for the fraction of agents who receives private signal $\sigma$ by $q(\sigma)$ and the posterior expectation for the fraction of agents who receives private signal $\sigma$, conditioning on one agent receiving signal $\sigma'$ by $q(\sigma|\sigma')$. We also define $\mathbf{q}=q(\cdot)$ and $\mathbf{q_{\sigma}}=q(\cdot|\sigma)$.


With the above assumptions, agents who receive the same private signals will have the same prediction. 

\begin{assumption}[Informative Prior]
We assume if agents have different private signals, they will have  different expectations for the fraction of at least one signal.  That is for any $\sigma \neq \sigma' $, there exists $\sigma''$ such that $q(\sigma''|\sigma)\neq q(\sigma''|\sigma')$.
\end{assumption}

With the above assumptions, agents who receive different private signals will have different predictions.


\subsection{Warm up: the Original Peer Prediction}\label{sec:priorworkpp}\label{sec:reprovepp}
The original peer prediction method~\cite{MRZ05} makes a prediction on each agent's behalf and pays each agent according to the accuracy of the prediction based on her report. In this section, we will shift the payment of the original peer prediction a little bit and show that in the shifted version, agents are essentially paid by the Bregman mutual information between her information and her peers' information when everyone tells the truth, which matches our Mutual Information Paradigm. 

\subsubsection{Prior Work}

\citet{MRZ05} propose the original peer prediction method to solve the ``motivating minority'' problem. They require the knowledge of the common and symmetric prior such that the mechanism can derive $\mathbf{q}_{\sigma}$ given the private signal $\sigma$.

\paragraph{Original Peer Prediction~\cite{MRZ05}} Each agent $i$ is asked to report her private signal $\sigma_i$. We denote $\hat{\sigma}_i$ as her actual reported signal. The mechanism calculates the prediction $\mathbf{q}_{\hat{\sigma}_i}$ on agent $i$'s behalf and uniformly randomly pick a reference agent $j\neq i$ to pay agent $i$ the accuracy of the prediction $\mathbf{q}_{\hat{\sigma}_i}$---$$PS(\hat{\sigma}_j,\mathbf{q}_{\hat{\sigma}_i})$$

\begin{theorem}~\cite{MRZ05}
The original peer prediction~\cite{MRZ05} is minimal and strictly truthful with known common, symmetric, informative prior.
\end{theorem}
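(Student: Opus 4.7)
The plan is to verify strict truthfulness by reducing agent $i$'s expected payment to a strictly proper scoring rule expression and then invoking the informative prior assumption to distinguish signals. Minimality is immediate from the mechanism's description: each agent reports only her private signal $\hat{\sigma}_i \in \Sigma$, with no forecast component, so the report space satisfies $\mathcal{R} = \Sigma$.

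For strict truthfulness, fix an agent $i$ and assume every other agent plays $\tru$. First I would invoke the common-and-symmetric-prior assumption to argue that, from agent $i$'s perspective after observing her signal $\sigma_i$, her posterior belief over any reference agent $j$'s (true) signal $\Psi_j$ is exactly $\mathbf{q}_{\sigma_i}$. This uses symmetry to identify the posterior "fraction-of-agents" distribution with the pairwise marginal posterior over $\Psi_j$. Then, if agent $i$ reports $\hat{\sigma}_i$, the mechanism scores her prediction $\mathbf{q}_{\hat{\sigma}_i}$ against $\hat{\sigma}_j = \sigma_j$, so her expected payment, using that $PS$ is linear in its first argument (per the preliminaries), equals
\[
\E_{\sigma_j \leftarrow \mathbf{q}_{\sigma_i}}[PS(\sigma_j, \mathbf{q}_{\hat{\sigma}_i})] \;=\; PS(\mathbf{q}_{\sigma_i}, \mathbf{q}_{\hat{\sigma}_i}).
\]

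The next step is to apply strict properness of $PS$: the map $\mathbf{q} \mapsto PS(\mathbf{q}_{\sigma_i}, \mathbf{q})$ is uniquely maximized at $\mathbf{q} = \mathbf{q}_{\sigma_i}$. Hence any optimal report $\hat{\sigma}_i$ must satisfy $\mathbf{q}_{\hat{\sigma}_i} = \mathbf{q}_{\sigma_i}$. Finally, I would use the informative prior assumption to conclude that the map $\sigma \mapsto \mathbf{q}_{\sigma}$ is injective: if $\sigma \neq \sigma'$, there is some $\sigma''$ with $q(\sigma''\mid \sigma) \neq q(\sigma''\mid \sigma')$, so $\mathbf{q}_{\sigma} \neq \mathbf{q}_{\sigma'}$. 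Therefore $\mathbf{q}_{\hat{\sigma}_i} = \mathbf{q}_{\sigma_i}$ forces $\hat{\sigma}_i = \sigma_i$, giving strict truthfulness.

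There is no real obstacle here; the only subtle point is the identification in the first step between the "fraction-of-agents" posterior $\mathbf{q}_{\sigma_i}$ and agent $i$'s pairwise posterior over a single reference agent's signal $\Psi_j$, which requires symmetry (and is exactly why the prior is assumed symmetric). Strict properness handles the unique-maximizer conclusion, and the informative-prior assumption is used exactly once, to translate "same posterior" into "same signal."
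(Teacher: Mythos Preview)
Your argument is correct and is the standard proof of this classical result: reduce the expected payment (given that others are truthful) to $PS(\mathbf{q}_{\sigma_i},\mathbf{q}_{\hat{\sigma}_i})$, invoke strict properness, and then use the informative-prior assumption to conclude $\hat{\sigma}_i=\sigma_i$. The identification of $\mathbf{q}_{\sigma_i}$ with agent $i$'s pairwise posterior over $\Psi_j$ under the symmetric prior is exactly the right step.

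Note, however, that the paper does not supply its own proof of this theorem: it is stated as a cited result from \cite{MRZ05} in the ``Prior Work'' subsection and is used as a black box (for instance, the subsequent SPPM theorem appeals to it by saying the shifted mechanism has the same equilibria as the original peer prediction). So there is no in-paper argument to compare against; your write-up simply fills in the omitted, well-known justification.
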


\subsubsection{The Shifted Peer Prediction Method (SPPM)}

Recall that we defined $\mathbf{q}$ as the prior prediction for other agents' received signals before receiving any private signals. The mechanism, which knows the prior, can also derive the prior prediction $\mathbf{q}$.

We shift the original peer prediction subtracting $PS(\hat{\sigma}_j,\mathbf{q})$ from each agent $i$'s original payment. That is, each agent $i$ is paid the accuracy of the posterior minus that of the prior $$PS(\hat{\sigma}_j,\mathbf{q}_{\hat{\sigma}_i})-PS(\hat{\sigma}_j,\mathbf{q}).$$ In this shifted peer prediction mechanism, agents cannot get something for nothing. 

\begin{theorem}
The shifted peer prediction mechanism is minimal and strictly truthful. When everyone tells the truth, each agent $i$'s expected payment is $$BMI^{PS}(\Psi_i;\Psi_j)$$ where agent $j$ is her reference agent. 
\end{theorem}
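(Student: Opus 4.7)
The plan is to verify the two claims in order: first that the shift preserves strict truthfulness and minimality, then to compute the expected payment under universal truth-telling and identify it with $BMI^{PS}(\Psi_i;\Psi_j)$.

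For strict truthfulness and minimality: Minimality is immediate since agents only submit their private signal $\sigma_i$, so $\mathcal{R}=\Sigma$. For strict truthfulness, observe that the subtracted term $PS(\hat{\sigma}_j,\mathbf{q})$ depends only on the reference agent's report $\hat{\sigma}_j$ and the prior $\mathbf{q}$, neither of which is under agent $i$'s control. Hence for any strategy $s_i$ that agent $i$ might play, the difference between her expected payment in the shifted mechanism and in the original peer prediction mechanism is a constant (taken over the randomness of $\hat{\sigma}_j$) that does not depend on $s_i$. Since the cited Miller--Resnick--Zeckhauser result gives strict truthfulness of the original peer prediction mechanism under the common, symmetric, informative prior assumption, the same conclusion carries over to SPPM verbatim.

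For the payment identity under truth-telling: When everyone reports truthfully, $\hat{\sigma}_i=\sigma_i$ and $\hat{\sigma}_j=\sigma_j$, where $j$ is the randomly chosen reference agent. Using the symmetric common prior, the posterior distribution of $\Psi_j$ given $\Psi_i=\sigma_i$ is exactly $\mathbf{q}_{\sigma_i}$, and the marginal distribution of $\Psi_j$ is $\mathbf{q}$. Agent $i$'s expected payment is
\[
\E_{\sigma_i,\sigma_j}\bigl[PS(\sigma_j,\mathbf{q}_{\sigma_i})-PS(\sigma_j,\mathbf{q})\bigr]
=\E_{\sigma_i}\bigl[\E_{\sigma_j\mid\sigma_i}[PS(\sigma_j,\mathbf{q}_{\sigma_i})]-\E_{\sigma_j\mid\sigma_i}[PS(\sigma_j,\mathbf{q})]\bigr].
\]
By linearity of any proper scoring rule in its first argument (stated in Section~\ref{sec:tools}), the inner expectations collapse to $PS(\mathbf{q}_{\sigma_i},\mathbf{q}_{\sigma_i})$ and $PS(\mathbf{q}_{\sigma_i},\mathbf{q})$ respectively. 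Thus the expected payment equals
\[
\E_{\sigma_i}\bigl[PS(\mathbf{q}_{\sigma_i},\mathbf{q}_{\sigma_i})-PS(\mathbf{q}_{\sigma_i},\mathbf{q})\bigr]=\E_{\Psi_i}\bigl[D_{PS}(\mathbf{U}_{\Psi_j\mid\Psi_i},\mathbf{U}_{\Psi_j})\bigr],
\]
which by Definition~\ref{def:bregmaninformation} is precisely $BMI^{PS}(\Psi_i;\Psi_j)$.

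The argument is essentially a bookkeeping exercise: strict truthfulness is inherited because the shift is an additive term independent of one's report, and the payment identity is a direct application of the linearity of proper scoring rules together with the tower property of conditional expectation. No step is technically challenging; the mild subtlety is simply to invoke symmetry of the common prior to justify that $\mathbf{q}_{\sigma_i}$ and $\mathbf{q}$ really are the posterior and marginal of $\Psi_j$ as used in the definition of $BMI^{PS}$.
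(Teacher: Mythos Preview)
Your proof is correct and follows essentially the same approach as the paper: both observe that the shift term is independent of agent $i$'s report (so strict truthfulness is inherited from the original peer prediction mechanism) and then compute the expected payment under truth-telling, identifying it with $BMI^{PS}(\Psi_i;\Psi_j)$ via Definition~\ref{def:bregmaninformation}. You are simply more explicit than the paper about the use of linearity of $PS$ in its first argument and the tower property when collapsing $\E_{\Psi_i,\Psi_j}[PS(\Psi_j,\cdot)]$ to $\E_{\Psi_i}[PS(\mathbf{q}_{\sigma_i},\cdot)]$, which the paper leaves as an implicit step.
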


\begin{proof}
Since $PS(\hat{\sigma}_j,\mathbf{q})$ is independent with agent $i$'s action, the equilibria of the shifted peer prediction mechanism are the same as that of the original peer prediction mechanism. Thus, the shifted peer prediction mechanism is strictly truthful. 

When everyone tells the truth, for every $i$, $\mathbf{q}_{\hat{\sigma}_i}=\mathbf{q}_{{\sigma}_i}=\Pr[\cdot|\Psi_i=\sigma_i]$ where we use the prior $Q$ as the probability measure over all agents' private signals, thus agent $i$ expected payment is 
\begin{align*}
    &\E_{\Psi_i,\Psi_j}PS(\Psi_j,Pr[\bm{\Psi}_j|\Psi_i])-PS(\Psi_j,Pr[\bm{\Psi}_j])\\ \tag{Definition~\ref{def:bregmaninformation}}
    &=BMI^{PS}(\Psi_i;\Psi_j)
\end{align*}
\end{proof}

For SPPM, when agents use other strategies rather than truth-telling, we cannot obtain the result that each agent is paid the mutual information between her reported signal and other agents' reported signal unless both the prior and posterior are updated correctly with the strategy knowledge. However, the mechanism does not know the strategy profile agents will play. Therefore, SPPM is not focal even if we use the log scoring rule. In the next section, we will see the Bayesian truth serum cleverly solves this ``unknown strategy'' problem by setting up new prior and posterior.

\subsection{Mapping Bayesian Truth Serum into Our Information Theoretic Framework}\label{sec:bts}
Bayesian Truth Serum (BTS)~\cite{prelec2004bayesian} rewards the agents whose answer is ``surprisingly popular''. In this section, we will show that in BTS, essentially each agent is paid the mutual information between her information and the aggregated information conditioning a random peer's information which matches our Mutual Information Paradigm. We show this via the connection we found between the log scoring rule and Shannon mutual information---the accuracy gain equals the information gain. Mapping Bayesian Truth Serum into our information theoretic framework substantially simplifies the proof in \citet{prelec2004bayesian} via directly applying the information-monotone property of Shannon mutual information. 

\subsubsection{Prior Work}\label{sec:priorworkbts}
\citet{prelec2004bayesian} proposes the Bayesian Truth Serum mechanism in the single-question setting. In addition to the common prior and the symmetric prior assumptions, two additional assumptions are required:


\begin{assumption}[Conditional Independence]\label{assume:ci}
We define the state of the world as a random variable $W:\Omega\mapsto \Delta_{\Sigma}$ such that given that $W=\omega$, agents' private signals are independently and identically distributed. That is, for every $i$, agent $i$ receives signal $\sigma$ with probability $\omega(\sigma)$. 
\end{assumption}


\begin{assumption}[Large Group]\label{assume:lg}
The number of agents is infinite. 
\end{assumption}

We define a random variable $\hat{W}:\Omega\mapsto \Delta_{\Sigma}$ such that its outcome is the distribution over agents' reported signals. The distribution over $\hat{W}$ dependes on all agents' strategies. With the large group assumption, when agents tell the truth, $\hat{W}=W$.

BTS uses $\hat{W}$ as the posterior distribution and uses agents' forecasts as the prior distribution, and then rewards agents for giving signal reports that are ``unexpectedly common'' with respect to this distribution. Intuitively, an agent will believe her private signal is underestimated by other agents which means she will believe the actual fraction of her own private signal is higher than the average of agents' forecasts. 

Prelec also proposes the signal-prediction framework for the design of detail free mechanism in the single-question setting. 

\paragraph{Signal-prediction framework \cite{prelec2004bayesian}}Given a setting $(n,\Sigma)$ with a symmetric common prior $Q$, the signal-prediction framework defines a game in which each agent $i$ is asked to report his private signal $\sigma_i \in \Sigma$ and his prediction $ \mathbf{p}_i \in \Delta_{\Sigma}$, a distribution over $\Sigma$, where  $\mathbf{p}_i= \mathbf{q}_{\sigma_i}$. For any $\sigma\in \Sigma$, $\mathbf{p}_i(\sigma)$ is agent $i$'s (reported) expectation for the fraction of other agents who has received $\sigma$ given he has received $\sigma_i$. However, agents may not tell the truth. In this framework, the report space $\mathcal{R}=\Sigma \times \Delta_{\Sigma}$. We define a report profile of agent $i$ as $r_i=(\hat{\sigma}_i,\mathbf{\mathbf{\hat{p}}}_i)\in \mathcal{R} $ where $\hat{\sigma_i}$ is agent $i$'s reported signal and $\mathbf{\hat{p}}_i$ is agent $i$'s reported prediction.

We would like to encourage truth-telling $\tru$, namely that agent $i$ reports $\hat{\sigma_i} = \sigma_i,\mathbf{\hat{p}}_i=\mathbf{q}_{\sigma_i}$. To this end, agent $i$ will receive some payment $\nu_i(\hat{\sigma}_i,\mathbf{\hat{p}}_i, \hat{\sigma}_{-i},\mathbf{\hat{p}}_{-i})$ from the mechanism.

\paragraph{Mechanism Bayesian Truth Serum (BTS($\alpha$))~\cite{prelec2004bayesian}} The Bayesian Truth Serum (BTS) follows the signal-prediction framework. Here, we introduce the payment of BTS. Each agent $i$ has two scores: a \textbf{prediction score} and an \textbf{information score}. BTS pays each agent 

$$\text{\textit{prediction score + $\alpha\cdot$ information score}}$$ where $\alpha>1$ To calculate the scores, for every agent $i$, the mechanism chooses a reference agent $j\neq i$ uniformly at random. Agent $i$'s prediction score is $$score_{Pre}(r_i,r_j):=L(\hat{\sigma}_j,\mathbf{\hat{p}}_i)-\log fr(\hat{\sigma}_j|\hat{\bm{\sigma}}_{-j})=\log \mathbf{\hat{p}}_i(\hat{\sigma}_j)-\log fr(\hat{\sigma}_j|\hat{\bm{\sigma}}_{-j})$$ 
Note that only the log scoring rule part $L(\hat{\sigma}_j,\mathbf{\hat{p}}_i)$ is related to agent $i$'s report. Based on the property of the log scoring rule, for agent $i$, in order to maximize her prediction score, the best $\mathbf{\hat{p}}_i(\sigma)$ should be her posterior expectation of the fraction of the agents who \emph{report} $\sigma$ rather than \emph{receive}.

Agent $i$'s information score is $$score_{Im}(r_i,r_j):=\log \frac{fr(\hat{\sigma}_i|\hat{\bm{\sigma}}_{-i})}{\mathbf{\hat{p}}_j(\hat{\sigma}_i)}=\log fr(\hat{\sigma}_i|\hat{\bm{\sigma}}_{-i})-\log \mathbf{\hat{p}}_j(\hat{\sigma}_i) $$ where $fr(\hat{\sigma}_i|\hat{\bm{\sigma}}_{-i})$ is the fraction of all reported signals $\hat{\bm{\sigma}}_{-i}$ (excluding agent $i$) that agree with agent $i$'s reported signal $\hat{\sigma}_i$, which can be seen as the posterior expectation of the fraction of agents who report $\hat{\sigma}_i$ conditioning on all agents' reports, while $\mathbf{\hat{p}}_j(\hat{\sigma}_i)$ is agent $j$'s posterior expectation of that fraction conditioning on agent $j$'s private signal. Intuitively, the signals that actually occur more than other agents believe they will receive a higher information score. 




\vspace{5pt}

Now we restate the main theorem concerning Bayesian Truth Serum:

\begin{theorem}\cite{prelec2004bayesian}\label{bts}
With the common prior, the symmetric prior, the conditional independence, and the large group assumptions, BTS($ \alpha $) is detail free, (i) truthful and (ii) the expected average information score when everyone tells the truth is higher than that in any other equilibrium. Moreover, (iii) for $\alpha>1$, BTS is focal.  
\end{theorem}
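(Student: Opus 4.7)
The plan is to map every piece of BTS's payment to a Shannon-information quantity via Theorem~\ref{lem:gain}, and then appeal to the data-processing inequality. Under the large-group assumption (Assumption~\ref{assume:lg}) any individual's report has vanishing effect on the empirical frequency, so $fr(\tilde{\sigma}_j\mid\tilde{\bm{\sigma}}_{-j})$ collapses to $\hat{W}(\tilde{\sigma}_j)$, a deterministic function of the world $W$ and the overall strategy. Together with strict propriety of the log scoring rule, this decouples agent $i$'s two choices: her reported prediction $\mathbf{\hat{p}}_i$ only affects her prediction score and her reported signal $\tilde{\sigma}_i$ only affects her information score, so the two subproblems can be optimized independently.

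To prove (ii) and (iii), I would analyze a symmetric equilibrium where every agent plays strategy $\tau$. Then $\hat{W}(s)=\Pr[\tilde{\Psi}_k=s\mid W]$ and each agent $j$'s equilibrium prediction is $\mathbf{\hat{p}}_j(s)=\Pr[\tilde{\Psi}_k=s\mid\Psi_j]$. Conditional independence (Assumption~\ref{assume:ci}) makes $\tilde{\Psi}_i,\tilde{\Psi}_k$ i.i.d.\ given $W$, so the random ``sample'' $\tilde{\Psi}_i$ acts like a fresh draw from $\hat{W}$ when we condition on $W$, and from $\mathbf{\hat{p}}_j$ when we condition on $\Psi_j$. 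A direct computation (equivalently, Theorem~\ref{lem:gain} applied with $Y=\tilde{\Psi}_k$, $X=W$, $Z=\Psi_j$) together with $\tilde{\Psi}_k\perp\Psi_j\mid W$ yields expected info score per agent $=I(\tilde{\Psi}_k;W\mid\Psi_j)$, and an analogous calculation (using exchangeability of $\tilde{\Psi}_k,\tilde{\Psi}_j$ given $\Psi_i$, followed by the $i\leftrightarrow j$ symmetry of the prior) gives expected prediction score $=-I(\tilde{\Psi}_k;W\mid\Psi_j)$. Hence the expected total payment per agent is $(\alpha-1)\,I(\tilde{\Psi}_k;W\mid\Psi_j)$. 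Because $\tilde{\Psi}_k$ is a stochastic function of $\Psi_k$ independent of $(W,\Psi_j)$ given $\Psi_k$, Shannon's data-processing inequality gives $I(\tilde{\Psi}_k;W\mid\Psi_j)\le I(\Psi_k;W\mid\Psi_j)$ with equality at truth-telling, which establishes (ii), and for $\alpha>1$ also (iii).

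For (i), suppose every other agent tells the truth and agent $i$ uses some strategy $(\tilde{\sigma}_i=\tau(\Psi_i),\mathbf{\hat{p}}_i)$. The prediction subproblem is settled by strict propriety of the log scoring rule, which forces $\mathbf{\hat{p}}_i=\mathbf{q}_{\sigma_i}$. For the signal subproblem, the calculation parallels the symmetric case except that the reference agent still predicts the truthful $\mathbf{q}_{\Psi_j}$ (not $\mathbf{q}_{\Psi_j}\tau$), so one obtains
\[
\text{expected info score}\;=\;I(\tilde{\Psi}_i;W\mid\Psi_j)\;+\;\E_{\Psi_j}\!\bigl[D_{KL}(\mathbf{q}_{\Psi_j}\tau,\mathbf{q}_{\Psi_j})\bigr]\;-\;\E_{W}\!\bigl[D_{KL}(w\tau,w)\bigr].
\]
The first term is bounded by $I(\Psi_i;W\mid\Psi_j)$ (the truth-telling value) by DPI, and the remaining KL terms combine non-positively: since $\mathbf{q}_{\sigma_j}=\E[W\mid\Psi_j=\sigma_j]$ and $\mathbf{q}_{\sigma_j}\tau=\E[W\tau\mid\Psi_j=\sigma_j]$, joint convexity of $f$-divergence (Fact~\ref{fact:jointconvexity}) yields $\E_{\Psi_j}[D_{KL}(\mathbf{q}_{\Psi_j}\tau,\mathbf{q}_{\Psi_j})]\le\E_{W}[D_{KL}(w\tau,w)]$. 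Thus the deviation's info score is at most the truth-telling value, completing (i).

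The main technical obstacle is this asymmetric best-response step: because the reference agent's prediction is ``stale'' with respect to the lone deviator, the expected information score is not simply a conditional mutual information of the deviated reports but carries an extra KL-divergence gap, and closing this gap in the correct direction requires precisely the joint convexity of $f$-divergence (Fact~\ref{fact:jointconvexity}) applied to the Bayesian averaging of $w$ into $\mathbf{q}_{\sigma_j}$.
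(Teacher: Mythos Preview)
Your overall plan---translate BTS scores into Shannon information via Theorem~\ref{lem:gain} and then invoke the data-processing inequality---is exactly the paper's approach for (ii) and (iii). There is, however, a genuine gap: you analyze only \emph{symmetric} equilibria (``every agent plays strategy $\tau$''), while the theorem claims the bound against \emph{every} equilibrium. In an asymmetric equilibrium different agents' reports $\tilde\Psi_i$ have different conditional laws given $W$, so the identity ``expected info score $=I(\tilde\Psi_k;W\mid\Psi_j)$'' no longer holds for a fixed $k$. The paper closes this gap by introducing the report $\hat\Psi$ of a \emph{uniformly random} agent and the empirical report distribution $\hat W$; then $\hat W(s)=\Pr[\hat\Psi=s\mid\hat W]$ tautologically, Lemma~\ref{lem:ig} gives expected average info score $=I(\hat W;\hat\Psi\mid\Psi_j)$, and two DPI steps ($\hat W$ is a deterministic function of $W$, and $\hat\Psi$ is a garbling of $\Psi$) yield $I(\hat W;\hat\Psi\mid\Psi_j)\le I(W;\hat\Psi\mid\Psi_j)\le I(W;\Psi_i\mid\Psi_j)$. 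Your symmetric shortcut collapses the first of these DPI steps but does not cover the asymmetric case.

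For (i) your KL-decomposition is correct (and the joint-convexity step does close the gap in the right direction), but it is considerably more work than needed. The paper's proof of (i) (in the appendix) is a three-line argmax computation: writing the expected information score as $\E_{\Psi_j,W\mid\Psi_i=\sigma_i}\log\bigl(\Pr[W\mid\Psi_i=\sigma,\Psi_j]/\Pr[W\mid\Psi_j]\bigr)$ via conditional independence, one recognizes a log scoring rule $L(\Pr[\bm W\mid\Psi_i=\sigma_i,\Psi_j],\Pr[\bm W\mid\Psi_i=\sigma,\Psi_j])$ plus a term independent of $\sigma$, so strict propriety gives $\sigma^\star=\sigma_i$ directly---no DPI, no KL bookkeeping. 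Your route works, but the paper deliberately relegates (i) to this elementary argument because it does not require the information-theoretic framework.
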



\citet{prelec2004bayesian} uses some clever algebraic calculations to prove the main results. In the next section, we will apply our ``accuracy gain=information gain'' observation to map Bayesian Truth Serum \cite{prelec2004bayesian} into our information theoretical framework and show results (ii) and (iii) via applying the data processing inequality of Shannon mutual information. We put \citet{prelec2004bayesian}'s proof for results (i) in appendix since it is already sufficiently simple and not very related to our framework.

\subsubsection{Using Our Information Theoretic Framework to Analyze BTS}\label{sec:reprovebts}


A key observation of BTS is that when agents report the optimal predictions, the average information score is exactly the ``accuracy gain''---the accuracy of the posterior prediction for a \emph{random agent's report} conditioning on \emph{all agents' reports}, minus the accuracy of a \emph{random reference agent $j$}'s posterior prediction for the \emph{random agent's report} conditioning on agent $j$'s private signal. Based on Lemma~\ref{lem:gain}, this accuracy gain equals the Shannon mutual information between a \emph{ random agent's reported signal} and \emph{all agents' reports} conditioning on the \emph{random reference agent $j$}'s private signal $\Psi_j=\sigma_j$. Therefore, the expected information score can be represented as the form of Shannon mutual information. We have similar analysis for the prediction score. We formally state the above observation in Lemma~\ref{lem:ig}. Aided by this lemma, we will show results (ii) and (iii) via applying the information-monotone property of Shannon mutual information.

\begin{lemma}\label{lem:ig}
In BTS, when agents tell the truth, each agent $i$'s expected information score and prediction score are 
$$I(W;\Psi_i|\Psi_j),\qquad -I(W;\Psi_j|\Psi_i)$$ respectively, $\forall j\neq i$. When the agents play a non-truthful equilibrium, we denote random variable $\hat{\Psi}$ as the reported signal of an agent who is picked uniformly at random, the expected average information score and prediction score are $$I(\hat{W};\hat{\Psi}|\Psi_j), \qquad -I(\hat{W};\hat{\Psi}|\Psi_i)$$ respectively, $\forall i,j$.
\end{lemma}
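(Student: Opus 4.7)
My plan is to recognize both of BTS's payment components as expected ``accuracy gains'' and then invoke Theorem~\ref{lem:gain} to turn each gain into a conditional Shannon mutual information. Three enabling facts drive the computation: (i) by the large group assumption (Assumption~\ref{assume:lg}) the empirical frequency $fr(\hat{\sigma}_k \mid \hat{\bm{\sigma}}_{-k})$ converges to $\hat{W}(\hat{\sigma}_k)$, where $\hat{W}$ is the distribution of reported signals induced by $W$ and the agents' (symmetric) equilibrium strategies; (ii) in any equilibrium the prediction $\mathbf{\hat{p}}_j$ is chosen to maximize a log scoring rule on a random reference report, so strict properness forces it to equal the posterior $\Pr[\hat{\Psi} \mid \Psi_j]$ over a randomly chosen agent's reported signal; and (iii) in the large group limit a uniformly chosen reference agent is almost surely distinct from any fixed agent $k$, and together with the conditional independence of private signals given $W$ (Assumption~\ref{assume:ci}) this yields the key identity $\Pr[\hat{\Psi} \mid \hat{W}, \Psi_k] = \hat{W}(\hat{\Psi})$.

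Granted (i)--(iii), the average information score simplifies to
$$\E\!\left[\log \hat{W}(\hat{\Psi}) - \log \Pr[\hat{\Psi} \mid \Psi_j]\right] = \E\!\left[\log \Pr[\hat{\Psi} \mid \hat{W}, \Psi_j] - \log \Pr[\hat{\Psi} \mid \Psi_j]\right],$$
which is exactly the expected accuracy gain from learning $\hat{W}$ when predicting $\hat{\Psi}$ on top of $\Psi_j$. Theorem~\ref{lem:gain} with $X=\hat{W}$, $Y=\hat{\Psi}$, $Z=\Psi_j$ then identifies this gain as $I(\hat{W};\hat{\Psi} \mid \Psi_j)$. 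The same maneuver applied to the prediction score, with the posterior and prior terms swapped, gives
$$\E\!\left[\log \Pr[\hat{\Psi} \mid \Psi_i] - \log \hat{W}(\hat{\Psi})\right] = -I(\hat{W};\hat{\Psi} \mid \Psi_i).$$
Specializing to truth-telling (so that $\hat{W}=W$ and $\hat{\Psi}$ can be taken as $\Psi_i$) then yields the first half of the lemma.

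The main obstacle I expect is the clean justification of step (iii). Naively $\hat{W}$ is a strictly coarser function of $W$ than $W$ itself whenever the equilibrium strategy map fails to be a permutation, so one might worry that $\Psi_k$ still carries residual information about $\hat{\Psi}$ after conditioning only on $\hat{W}$. The resolution is that conditioning on $W$ already makes $\hat{\Psi}$ (a reported signal of an almost surely different agent) independent of $\Psi_k$, and since $\Pr[\hat{\Psi} \mid W]$ depends on $W$ only through $\hat{W}$, integrating out the residual randomness in $W \mid \hat{W}$ causes the dependence on $\Psi_k$ to vanish, leaving $\hat{W}(\hat{\Psi})$. The remaining ingredients---LLN convergence of empirical frequencies and the strict properness argument for identifying $\mathbf{\hat{p}}_j$---are standard and only lightly use the symmetric, conditionally independent, large-group structure.
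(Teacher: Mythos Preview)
Your proposal is correct and follows essentially the same approach as the paper: both identify the scores as expected log-accuracy gains and invoke Theorem~\ref{lem:gain} after using conditional independence to rewrite $\Pr[\hat{\Psi}\mid \hat{W}]$ as $\Pr[\hat{\Psi}\mid \hat{W},\Psi_k]$, together with strict properness to pin down $\mathbf{\hat{p}}_j=\Pr[\hat{\Psi}\mid\Psi_j]$ in equilibrium. The paper proceeds in the opposite order (truth-telling first, then ``replace $W,\Psi_i$ by $\hat W,\hat\Psi$''), and is in fact terser than you are about the step you flag as the main obstacle---your justification that $\Pr[\hat{\Psi}\mid W,\Psi_k]$ depends on $W$ only through $\hat W$, hence integrating $W\mid\hat W,\Psi_k$ leaves $\hat W(\hat\Psi)$, is exactly the content the paper leaves implicit. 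One small remark: you need not assume the equilibrium strategies are symmetric; the paper allows each agent her own transition matrix $M_i$, and $\hat W=\tfrac{1}{n}\sum_i M_i^{T}\omega$ is still a deterministic function of $W$, which is all your argument for (iii) uses.
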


\begin{proof}

When agents tell the truth, each agent $i$'s expected information score is 
\begin{align*}
    &\E_{\Psi_i,\Psi_j,W}L(\Psi_i,Pr[\bm{\Psi}_i|W])-L(\Psi_i,Pr[\bm{\Psi}_i|\Psi_j])\\ \tag{Conditional independence}
    &=\E_{\Psi_i,\Psi_j,W} L(\Psi_i,Pr[\bm{\Psi}_i|W,\Psi_j])-L(\Psi_i,Pr[\bm{\Psi}_i|\Psi_j])\\ \tag{Theorem~\ref{lem:gain} / Expected accuracy gain equals information gain}
    &=I(W;\Psi_i|\Psi_j)
\end{align*}
when she is paired with reference agent $j$. Since the prior is symmetric, $I(W;\Psi_i|\Psi_j)$ is independent of the identity of $j$ if $j\neq i$.

In any equilibrium $\mathbf{s}$, based on the properties of proper scoring rules, each agent $j$ will always maximize his expected prediction score by truthfully reporting his predictions. Moreover, for agent $j$, his reference agent is picked uniformly at random. Therefore, $$\mathbf{\hat{p}}_j(\hat{\sigma})=Pr[\hat{\Psi}=\hat{\sigma}|\Psi_j=\sigma_j]$$ where $\hat{\Psi}$ is the reported signal of an agent who is picked uniformly at random. 

Then we can replace $W,\Psi_i$ by $\hat{W},\hat{\Psi}$ in the above equations and prove that the expected average information score is $$I(\hat{W};\hat{\Psi}|\Psi_j).$$

The analysis for the expected prediction score is the same as the above analysis except that we need to exchange $i$ and $j$.

\end{proof}


\begin{proof}[Proof of Theorem~\ref{bts} (ii), (iii)]


Based on Lemma~\ref{lem:ig}, when agents play an equilibrium, the expected average information score equals 
\begin{align*}
    I(\hat{W};\hat{\Psi}|\Psi_j)&=\sum_{\sigma_j} \Pr[\Psi_j=\sigma_j] I(\hat{W};\hat{\Psi}|\Psi_j=\sigma_j)\\
    \tag{Data processing inequality}
    &\leq \sum_{\sigma_j} \Pr[\Psi_j=\sigma_j] I(\hat{W},W;\hat{\Psi}|\Psi_j=\sigma_j)
\end{align*}

Note that, when the number of agents is infinite, since every agent's strategy is independent with each other, we can see $W$ determines $\hat{W}$ \footnote{When $W=\omega$, $\hat{W}=\frac{1}{n}\sum_i M_i^T \omega$ where $M_i^T$ is the transpose matrix of the transition matrix corresponded to agent $i$'s strategy for signal reporting, and the distribution $\omega$ is represented by a column vector.}. Therefore, 

\begin{align*}
 &\sum_{\sigma_j} \Pr[\Psi_j=\sigma_j] I(\hat{W},W;\hat{\Psi}|\Psi_j=\sigma_j)\\
    &= \sum_{\sigma_j} \Pr[\Psi_j=\sigma_j] I({W};\hat{\Psi}|\Psi_j=\sigma_j)\\
    \tag{Data processing inequality and the symmetric prior assumption}
    &\leq \sum_{\sigma_j} \Pr[\Psi_j=\sigma_j] I(W;\Psi_i|\Psi_j=\sigma_j),\forall i\neq j\\
    &= I(W;\Psi_i|\Psi_j), \forall i\neq j
\end{align*}
Thus, the expected average information score is maximized when everyone tells the truth.


It is left to show for $ \alpha>1 $, in BTS($ \alpha $), the agent-welfare is maximized by truth-telling over all equilibria. Lemma~\ref{lem:ig} shows that when the prior is symmetric, the sum of the expected prediction scores equals the sum of the expected information scores in any equilibrium. Thus, when $ \alpha>1 $, the agent welfare is proportional to the sum of the expected information scores which is maximized by truth-telling over all equilibria. 

\end{proof}

It is natural to ask if we replace the $-log$ in BTS's information score by other convex functions, what property of BTS we can still preserve. The below theorem shows that even though we may not ganrantee the truthful property of BTS, the average expected information score is still monotone with the amount of information for any convex function we use. 

\begin{theorem}
If we replace the information score in BTS by $f( \frac{\hat{p}_j(\hat{\sigma}_i)}{fr(\hat{\sigma}_i|\hat{\bm{\sigma}}_{-i})})$ where $f$ is a convex function, result (ii)---the expected average information score when everyone tells the truth is higher than that in any other equilibrium---is preserved.
\end{theorem}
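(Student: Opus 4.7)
The plan is to replay the argument of Theorem~\ref{bts}(ii), substituting $f$-divergence for Shannon mutual information and Fact~\ref{lem:im} for the data-processing inequality of Shannon MI. Nothing about the prediction score changes, so the argument from Section~\ref{sec:reprovebts} still forces every agent $j$ in any equilibrium to report the truthful posterior $\hat{p}_j(s)=\Pr[\hat{\Psi}=s\mid \Psi_j]$, where $\hat{\Psi}$ denotes the reported signal of a uniformly random agent other than $j$. By Assumption~\ref{assume:lg} combined with Assumption~\ref{assume:ci}, $fr(\hat{\sigma}_i\mid \hat{\bm{\sigma}}_{-i})$ concentrates on $\Pr[\hat{\Psi}=\hat{\sigma}_i\mid W]$ in the large-group limit.

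Second, I would compute the expected per-agent information score. Substituting the two identifications above into the modified score and using that $\hat{\Psi}$ is conditionally independent of $\Psi_j$ given $W$, a short calculation (factor the joint law as $\Pr[W]\Pr[\Psi_j\mid W]\Pr[\hat{\Psi}\mid W]$ and collect the sum over $\hat{\Psi}$) gives
$$\E\!\left[f\!\left(\frac{\hat{p}_j(\hat{\Psi})}{\Pr[\hat{\Psi}\mid W]}\right)\right]=\E_{W,\Psi_j}\,D_f\!\left(\Pr[\hat{\Psi}\mid W],\,\Pr[\hat{\Psi}\mid \Psi_j]\right).$$
This is the $f$-divergence analog of Lemma~\ref{lem:ig}; when $f=-\log$ it reduces to $I(W;\hat{\Psi}\mid \Psi_j)$.

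Third, I would collapse the per-agent reporting strategies of the (possibly asymmetric, possibly mixed) equilibrium into a single population-averaged transition matrix $M$, exactly as in the analysis of BTS. Then $\Pr[\hat{\Psi}\mid W]=M^T\Pr[\Psi_i\mid W]$ and $\Pr[\hat{\Psi}\mid \Psi_j]=M^T\Pr[\Psi_i\mid \Psi_j]$ for any fixed $i\neq j$, by symmetry of the prior. The information monotonicity of $f$-divergence (Fact~\ref{lem:im}) applied pointwise in $(w,\sigma_j)$ yields
$$D_f\!\left(M^T\Pr[\Psi_i\mid W=w],\,M^T\Pr[\Psi_i\mid \Psi_j=\sigma_j]\right)\le D_f\!\left(\Pr[\Psi_i\mid W=w],\,\Pr[\Psi_i\mid \Psi_j=\sigma_j]\right),$$
and averaging over $(W,\Psi_j)$ shows the expected average information score in any equilibrium is upper bounded by the same quantity under truth-telling (where $\hat{\Psi}=\Psi_i$ so $M$ is the identity), establishing result~(ii).

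The hard part is mostly bookkeeping: justifying that an asymmetric, potentially mixed strategy profile across an infinite population can be replaced, for the purposes of the score expectation, by one ``average'' transition matrix $M$ acting on the true-signal distribution. This is handled by averaging over the random choice of the scored agent, which absorbs the population heterogeneity into a single $M$ and lets us invoke Fact~\ref{lem:im} in one shot rather than term by term. No new ingredient beyond the data-processing inequality for $f$-divergence is needed; note in particular that convexity of $f$ alone (not $f(1)=0$) suffices for this inequality, matching the hypothesis of the theorem.
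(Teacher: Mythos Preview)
Your proposal is correct and follows essentially the same route as the paper. The paper packages the expected average information score as the conditional $f$-mutual information $MI^f(\hat{W};\hat{\Psi}\mid\Psi_j)$ (respectively $MI^f(W;\Psi_i\mid\Psi_j)$ under truth-telling) and then invokes the data-processing inequality for $f$-mutual information from Theorem~\ref{lem:imdpi}, mirroring the proof of Theorem~\ref{bts}(ii); you instead write the score as $\E_{W,\Psi_j}D_f(\Pr[\hat{\Psi}\mid W],\Pr[\hat{\Psi}\mid\Psi_j])$ and apply Fact~\ref{lem:im} directly to the population-averaged transition matrix, which is the same argument unwrapped one layer (and lets you skip the intermediate $\hat{W}$ step since $\Pr[\hat{\Psi}\mid\hat{W}]=\Pr[\hat{\Psi}\mid W]$ in the large-group limit).
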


\begin{proof}
When agents tell the truth, each agent $i$'s expected information score is 
\begin{align*}
    &\E_{\Psi_i,\Psi_j,W}f(\frac{Pr[{\Psi}_i|\Psi_j]}{Pr[{\Psi}_i|W]})\\ \tag{Conditional independence}
    &=\E_{\Psi_i,\Psi_j,W} f(\frac{Pr[{\Psi}_i|\Psi_j]}{Pr[{\Psi}_i|W,\Psi_j]}) \\ &=\E_{\Psi_i,\Psi_j,W} f(\frac{Pr[{\Psi}_i|\Psi_j]Pr[W|\Psi_j]}{Pr[{\Psi}_i,W|\Psi_j]}) \\
    &=MI^f(W;\Psi_i|\Psi_j)
\end{align*}

In any equilibrium $\mathbf{s}$, based on the properties of proper scoring rules, each agent $j$ will always maximize their prediction by truthfully report their predictions, thus, $$\mathbf{\hat{p}}_j(\hat{\sigma})=Pr[\hat{\Psi}=\hat{\sigma}|\Psi_j=\sigma_j].$$ 

Then we can replace $W,\Psi_i$ by $\hat{W},\hat{\Psi}$ in the above equations and prove that the expected average information score is $$MI^f(\hat{W};\hat{\Psi}|\Psi_j).$$

With the similar proof of Theorem~\ref{bts}, the theorem follow immediately from the data processing inequality of $f$-mutual information. 
\end{proof}

\paragraph{BTS and SPPM} Recall that SPPM has the ``unknown strategy'' problem---when agents play other strategies, the mechanism needs to know the strategies to update the prior and posterior correctly. Bayesian Truth Serum\cite{prelec2004bayesian} cleverly solves this problem by asking agents their predictions for other agents' reported signals and setting the new prior as agents' predictions and the new posterior as the prediction conditioning on the aggregated information. Bayesian Truth Serum pays agents according to the accuracy of the new posterior minus that of the new prior. In BTS, both the new prior and new posterior will be automatically updated according to agents' strategies when agents play an equilibrium. Therefore, without the knowledge of the strategies, BTS is still focal.

\section{Impossibility (Tightness) Results}\label{sec:impossibility}

In this section, we will show an impossibility result that implies the optimality of the information theoretical framework. We will see when the mechanism knows no information about the prior profile, no non-trivial mechanism has truth-telling as the \emph{unique} ``best'' equilibrium. Thus, it is too much to ask for a mechanism where truth-telling is paid strictly higher than any other non-truthful equilibrium. The best we can hope is to construct a mechanism where truth-telling is paid strictly higher than all non-truthful equilibria / strategy profiles excluding all \textit{permutation strategy profiles} (Definition~\ref{def:psp}) when the prior is symmetric; or all non-truthful equilibria / strategy profiles excluding all \textit{generalized permutation strategy profiles} (Definition~\ref{def:gpsp}) when the prior may be asymmetric. Because permutation strategies seem unnatural, risky, and require the same amount of effort as truth-telling these are still strong guarantees.

Actually we will show a much more general result in this section that is sufficiently strong to imply the optimality of the framework. Recall that a mechanism is strictly $\focal$ if truth-telling is strictly better than any other strategy profiles excluding generalized permutations strategy profiles. The results of this section imply that no truthful detail free mechanism can pay truth-telling $\tru$ strictly better than all generalized permutations strategy profiles (Definition~\ref{def:gpsp}) \emph{no matter what definition is the truth-telling strategy $\tru$}. 

We omit the prior in the definition of strategy before since we always fix the prior. However, when proving the impossibility results, the prior is not fixed. Therefore, we use the original definition of strategy in this section.  

\begin{definition}[Strategy]
Given a mechanism $\mathcal{M}$, we define the strategy of $\mathcal{M}$ for setting $(n,\Sigma)$ as a mapping $s$ from $(\sigma,Q)$ (private signal and prior) to a probability distribution over $\mathcal{R}$.
\end{definition}

\paragraph{(Generalized) Permutation Strategy Profiles}

A permutation $\pi:\Sigma\mapsto\Sigma$ can be seen as a relabelling of private information. Given two lists of permutations $\bm{\pi}=(\pi_1,\pi_2,...,\pi_n)$, $\bm{\pi}'=(\pi'_1,\pi'_2,...,\pi'_n)$, we define the product of $\bm{\pi}$ and $\bm{\pi}'$ as $$\bm{\pi}\cdot \bm{\pi}':=(\pi_1\cdot\pi'_1,\pi_2\cdot\pi'_2,...,\pi_n\cdot\pi'_n)$$
where for every $i$, $\pi_i\cdot\pi'_i$ is the group product of $\pi_i$ and $\pi'_i$ such that $\pi_i\cdot\pi'_i$ is a new permutation with $\pi_i\cdot\pi'_i(\sigma)=\pi_i(\pi'_i(\sigma))$ for any $\sigma$.

We also define $\bm{\pi}^{-1}$ as $(\pi_1^{-1},\pi_2^{-1},...,\pi_n^{-1})$.

By abusing notation a little, we define $\bm{\pi}:\mathcal{Q}\mapsto\mathcal{Q}$ as a mapping from a prior $Q$ to a \textit{generalized permuted prior} $\bm{\pi}(Q)$ where for any $\sigma_1,\sigma_2,...,\sigma_n\in \Sigma$,
$${\bm{\pi}( Q)}(\sigma_1,\sigma_2,...,\sigma_n)=Q(\pi_1^{-1}(\sigma_1),\pi_2^{-1}(\sigma_2),...,\pi_2^{-1}(\sigma_n))$$ where $\sigma_i$ is the private signal of agent $i$. Notice that it follows that:
$${\bm{\pi}( Q)}(\pi_1(\sigma_1),\pi_2(\sigma_2),...,\pi_2(\sigma_n))=Q(\sigma_1,\sigma_2,...,\sigma_n).$$  Intuitively, $\bm{\pi}(Q)$ is the same as $Q$ after the signals are relabelled according to $\bm{\pi}$.

\begin{definition}[Permutation List Operator on Strategy]
For every agent $i$, given her strategy is $s_i$ and a permutation list $\bm{\pi}$, we define $\bm{\pi}(s_i)$ as the strategy such that  $\bm{\pi}(s_i)(\sigma,Q)=s_i(\pi_i(\sigma),\bm{\pi}(Q))$ for every private information $\sigma$ and prior $Q$.
\end{definition}

\begin{definition}[Permutation List Operator on Strategy Profile]
Given a permutation list $\bm{\pi}$, for any strategy profile $\mathbf{s}$, we define $\bm{\pi}(\mathbf{s})$ as a strategy profile with $\bm{\pi}(\mathbf{s})=(\bm{\pi}(s_1),\bm{\pi}(s_2),...,\bm{\pi}(s_n))$.
\end{definition}

Note that $\bm{\pi}^{-1}\bm{\pi} Q=Q$ which implies $\bm{\pi}^{-1}\bm{\pi}(\mathbf{s})=\mathbf{s}$.

We say $(\pi,\pi,...,\pi)$ is a \emph{symmetric} permutation list for any permutation $\pi$. For convenience, we write $(\pi,\pi,...,\pi)(Q)$ as $\pi (Q)$, $(\pi,\pi,...,\pi)(s)$ as $\pi(s)$ and $(\pi,\pi,...,\pi)(\bm{s})$ as $\pi(\bm{s})$.

We define a permutation strategy (profile) and then give a generalized version of this definition.

\begin{definition}[Permutation Strategy ]
We define a strategy $s$ as a permutation strategy if there exists a permutation $\pi$ such that $s=\pi(\tru)$.
\end{definition}

\begin{definition}[Permutation Strategy Profile ]\label{def:psp}
We define a strategy profile $\mathbf{s}$ as a permutation strategy profile if there exists a permutation $\pi$ such that $\mathbf{s}=\pi(\tru)$.
\end{definition}

\begin{definition}[Generalized Permutation Strategy Profile]\label{def:gpsp}
We define a strategy profile $\mathbf{s}$ as a generalized permutation strategy profile if there exists a permutation list $\bm{\pi}=(\pi_1,\pi_2,...,\pi_n)$ such that $\mathbf{s}=\bm{\pi}(\tru)=(\pi_1,\pi_2,...,\pi_n)(\tru)$.
\end{definition}

\subsection{Tightness Proof}

\begin{definition}
Given a prior profile $\mathbf{Q}=(Q_1,Q_2,...,Q_n)$ and a strategy profile $\mathbf{s}=(s_1,s_2,...,s_n)$, and a mechanism $\mathcal{M}$,  for every agent $i$, we define $$\nu^{\mathcal{M}}_{i}(n,\Sigma,\mathbf{Q},\mathbf{s})$$ as agent $i$'s ex ante expected payment when agents play $\mathbf{s}$ and all agents' private information is drawn from $Q_{i}$ that is, from agent $i$'s viewpoint.
\end{definition}

The impossibility result is stated as below:

\begin{proposition}\label{prop:impossibility}
Let $\mathcal{M}$ be a mechanism that does not know the prior profile, then for any strategy profile $s$, and any permutation list $\bm{\pi}$:\\
(1) $\mathbf{s}$ is a strict Bayesian Nash equilibrium of $\mathcal{M}$ for any prior profile iff $\bm{\pi}(\mathbf{s})$ is a strict Bayesian Nash equilibrium of $\mathcal{M}$ for any prior profile.\\
(2) For every agent $i$, there exists a prior profile $\mathbf{Q}$ such that $\nu^{\mathcal{M}}_{i}(n,\Sigma,\mathbf{Q},\mathbf{s})\leq \nu^{\mathcal{M}}_{i}(n,\Sigma,\mathbf{Q},\bm{\pi}(\mathbf{s})) $.

Additionally, if the mechanism knows the prior is symmetric, the above results only hold for any symmetric permutation list $(\pi,\pi,...,\pi)$. 
\end{proposition}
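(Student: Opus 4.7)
The plan hinges on one symmetry identity, which I will call the \emph{report-invariance identity}: for any prior profile $\mathbf{Q}$, strategy profile $\mathbf{s}$, permutation list $\bm{\pi}$, and agent $i$,
$$\nu^{\mathcal{M}}_{i}(n,\Sigma,\mathbf{Q},\bm{\pi}(\mathbf{s})) \;=\; \nu^{\mathcal{M}}_{i}(n,\Sigma,\bm{\pi}(\mathbf{Q}),\mathbf{s}).$$
First I would prove this via a coupling argument. Because $\mathcal{M}$ does not consult the prior, $\nu_i$ depends on the reports only through their joint distribution (weighted by $i$'s beliefs on signals). On the left, signals $(\sigma_1,\ldots,\sigma_n)\sim Q_i$ and agent $j$ reports $\bm{\pi}(s_j)(\sigma_j,Q_j)=s_j(\pi_j(\sigma_j),\bm{\pi}(Q_j))$ by the definition of the permutation operator. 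Setting $\sigma'_j := \pi_j(\sigma_j)$, the definition of $\bm{\pi}(Q_i)$ (pushforward under coordinate-wise $\pi$) gives $(\sigma'_1,\ldots,\sigma'_n)\sim \bm{\pi}(Q_i)$. On the right, signals $\sigma''_j\sim \bm{\pi}(Q_i)$ jointly and agent $j$ reports $s_j(\sigma''_j,\bm{\pi}(Q_j))$. Identifying $\sigma'_j$ with $\sigma''_j$ couples the two worlds so that the report vectors have identical distributions, whence the $\nu_i$'s agree.

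Part~(1) is then immediate. Given $\mathbf{s}$ is a strict BNE for every prior profile, I fix $\mathbf{Q}$ and an agent $i$ and consider an arbitrary deviation $s'_i \neq \bm{\pi}(s_i)$. Using $\bm{\pi}^{-1}\bm{\pi}=\mathrm{id}$ on strategies, rewrite $(s'_i,\bm{\pi}(s_{-i}))=\bm{\pi}\bigl(\bm{\pi}^{-1}(s'_i),s_{-i}\bigr)$ and apply the identity twice:
$$\nu_i(\mathbf{Q},(s'_i,\bm{\pi}(s_{-i}))) = \nu_i(\bm{\pi}(\mathbf{Q}),(\bm{\pi}^{-1}(s'_i),s_{-i})) < \nu_i(\bm{\pi}(\mathbf{Q}),\mathbf{s}) = \nu_i(\mathbf{Q},\bm{\pi}(\mathbf{s})),$$
where the strict inequality invokes that $\mathbf{s}$ is a strict BNE at $\bm{\pi}(\mathbf{Q})$ and that $\bm{\pi}^{-1}(s'_i)\neq s_i$ (because $\bm{\pi}$ is a bijection on strategies). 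The converse direction is obtained by the same reasoning with $\bm{\pi}^{-1}$ in place of $\bm{\pi}$.

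Part~(2) is a cyclic-pigeonhole argument. Each $\pi_j$ is a permutation of a finite set, so $\bm{\pi}^k=\mathrm{id}$ for some $k\geq 1$, and hence $\bm{\pi}^k(\mathbf{Q}_0)=\mathbf{Q}_0$ for every prior profile $\mathbf{Q}_0$. Fix any $\mathbf{Q}_0$ and define $v_j := \nu^{\mathcal{M}}_i(n,\Sigma,\bm{\pi}^j(\mathbf{Q}_0),\mathbf{s})$ for $j=0,1,\ldots,k$, so that $v_k=v_0$. A sequence that returns to its starting value cannot be strictly decreasing throughout, hence some $j^\ast \in \{0,\ldots,k-1\}$ satisfies $v_{j^\ast}\leq v_{j^\ast+1}$. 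Taking $\mathbf{Q}:=\bm{\pi}^{j^\ast}(\mathbf{Q}_0)$ gives $\bm{\pi}(\mathbf{Q})=\bm{\pi}^{j^\ast+1}(\mathbf{Q}_0)$, and the report-invariance identity yields
$$\nu_i(\mathbf{Q},\bm{\pi}(\mathbf{s})) = \nu_i(\bm{\pi}(\mathbf{Q}),\mathbf{s}) = v_{j^\ast+1} \geq v_{j^\ast} = \nu_i(\mathbf{Q},\mathbf{s}),$$
which is the required inequality.

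For the symmetric-prior addendum, the argument must stay inside the class of symmetric priors, and this is the reason to restrict to \emph{symmetric} lists $(\pi,\pi,\ldots,\pi)$: a direct check shows $\bm{\pi}(Q)$ is symmetric for every symmetric $Q$ iff all coordinates of $\bm{\pi}$ coincide, since otherwise the formula $\bm{\pi}(Q)(\sigma_1,\ldots,\sigma_n)=Q(\pi_1^{-1}(\sigma_1),\ldots,\pi_n^{-1}(\sigma_n))$ fails to be invariant under reindexing of the $\sigma_j$. Under this restriction the coupling, the deviation transport, and the orbit pigeonhole all go through verbatim inside the symmetric class. The main obstacle is purely bookkeeping: keeping clear the three parallel actions of $\bm{\pi}$ on priors, on strategies, and on signals, and verifying that the composition $\bm{\pi}\bm{\pi}^{-1}$ really is the identity on strategies as written in the paper; I would handle this by stating the identity once and then reducing every subsequent claim to an application of it rather than re-unwinding the definitions.
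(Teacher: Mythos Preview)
Your proposal is correct and is essentially the same argument as the paper's. Your \emph{report-invariance identity} $\nu_i(\mathbf{Q},\bm{\pi}(\mathbf{s}))=\nu_i(\bm{\pi}(\mathbf{Q}),\mathbf{s})$ is exactly what the paper packages as ``indistinguishable scenarios'' (the paper writes it as $(\mathbf{Q},\mathbf{s})\approx(\bm{\pi}^{-1}(\mathbf{Q}),\bm{\pi}(\mathbf{s}))$, which is your identity after the substitution $\mathbf{Q}\mapsto\bm{\pi}(\mathbf{Q})$), proved by the same coupling $\sigma_j\mapsto\pi_j(\sigma_j)$. For Part~(1) you unwrap the deviation transport explicitly where the paper invokes its Observation, and for Part~(2) your direct cyclic pigeonhole on $v_j=\nu_i(\bm{\pi}^j(\mathbf{Q}_0),\mathbf{s})$ is the contrapositive of the paper's proof by contradiction along the same orbit $A_k=(\bm{\pi}^k(\mathbf{Q}),\mathbf{s})$; both use that $\bm{\pi}$ has finite order. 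The symmetric-prior addendum is handled identically in both.
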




Proposition~\ref{prop:impossibility} implies

\begin{corollary}\label{cor:im}
Let $\mathcal{M}$ be a truthful mechanism, given truth-telling strategy $\tru$,
when $\mathcal{M}$ knows no information about the prior profile of agents, if there exists a permutation list $\bm{\pi}$ such that $\bm{\pi}(\tru)\neq \tru$, $\tru$ cannot be always paid strictly higher than all generalized permutation strategy profiles.

Additionally, if the mechanism knows the prior is symmetric, the above results only hold for any symmetric permutation list $(\pi,\pi,...,\pi)$ and all permutation strategy profiles. 
\end{corollary}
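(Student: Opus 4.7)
The plan is to derive Corollary~\ref{cor:im} as an essentially immediate consequence of Proposition~\ref{prop:impossibility}, specializing the general strategy profile $\mathbf{s}$ in part (2) of that proposition to the truth-telling profile $\tru$. Since the proposition does the heavy lifting, the corollary reduces to a translation between ``there exists a prior profile where the inequality reverses'' (the proposition's conclusion) and ``cannot always be paid strictly higher than'' (the corollary's conclusion).

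First I would fix the witness $\bm{\pi}$ guaranteed by the hypothesis, namely a permutation list with $\bm{\pi}(\tru)\neq \tru$, and observe that $\bm{\pi}(\tru)$ is by definition a generalized permutation strategy profile genuinely distinct from $\tru$. Next I would invoke Proposition~\ref{prop:impossibility}(2) with $\mathbf{s}=\tru$ and this particular $\bm{\pi}$: for every agent $i$ there exists a prior profile $\mathbf{Q}$ with
$$\nu^{\mathcal{M}}_{i}(n,\Sigma,\mathbf{Q},\tru) \;\leq\; \nu^{\mathcal{M}}_{i}(n,\Sigma,\mathbf{Q},\bm{\pi}(\tru)).$$
In particular, under this $\mathbf{Q}$, agent $i$'s expected payment under $\tru$ fails to strictly exceed her payment under the generalized permutation profile $\bm{\pi}(\tru)$. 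Hence $\tru$ is not uniformly, over all prior profiles, paid strictly higher than every generalized permutation strategy profile, which is exactly the corollary's conclusion. For the symmetric-prior addendum, I would repeat the same argument but restrict attention to symmetric permutation lists $(\pi,\pi,\ldots,\pi)$, using the final sentence of Proposition~\ref{prop:impossibility} which supplies the analogous payment comparison for a mechanism that is allowed to know the prior is symmetric; the witness $\bm{\pi}$ and the class of ``alternative'' profiles then shrink correspondingly, from generalized permutation strategy profiles to permutation strategy profiles.

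The main subtlety, rather than an obstacle, is to notice that the ``truthful'' hypothesis on $\mathcal{M}$ is not actually invoked: part (2) of the proposition is a pure statement about strategy profiles and requires no equilibrium property, so the corollary follows without needing part (1) at all. One should also verify that the hypothesis ``there exists $\bm{\pi}$ with $\bm{\pi}(\tru)\neq \tru$'' is non-vacuous (otherwise the class of non-truthful generalized permutation profiles would be empty and the statement would be trivially meaningless)---this is exactly why the hypothesis is present. The genuinely difficult content lives inside Proposition~\ref{prop:impossibility} itself (the relabelling symmetry between a prior and its permuted counterpart combined with detail-freeness of $\mathcal{M}$), which the corollary is entitled to assume.
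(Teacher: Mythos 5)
Your proposal is correct and matches the paper's route exactly: the paper offers no separate proof of Corollary~\ref{cor:im} beyond the remark that Proposition~\ref{prop:impossibility} implies it, and the intended deduction is precisely your specialization of part~(2) to $\mathbf{s}=\tru$ with the witness $\bm{\pi}$ satisfying $\bm{\pi}(\tru)\neq\tru$, together with the symmetric-list restriction for the symmetric-prior addendum. Your observation that only part~(2) (and not the truthfulness hypothesis or part~(1)) is actually used is accurate.
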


We note that the requirement that there exists $\bm{\pi}$ such that ${\bm{\pi}}(\tru)\neq \tru$ only fails for very trivial mechanisms where the truthfully reported strategy does not depend on the signal an agent receives.

The key idea to prove this theorem is what we refer to as \textbf{Indistinguishable Scenarios}:

\begin{definition}[Scenario]
	We define a scenario for the setting $(n,\Sigma)$ as a tuple $(\mathbf{Q},\mathbf{s})$ where $\mathbf{Q}$ is a prior profile, and $\mathbf{s}$ is a strategy profile. 
\end{definition}

Given mechanism $\mathcal{M}$, for any scenario $A=(\mathbf{Q}_A,\mathbf{s}_A)$, we write $\nu^{\mathcal{M}}_{i_A}(n,\Sigma,A)$ as agent $i_A$'s ex ante expected payment when agents play $\mathbf{s}_A$ and all agents' private signals are drawn from $Q_{i_A}$.

For two scenarios $A=(\mathbf{Q}_A,\mathbf{s}_A)$, $B=(\mathbf{Q}_B,\mathbf{s}_B)$ for setting $(n,\Sigma)$, let $\sigma_A:=(\sigma_{1_A},\sigma_{2_A},...,
\sigma_{n_A})$ be agents $(1_A,2_A,...,n_A)$' private signals respectively in scenario $A$, $\sigma_B:=(\sigma_{1_B},\sigma_{2_B},...,
\sigma_{n_B})$ be agents $(1_B,2_B,...,n_B)$' private signals respectively in scenario $B$. 

\begin{definition}[Indistinguishable Scenarios]\label{indistinguishable}
	We say two scenarios $A,B$ are indistinguishable $A\approx B$ if there is a coupling of the random variables $\sigma_A$ and $\sigma_B$ such that $\forall i$, $s_{i_A}(\sigma_{i_A},Q_{i_A})=s_{i_B}(\sigma_{i_B},Q_{i_B})$ and agent $i_A$ has the same belief about the world as agent $i_B$, in other words, for every $j$, $Pr(\hat{r}_{j_A}=\hat{r}|\sigma_{i_A},\mathbf{Q}_{A},\mathbf{s}_A)=Pr(\hat{r}_{j_B}=\hat{r}|\sigma_{i_B},\mathbf{Q}_{B},\mathbf{s}_B)$ $\forall \hat{r}\in\mathcal{R}$. 
\end{definition}

Now we will prove two properties of indistinguishable scenarios which are the main tools in the proof for our impossibility result.

\begin{observation}
	If $(\mathbf{Q}_A,\mathbf{s}_A)\approx (\mathbf{Q}_B,\mathbf{s}_B)$, then (i) for any mechanism $\mathcal{M}$,  $\mathbf{s}_A$ is a (strict) equilibrium for the prior profile $\mathbf{Q}_A$ iff $\mathbf{s}_B$ is a (strict) equilibrium for the prior profile $\mathbf{Q}_B$. (ii) $\forall i$, $\nu^{\mathcal{M}}_{i_A}(n,\Sigma,A)=\nu^{\mathcal{M}}_{i_B}(n,\Sigma,B)$
\end{observation}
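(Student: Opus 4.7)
The overall approach is to leverage the coupling promised by indistinguishability so that, from any agent $i$'s viewpoint, the entire joint distribution of the report profile $\hat{\mathbf{r}}$ agrees across scenarios $A$ and $B$. Since the mechanism's payment is a deterministic function $M:\mathcal{R}^n\to\mathbb{R}^n$ of the report profile, equality of these distributions will immediately give both (ii) and (i), so I would organize the proof around first establishing a joint coupling of the two scenarios' reports and then reading off the two consequences.

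For (ii): fix $i$ and take the coupling of $\sigma_A$ and $\sigma_B$ from Definition~\ref{indistinguishable}, together with a further coupling of each strategy's internal randomization so that for every $j$ the reports $\hat{r}_{j_A}$ and $\hat{r}_{j_B}$ are drawn from the common distribution $s_{j_A}(\sigma_{j_A},Q_{j_A})=s_{j_B}(\sigma_{j_B},Q_{j_B})$ and can be taken equal almost surely. Under this joint coupling $\hat{\mathbf{r}}_A=\hat{\mathbf{r}}_B$ with probability one, hence $M_i(\hat{\mathbf{r}}_A)=M_i(\hat{\mathbf{r}}_B)$ almost surely, and taking expectations yields $\nu^{\mathcal{M}}_{i_A}(n,\Sigma,A)=\nu^{\mathcal{M}}_{i_B}(n,\Sigma,B)$.

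For (i): a unilateral deviation by agent $i_A$ replaces $s_{i_A}$ by some $s'_i$, producing a new scenario $A'=(\mathbf{Q}_A,(s'_i,s_{-i_A}))$. Given any such $s'_i$, pick a strategy $s''_i$ for $i_B$ satisfying $s''_i(\sigma_{i_B},Q_{i_B})=s'_i(\sigma_{i_A},Q_{i_A})$ under the coupling (this is always possible since strategies are unconstrained maps from signal and prior to distributions on $\mathcal{R}$), and let $B'=(\mathbf{Q}_B,(s''_i,s_{-i_B}))$. The coupling witnessing $A\approx B$ also witnesses $A'\approx B'$ by construction, so part (ii) applied to this pair gives $\nu^{\mathcal{M}}_{i_A}(A')=\nu^{\mathcal{M}}_{i_B}(B')$. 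Consequently the inequality $\nu^{\mathcal{M}}_{i_A}(A)\ge \nu^{\mathcal{M}}_{i_A}(A')$ defining the equilibrium condition at agent $i$ in $A$ transfers verbatim to $B$; running this for every $i$ gives the equivalence of being an equilibrium, and the correspondence $s'_i\leftrightarrow s''_i$ is surjective in each direction so strict equilibrium also transfers.

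The one subtlety I expect to have to address carefully is that Definition~\ref{indistinguishable}, as written, only demands matching per-$j$ conditional marginals $\Pr(\hat{r}_{j_A}=\hat{r}\mid\sigma_{i_A},\mathbf{Q}_A,\mathbf{s}_A)$, whereas the payment $M_i$ in general depends on the joint report profile, not just its marginals. I would handle this by working directly with the coupling rather than with the marginal statement: once the full signal profile is coupled and each agent's independent randomization is coupled on top, the joint conditional distribution of $\hat{\mathbf{r}}_{-i}$ given $\sigma_i$ is automatically matched, and the marginal condition in the definition is subsumed by this stronger joint matching. Everything else is direct bookkeeping.
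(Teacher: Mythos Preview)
Your argument is correct and matches the paper's own treatment: both use the coupling of $\sigma_A$ and $\sigma_B$ together with the equality $s_{i_A}(\sigma_{i_A},Q_{i_A})=s_{i_B}(\sigma_{i_B},Q_{i_B})$ to conclude that report profiles (hence payments) coincide, and then transfer best responses by mirroring deviations across the coupling. Your explicit handling of the marginals-versus-joint issue and the construction of the matching deviation $s''_i$ make rigorous what the paper only sketches informally, but the underlying route is the same.
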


At a high level, (1) is true since any reported profile distribution that agent $i_A$ can deviate to, agent $i_B$ can deviate to the same reported profile distribution as well and obtain the same expected payment as agent $i_A$.

Formally, we will prove the $\Rightarrow$ direction in (1) by contradiction. The proof of the other direction will be similar. Consider the coupling for $\sigma_A,\sigma_B$ mentioned in the definition of indistinguishable scenarios. For the sake of contradiction, assume there exists $i$ and $\sigma_{i_B}$ such that $\hat{r}'\neq s_{i_B}(\sigma_{i_B},Q_{i_A})$ is a best response for agent $i_B$. Since agent $i_A$ has the same belief about the world as agent $i_B$ and $s_{i_A}(\sigma_{i_A},Q_{i_A})=s_{i_B}(\sigma_{i_B},Q_{i_B})$, $\hat{r}'\neq s_{i_A}(\sigma_{i_A},Q_{i_A})$ is a best response to agent $i_A$ as well, which is a contradiction to the fact that $\mathbf{s}_A$ is a strictly equilibrium for prior $Q_{i_A}$. 

To gain intuition about (2), consider the coupling again.  For any $i$, agent $i_A$ reports the same thing and has the same belief for the world as agent $i_B$, which implies the expected payoff of agent $i_A$ is the same as agent $i_B$. (2) follows.

Now we are ready to prove our impossibility result:
\begin{proof}[of Proposition~\ref{prop:impossibility}]
	We prove part (1) and part (2) separately. 
	
	\paragraph{Proof of Part (1)} 
	
	Let $A:=(\mathbf{Q},\mathbf{s}),B:=(\bm{\pi}^{-1}(\mathbf{Q}),\bm{\pi}(\mathbf{s}))$. We will show that for any strategy profile $\mathbf{s}$ and any prior $Q$, $A\approx B$. Based on our above observations, part (1) immediately follows from that fact. 
	
	To prove $(Q,\mathbf{s})\approx (\bm{\pi}^{-1}Q,\bm{\pi}(\mathbf{s}))$, for every $i$, we can couple $(\sigma_1,\sigma_2,...,\sigma_n)$ with
	
	$(\pi_{1}^{-1}(\sigma_1),\pi_{2}^{-1}(\sigma_2),..,\pi_{n}^{-1}(\sigma_n))$ where $(\sigma_1,\sigma_2,...,\sigma_n)$ is drawn from $Q_{i}$. It is a legal coupling since $$ {\bm{\pi}^{-1}(Q_{i})}(\pi_{1}^{-1}(\sigma_1),\pi_{2}^{-1}(\sigma_2),..,\pi_{n}^{-1}(\sigma_n))={Q_{i}}(\sigma_1,\sigma_2,...,\sigma_n)$$ according to the definition of $\bm{\pi}^{-1}(Q)$. 
	
	Now we show this coupling satisfies the condition in Definition~\ref{indistinguishable}. First note that\\ $\bm{\pi}(s_i)(\bm{\pi}^{-1}(\sigma_i),\bm{\pi}^{-1}(Q))=s_i(\sigma_i,Q)$. Now we begin to calculate $Pr(\hat{r}_{j_B}=\hat{r}|\sigma_{i_B},\mathbf{Q}_{B},\mathbf{s}_B)$
	\begin{align}
	\label{im0}
	Pr(\hat{r}_{j_B}=\hat{r}|\sigma_{i_B},\mathbf{Q}_{B},\mathbf{s}_B)=&Pr(\hat{r}_{j_B}=\hat{r}|\pi_i^{-1}(\sigma_{i_A}),\bm{\pi}^{-1} (Q_{j_A}), \bm{\pi}(\mathbf{s}_A))\\ \label{im1}
	=&\sum_{\sigma'}{\bm{\pi}^{-1} (Q_{i_A})}(\sigma'|\pi_i^{-1}(\sigma_{i_A}))Pr({\bm{\pi}(s_{j_A})(\sigma',\bm{\pi}^{-1} (Q_{j_A}))}=\hat{r})\\ \label{im2}
	=&\sum_{\sigma'}{\bm{\pi}^{-1} (Q_{i_A})}(\sigma'|\pi_i^{-1}(\sigma_{i_A}))Pr(s_{j_A}(\bm{\pi} (\sigma'),\bm{\pi} \bm{\pi}^{-1} (Q_{j_A}))=\hat{r})\\ \label{im3}
	= &\sum_{\sigma'}{Q_{i_A}}(\pi_j(\sigma')|\sigma_{i_A})Pr( s_{j_A}(\bm{\pi} (\sigma'), Q_{j_A})=\hat{r})\\ \label{im4}
	= & \sum_{\sigma''}{Q_{i_A}}(\sigma''|\sigma_{i_A})Pr(s_{j_A}(\sigma'', Q_{i_A})=\hat{r})\\ \label{im5}
	=& Pr(\hat{r}_{j_A}=\hat{r}|\sigma_{i_A},\mathbf{Q}_{A},s_A)
	\end{align}
	
	From (\ref{im0}) to (\ref{im1}): To calculate the probability that agent $j_B$ has reported $\hat{r}$, we should sum over all possible private signals agent $j_B$ has received and calculate the probability agent $j_B$ reported $\hat{r}$ conditioning on he received private signal $\sigma'$, which is determined by agent $j_B$'s strategy $\bm{\pi}(s_{j_A})$. 
	
	By abusing notation a little bit, we can write ${\bm{\pi}(s_{j_A})(\sigma',\bm{\pi}^{-1} Q_{j_A})}$ as a random variable (it is actually a distribution) with $Pr({\bm{\pi}(s_{j_A})(\sigma',\bm{\pi}^{-1} Q)}=\hat{r})={\bm{\pi}(s_{j_A})(\sigma',\bm{\pi}^{-1} Q)}(\hat{r})$. According to above explanation, (\ref{im1}) follows. 
	
	(\ref{im2}) follows from the definition of permuted strategy.
	
	(\ref{im3}) follows from the definition of permuted prior. 
	
	By replacing $\pi_j(\sigma')$ by $\sigma''$, (\ref{im4}) follows.

	We finished the proof $A\approx B$, as previously argued, result (1) follows.  
	
	\vspace{5pt}
	
	\paragraph{Proof for Part (2)}

	We will prove the second part by contradiction:
	
	Fix permutation strategy profile $\bm{\pi}$. First notice that there exists a positive integer $O_d$ such that $\bm{\pi}^{O_d}=I$ where $I$ is the identity and agents play $I$ means they tell the truth.
	
	Given any strategy profile $s$, for the sake of contradiction, we assume that there exists a mechanism $\mathcal{M}$ with unknown prior profile such that $\nu^{\mathcal{M}}_{i_A}(n,\Sigma,\mathbf{Q},\mathbf{s})>\nu^{\mathcal{M}}_{i_A}(n,\Sigma,\mathbf{Q},\bm{\pi}(\mathbf{s}))$ for any prior $Q$. For positive integer $k\in\{0,1,...,O_d\}$, we construct three scenarios:
	
	$$A_k:=(\bm{\pi}^k(\mathbf{Q}),\mathbf{s}),\ A_{k+1}:= (\bm{\pi}^{k+1}(\mathbf{Q}),\mathbf{s}),\ B_{k}:=(\bm{\pi}^k(\mathbf{Q}),\ \bm{\pi}(\mathbf{s}))$$
	
	and show for any $k$, 
	
	(I)$\nu^{\mathcal{M}}_{i_A}(n,\Sigma,A_{k})>\nu^{\mathcal{M}}_{i_A}(n,\Sigma,B_{k})$, 
	
	(II) $\nu^{\mathcal{M}}_{i_A}(n,\Sigma,A_{k+1})=\nu^{\mathcal{M}}_{i_A}(n,\Sigma,B_{k})$. 
	
	Combining (I), (II) and the fact $A_0=A_{O_d}$, we have $$\nu^{\mathcal{M}}_{i_A}(n,\Sigma,A_{0})>\nu^{\mathcal{M}}_{i_A}(n,\Sigma,A_{1})>...\nu^{\mathcal{M}}_{i_A}(n,\Sigma,A_{O_d})=\nu^{\mathcal{M}}_{i_A}(n,\Sigma,A_{0})$$ which is a contradiction. 
	
	Now it is only left to show (I) and (II). Based on our assumption $$\nu^{\mathcal{M}}_{i_A}(n,\Sigma,\mathbf{Q},\mathbf{s})>\nu^{\mathcal{M}}_{i_A}(n,\Sigma,\mathbf{Q},\bm{\pi}(\mathbf{s}))$$ for any prior $\mathbf{Q}$, we have (I). By the same proof we have in part (1), we have $A_{k+1}\approx B_{k}$, which implies (II) according to our above observations. 
	
	\vspace{5pt}
	When the mechanism knows the prior is symmetric, the above proof is still valid if we replace the permutation list $\bm{\pi}$ by symmetric permutation list $(\pi,\pi,...,\pi)$. 
\end{proof}


    
\bibliographystyle{plainnat}
\bibliography{ref}

\begin{thebibliography}{32}
\providecommand{\natexlab}[1]{#1}
\providecommand{\url}[1]{\texttt{#1}}
\expandafter\ifx\csname urlstyle\endcsname\relax
  \providecommand{\doi}[1]{doi: #1}\else
  \providecommand{\doi}{doi: \begingroup \urlstyle{rm}\Url}\fi

\bibitem[Agarwal et~al.(2017)Agarwal, Mandal, Parkes, and
  Shah]{Agarwal:2017:PPH:3033274.3085127}
Arpit Agarwal, Debmalya Mandal, David~C. Parkes, and Nisarg Shah.
\newblock Peer prediction with heterogeneous users.
\newblock In \emph{Proceedings of the 2017 ACM Conference on Economics and
  Computation}, EC '17, pages 81--98, New York, NY, USA, 2017. ACM.
\newblock ISBN 978-1-4503-4527-9.
\newblock \doi{10.1145/3033274.3085127}.
\newblock URL \url{http://doi.acm.org/10.1145/3033274.3085127}.

\bibitem[Ali and Silvey(1966)]{ali1966general}
Syed~Mumtaz Ali and Samuel~D Silvey.
\newblock A general class of coefficients of divergence of one distribution
  from another.
\newblock \emph{Journal of the Royal Statistical Society. Series B
  (Methodological)}, pages 131--142, 1966.

\bibitem[Amari and Cichocki(2010)]{amari2010information}
S-I Amari and A~Cichocki.
\newblock Information geometry of divergence functions.
\newblock \emph{Bulletin of the Polish Academy of Sciences: Technical
  Sciences}, 58\penalty0 (1):\penalty0 183--195, 2010.

\bibitem[Bregman(1967)]{bregman1967relaxation}
Lev~M Bregman.
\newblock The relaxation method of finding the common point of convex sets and
  its application to the solution of problems in convex programming.
\newblock \emph{USSR computational mathematics and mathematical physics},
  7\penalty0 (3):\penalty0 200--217, 1967.

\bibitem[Cover and Thomas(2006)]{cover2006elements}
Thomas~M Cover and Joy~A Thomas.
\newblock Elements of information theory 2nd edition.
\newblock 2006.

\bibitem[Csisz{\'a}r et~al.(2004)Csisz{\'a}r, Shields,
  et~al.]{csiszar2004information}
Imre Csisz{\'a}r, Paul~C Shields, et~al.
\newblock Information theory and statistics: A tutorial.
\newblock \emph{Foundations and Trends{\textregistered} in Communications and
  Information Theory}, 1\penalty0 (4):\penalty0 417--528, 2004.

\bibitem[Dasgupta and Ghosh(2013)]{dasgupta2013crowdsourced}
Anirban Dasgupta and Arpita Ghosh.
\newblock Crowdsourced judgement elicitation with endogenous proficiency.
\newblock In \emph{Proceedings of the 22nd international conference on World
  Wide Web}, pages 319--330. International World Wide Web Conferences Steering
  Committee, 2013.

\bibitem[Faltings et~al.(2014)Faltings, Jurca, Pu, and
  Tran]{faltings2014incentives}
Boi Faltings, Radu Jurca, Pearl Pu, and Bao~Duy Tran.
\newblock Incentives to counter bias in human computation.
\newblock In \emph{Second AAAI Conference on Human Computation and
  Crowdsourcing}, 2014.

\bibitem[Frongillo and Witkowski(2016)]{frongillo2016geometric}
Rafael~M Frongillo and Jens Witkowski.
\newblock A geometric method to construct minimal peer prediction mechanisms.
\newblock In \emph{AAAI}, pages 502--508, 2016.

\bibitem[Gneiting and Raftery(2007)]{gneiting2007strictly}
Tilmann Gneiting and Adrian~E Raftery.
\newblock Strictly proper scoring rules, prediction, and estimation.
\newblock \emph{Journal of the American Statistical Association}, 102\penalty0
  (477):\penalty0 359--378, 2007.

\bibitem[Henderson and Searle(1981)]{henderson1981vec}
Harold~V Henderson and Shayle~R Searle.
\newblock The vec-permutation matrix, the vec operator and kronecker products:
  A review.
\newblock \emph{Linear and multilinear algebra}, 9\penalty0 (4):\penalty0
  271--288, 1981.

\bibitem[Jurca and Faltings(2007)]{jurca2007collusion}
Radu Jurca and Boi Faltings.
\newblock Collusion-resistant, incentive-compatible feedback payments.
\newblock In \emph{Proceedings of the 8th ACM conference on Electronic
  commerce}, pages 200--209. ACM, 2007.

\bibitem[Jurca and Faltings(2009)]{jurcafaltings09}
Radu Jurca and Boi Faltings.
\newblock Mechanisms for making crowds truthful.
\newblock \emph{J. Artif. Int. Res.}, 34\penalty0 (1), March 2009.

\bibitem[Kamble et~al.(2015)Kamble, Shah, Marn, Parekh, and
  Ramachandran]{kamble2015truth}
Vijay Kamble, Nihar Shah, David Marn, Abhay Parekh, and Kannan Ramachandran.
\newblock Truth serums for massively crowdsourced evaluation tasks.
\newblock \emph{arXiv preprint arXiv:1507.07045}, 2015.

\bibitem[{Kong} and {Schoenebeck}(2016)]{2016arXiv160501021K}
Y.~{Kong} and G.~{Schoenebeck}.
\newblock {A Framework For Designing Information Elicitation Mechanisms That
  Reward Truth-telling}.
\newblock \emph{ArXiv e-prints}, May 2016.

\bibitem[{Kong} et~al.(2016){Kong}, {Schoenebeck}, and
  {Ligett}]{2016arXiv160307319K}
Y.~{Kong}, G.~{Schoenebeck}, and K.~{Ligett}.
\newblock {Putting Peer Prediction Under the Micro(economic)scope and Making
  Truth-telling Focal}.
\newblock \emph{ArXiv e-prints}, March 2016.

\bibitem[Kong and Schoenebeck(2016)]{kong2016equilibrium}
Yuqing Kong and Grant Schoenebeck.
\newblock Equilibrium selection in information elicitation without verification
  via information monotonicity.
\newblock \emph{arXiv preprint arXiv:1603.07751}, 2016.

\bibitem[Liese and Vajda(2006)]{liese2006divergences}
Friedrich Liese and Igor Vajda.
\newblock On divergences and informations in statistics and information theory.
\newblock \emph{IEEE Transactions on Information Theory}, 52\penalty0
  (10):\penalty0 4394--4412, 2006.

\bibitem[Liu and Chen(2017)]{Liu:2017:MAP:3033274.3085126}
Yang Liu and Yiling Chen.
\newblock Machine-learning aided peer prediction.
\newblock In \emph{Proceedings of the 2017 ACM Conference on Economics and
  Computation}, EC '17, pages 63--80, New York, NY, USA, 2017. ACM.
\newblock ISBN 978-1-4503-4527-9.
\newblock \doi{10.1145/3033274.3085126}.
\newblock URL \url{http://doi.acm.org/10.1145/3033274.3085126}.

\bibitem[Mandal et~al.(2016)Mandal, Leifer, Parkes, Pickard, and
  Shnayder]{mandal2016peer}
Debmalya Mandal, Matthew Leifer, David~C Parkes, Galen Pickard, and Victor
  Shnayder.
\newblock Peer prediction with heterogeneous tasks.
\newblock \emph{arXiv preprint arXiv:1612.00928}, 2016.

\bibitem[Miller et~al.(2005)Miller, Resnick, and Zeckhauser]{MRZ05}
N.~Miller, P.~Resnick, and R.~Zeckhauser.
\newblock Eliciting informative feedback: The peer-prediction method.
\newblock \emph{Management Science}, pages 1359--1373, 2005.

\bibitem[Prelec(2004)]{prelec2004bayesian}
D.~Prelec.
\newblock A {B}ayesian {T}ruth {S}erum for subjective data.
\newblock \emph{Science}, 306\penalty0 (5695):\penalty0 462--466, 2004.

\bibitem[Radanovic and Faltings(2013)]{radanovic2013robust}
Goran Radanovic and Boi Faltings.
\newblock A robust bayesian truth serum for non-binary signals.
\newblock In \emph{Proceedings of the 27th AAAI Conference on Artificial
  Intelligence, AAAI 2013}, number EPFL-CONF-197486, pages 833--839, 2013.

\bibitem[Radanovic and Faltings(2014)]{radanovic2014incentives}
Goran Radanovic and Boi Faltings.
\newblock Incentives for truthful information elicitation of continuous
  signals.
\newblock In \emph{Twenty-Eighth AAAI Conference on Artificial Intelligence},
  2014.

\bibitem[Radanovic and Faltings(2015)]{radanovic2015incentive}
Goran Radanovic and Boi Faltings.
\newblock Incentive schemes for participatory sensing.
\newblock In \emph{Proceedings of the 2015 International Conference on
  Autonomous Agents and Multiagent Systems}, pages 1081--1089. International
  Foundation for Autonomous Agents and Multiagent Systems, 2015.

\bibitem[Riley(2014)]{riley2014minimum}
Blake Riley.
\newblock Minimum truth serums with optional predictions.
\newblock In \emph{Proceedings of the 4th Workshop on Social Computing and User
  Generated Content (SC14)}, 2014.

\bibitem[Roth(1934)]{roth1934direct}
William~E Roth.
\newblock On direct product matrices.
\newblock \emph{Bulletin of the American Mathematical Society}, 40\penalty0
  (6):\penalty0 461--468, 1934.

\bibitem[{Shnayder} et~al.(2016){Shnayder}, {Agarwal}, {Frongillo}, and
  {Parkes}]{2016arXiv160303151S}
V.~{Shnayder}, A.~{Agarwal}, R.~{Frongillo}, and D.~C. {Parkes}.
\newblock {Informed Truthfulness in Multi-Task Peer Prediction}.
\newblock \emph{ArXiv e-prints}, March 2016.

\bibitem[Winkler(1969)]{winkler1969scoring}
Robert~L Winkler.
\newblock Scoring rules and the evaluation of probability assessors.
\newblock \emph{Journal of the American Statistical Association}, 64\penalty0
  (327):\penalty0 1073--1078, 1969.

\bibitem[Witkowski and Parkes(2012)]{witkowski2012robust}
J.~Witkowski and D.~Parkes.
\newblock A robust {B}ayesian {T}ruth {S}erum for small populations.
\newblock In \emph{Proceedings of the 26th AAAI Conference on Artificial
  Intelligence (AAAI 2012)}, 2012.

\bibitem[Witkowski(2014)]{witkowski:2014}
Jens Witkowski.
\newblock \emph{{Robust Peer Prediction Mechanisms}}.
\newblock PhD thesis, Department of Computer Science,
  Albert-Ludwigs-Universit\"{a}t Freiburg, May 2014.

\bibitem[Zhang and Chen(2014)]{zhang2014elicitability}
Peter Zhang and Yiling Chen.
\newblock Elicitability and knowledge-free elicitation with peer prediction.
\newblock In \emph{Proceedings of the 2014 international conference on
  Autonomous agents and multi-agent systems}, pages 245--252. International
  Foundation for Autonomous Agents and Multiagent Systems, 2014.

\end{thebibliography}

\appendix

\section{Independent Work Analysis}\label{sec:iw}
 
The analysis in Section~\ref{sec:reprovemd} is not restricted to \citet{dasgupta2013crowdsourced}. Replacing the $R_{i,j}^k$ defined in $M_d$ by the $R_{i,j}^k$ defined in the non-binary extension of $M_d$ in the independent work of \citet{2016arXiv160303151S} will not change the analysis. Thus, \citet{2016arXiv160303151S} is also a special case of $f$-mutual information mechanism---$\mathcal{M}_{MI^{tvd}}$ in the non-binary settings.


\paragraph{The Correlated Agreement (CA) Mechanism \cite{2016arXiv160303151S}} With Assumption~\ref{assume:pc}, the non-binary extension of $M_d$---the CA mechanism---can be reinterpreted as $M_d$ by defining $$R_{i,j}^k:=\mathbbm{1}(\hat{\sigma}_i^k=\hat{\sigma}_j^k)-\mathbbm{1}(\hat{\sigma}_i^{\ell_A}=\hat{\sigma}_j^{\ell_B})$$ where $\ell_A$ is picked from subset $A$ uniformly at random and $\ell_B$ is picked from subset $B$ uniformly at random. 

With this new definition of $R_{i,j}^k$, Claim~\ref{claim:1} is still valid since the proof of Claim~\ref{claim:1} that uses the definition of $R_{i,j}^k$---  \begin{align*}
      &\sum_{\sigma} \left(\Pr[{\Psi}_i=\sigma,{\Psi}_j=\sigma]-\Pr[{\Psi}_i=\sigma]\Pr[{\Psi}_j=\sigma]\right)\\  \tag{Definition of $R_{i,j}^k$ in $M_d$}
      &=\E[R_{i,j}^k]  
 \end{align*}---is still valid for this new definition of $R_{i,j}^k$. Therefore, Theorem~\ref{md} is still valid when we replace $M_d$ by the CA mechanism which means we can also use our information theoretic framework to analyze \citet{2016arXiv160303151S}.

\section{Proof of the Truthfulness of BTS}
\begin{proof}[Proof of Theorem~\ref{bts} (i) \cite{prelec2004bayesian}]

When everyone else tells the truth, for every $i$, agent $i$ will report truthful $\mathbf{p}_i$ to maximize her expected prediction score based on the properties of log scoring rule. Thus, $\hat{\mathbf{p}}_i=\mathbf{p}_i$ for every $i$. 

For the expected information score, we want to calculate the optimal $\sigma$ agent $i$ should report to maximize her expected information score when everyone else tells the truth, given that she receives $\Psi_i=\sigma_i)$. 

\begin{align*}
    &\argmax_{\sigma}\E_{\Psi_j,W|\Psi_i=\sigma_i}\log(\frac{Pr[{\Psi}_i=\sigma |W]}{Pr[{\Psi}_i=\sigma|\Psi_j]})\\ \tag{Conditional independence}
    &=\argmax_{\sigma}\E_{\Psi_j,W|\Psi_i=\sigma_i} \log(\frac{Pr[{\Psi}_i=\sigma|W,\Psi_j]}{Pr[{\Psi}_i=\sigma|\Psi_j]}) \\
     &=\argmax_{\sigma}\E_{\Psi_j,W|\Psi_i=\sigma_i} \log(\frac{Pr[{\Psi}_i=\sigma,W|\Psi_j]}{Pr[{\Psi}_i=\sigma|\Psi_j]Pr[W|\Psi_j]}) \\ 
    &=\argmax_{\sigma}\E_{\Psi_j,W|\Psi_i=\sigma_i} \log(\frac{Pr[W|{\Psi}_i=\sigma,\Psi_j]}{Pr[W|\Psi_j]}) \\ \tag{we can add $\log \Pr[W|\Psi_j]$ which is independent of $\sigma$}
    &=\argmax_{\sigma}\E_{\Psi_j}L(\Pr[\bm{W}|\Psi_i=\sigma_i,\Psi_j],\Pr[\bm{W}|\Psi_i=\sigma,\Psi_j]) \\ \tag{$\argmax_{\mathbf{q}}L(\mathbf{p},\mathbf{q})=\mathbf{p}$}
    &= \sigma_i
\end{align*}

Therefore, in BTS, for every $i$, agent $i$'s best response is $(\sigma_i,\mathbf{p}_i)$ when everyone else tells the truth. BTS is truthful. 

\end{proof}


\section{Proof of Information Monotonicity of $f$-divergence}
{
\renewcommand{\thetheorem}{\ref{lem:im}}
\begin{fact}[Information Monotonicity (\cite{ali1966general,liese2006divergences,amari2010information})]
For any strictly convex function $f$,  $f$-divergence $D_f(\mathbf{p},\mathbf{q})$ 
satisfies information monotonicity so that for any transition matrix $\theta \in \mathbbm{R}^{|\Sigma| \times |\Sigma|}$, $D_f(\mathbf{p},\mathbf{q})\geq D_f(\theta^T \mathbf{p},\theta^T \mathbf{q})$. 

Moreover, the inequality is strict if and only if there exists $\sigma, \sigma',\sigma''$ such that $\frac{\mathbf{p}(\sigma'')}{\mathbf{p}(\sigma')}\neq \frac{\mathbf{q}(\sigma'')}{\mathbf{q}(\sigma')}$ and $\theta_{\sigma',\sigma} \mathbf{p}(\sigma')>0$, $\theta_{\sigma'',\sigma} \mathbf{p}(\sigma'')>0$.
\end{fact}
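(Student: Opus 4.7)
The plan is a one-shot application of Jensen's inequality, with the main work being in rewriting the ratio $(\theta^T \mathbf{q})(\sigma)/(\theta^T \mathbf{p})(\sigma)$ as a convex combination of the ratios $\mathbf{q}(\sigma')/\mathbf{p}(\sigma')$. Concretely, for each column $\sigma$ I would define the weights
\[
w_{\sigma'\mid \sigma} \;=\; \frac{\theta_{\sigma',\sigma}\,\mathbf{p}(\sigma')}{(\theta^T \mathbf{p})(\sigma)},
\]
which are non-negative and sum to $1$ in $\sigma'$. A direct calculation then gives
\[
\frac{(\theta^T \mathbf{q})(\sigma)}{(\theta^T \mathbf{p})(\sigma)} \;=\; \sum_{\sigma'} w_{\sigma'\mid\sigma}\,\frac{\mathbf{q}(\sigma')}{\mathbf{p}(\sigma')},
\]
so by convexity of $f$ we get $f\!\left(\tfrac{(\theta^T \mathbf{q})(\sigma)}{(\theta^T \mathbf{p})(\sigma)}\right) \le \sum_{\sigma'} w_{\sigma'\mid\sigma}\, f\!\left(\tfrac{\mathbf{q}(\sigma')}{\mathbf{p}(\sigma')}\right)$.

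Next I would multiply through by $(\theta^T \mathbf{p})(\sigma)$, sum over $\sigma$, and swap the order of summation. The weight denominator cancels, and the remaining factor $\sum_\sigma \theta_{\sigma',\sigma} = 1$ (row-stochasticity of $\theta$) collapses the bound to $\sum_{\sigma'} \mathbf{p}(\sigma')\, f(\mathbf{q}(\sigma')/\mathbf{p}(\sigma')) = D_f(\mathbf{p},\mathbf{q})$. This yields the non-strict direction.

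For the strictness clause, I would invoke the standard fact that Jensen's inequality for a strictly convex $f$ is strict at $\sigma$ unless the values $\mathbf{q}(\sigma')/\mathbf{p}(\sigma')$ are constant across all $\sigma'$ in the support of the weight distribution $w_{\cdot\mid\sigma}$. Since the final bound is a $\sigma$-sum of such Jensen inequalities, the overall inequality is strict iff there exists some column $\sigma$ and two indices $\sigma',\sigma''$ with $\theta_{\sigma',\sigma}\mathbf{p}(\sigma'),\, \theta_{\sigma'',\sigma}\mathbf{p}(\sigma'')>0$ and $\mathbf{q}(\sigma')/\mathbf{p}(\sigma') \ne \mathbf{q}(\sigma'')/\mathbf{p}(\sigma'')$, which is exactly the stated condition.

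The only mildly delicate point is handling the degenerate columns where $(\theta^T \mathbf{p})(\sigma)=0$: such terms contribute $0$ to $D_f(\theta^T \mathbf{p},\theta^T \mathbf{q})$ under the usual $f$-divergence convention $0\cdot f(0/0):=0$, and they do not admit any $\sigma'$ with $\theta_{\sigma',\sigma}\mathbf{p}(\sigma')>0$, so they drop out of both the bound and the strictness condition; I would simply exclude them at the outset. Apart from this bookkeeping I do not expect any serious obstacle.
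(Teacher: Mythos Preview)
Your proposal is correct and follows essentially the same route as the paper's proof: rewrite the post-transition ratio as the convex combination $\sum_{\sigma'} w_{\sigma'\mid\sigma}\,\mathbf{q}(\sigma')/\mathbf{p}(\sigma')$ with $w_{\sigma'\mid\sigma}\propto \theta_{\sigma',\sigma}\mathbf{p}(\sigma')$, apply Jensen, and collapse via row-stochasticity; the strictness analysis via the equality case of Jensen is likewise identical. Your explicit handling of columns with $(\theta^T\mathbf{p})(\sigma)=0$ is a small improvement in rigor over the paper's write-up, which tacitly assumes this issue away.
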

\addtocounter{theorem}{-1}
}

If the strictness condition does not satisfied, we can see $\theta^T \mathbf{p}$ and $\theta^T \mathbf{q}$ are $\mathbf{p}$ and $\mathbf{q}$'s sufficient statistic which means the transition $\theta$ does not lose any information, thus, the equality holds.

\begin{proof}
The proof follows from algebraic manipulation and one application of convexity.

\begin{align}
D_f(\theta^T \mathbf{p},\theta^T \mathbf{q})=&\sum_{\sigma} (\theta^T \mathbf{p})(\sigma) f\left(\frac{(\theta^T \mathbf{q})(\sigma)}{(\theta^T\mathbf{p})(\sigma)}\right)\\
=& \sum_{\sigma} \theta^T_{\sigma,\cdot} \mathbf{p} f\left(\frac{\theta^T_{\sigma,\cdot} \mathbf{q}}{\theta^T_{\sigma,\cdot}\mathbf{p}}\right)\\
=& \sum_{\sigma} \theta^T_{\sigma,\cdot} \mathbf{p} f\left(\frac{1}{\theta^T_{\sigma,\cdot}\mathbf{p}}\sum_{\sigma'} \theta^T_{\sigma,\sigma'}\mathbf{p}(\sigma') \frac{\mathbf{q}(\sigma')}{\mathbf{p}(\sigma')}\right)\\ \label{eq_im}
\leq & \sum_{\sigma} \theta^T_{\sigma,\cdot} \mathbf{p} \frac{1}{\theta^T_{\sigma,\cdot}\mathbf{p}}\sum_{\sigma'} \theta^T_{\sigma,\sigma'}\mathbf{p}(\sigma') f\left( \frac{\mathbf{q}(\sigma')}{\mathbf{p}(\sigma')}\right)\\
= & \sum_{\sigma} \mathbf{p}(\sigma) f\left(\frac{\mathbf{q}(\sigma)}{\mathbf{p}(\sigma)}\right)= D_f(\mathbf{p},\mathbf{q})
\end{align}

The second equality holds since $(\theta^T\mathbf{p})(\sigma)$ is dot product of the $\sigma^{th}$ row of $\theta^T$ and $\mathbf{p}$.

The third equality holds since $\sum_{\sigma'} \theta^T_{\sigma,\sigma'}\mathbf{p}(\sigma') \frac{\mathbf{q}(\sigma')}{\mathbf{p}(\sigma')}=\theta^T_{\sigma,\cdot}\mathbf{q}$. 

The fourth inequality follows from the convexity of $f(\cdot)$. 

The last equality holds since $\sum_{\sigma}\theta^T_{\sigma,\sigma'}=1$. 

We now examine under what conditions the inequality in Equation~\ref{eq_im} is strict. 
Note that for any strictly convex function $g$, if $\forall u , \lambda_u>0$, $g(\sum_u \lambda_u x_u)=\sum_u \lambda_u g(x_u)$ if and only if there exists $x$ such that $\forall u , x_u=x$. By this property, the inequality is strict if and only if there exists $\sigma, \sigma',\sigma''$ such that $\frac{\mathbf{q}(\sigma')}{\mathbf{p}(\sigma')}\neq \frac{\mathbf{q}(\sigma'')}{\mathbf{p}(\sigma'')}$ and $\theta^T_{\sigma,\sigma'} \mathbf{p}(\sigma')>0$, $\theta^T_{\sigma,\sigma''} \mathbf{p}(\sigma'')>0$.

\end{proof}

\end{document}